\newtheorem{definition}{Definition} 
\newtheorem{example}{Example}       
\newtheorem{remark}{Remark}         
\newtheorem{notation}[theorem]{Notation} 
\newcommand{\If}{\leftarrow}
\newcommand{\Iff}{\leftrightarrow}
\newcommand{\vars}{${\it{vars}}$}
\newcommand{\true}{\mathit{true}}
\newcommand{\gel}{\geq_{\mathit{lex}}}
\newcommand{\gl}{>_{\mathit{lex}}}
\newcommand{\bif}{\;{\mathit{if}}\;}
\newcommand{\bth}{\;{\mathit{then}}\;}
\newcommand{\bel}{\;{\mathit{else}}\;}
\newcommand{\tA}{{\widetilde{A}}}
\newcommand{\tH}{{\widetilde{H}}}
\newcommand{\tK}{{\widetilde{K}}}
\newcommand{\tL}{{\widetilde{L}}}
\newcommand{\tM}{{\widetilde{M}}}
\renewcommand{\mathit}{\displaystyle}  
\newcommand{\Mathit}[1]{\mbox{\it #1}}  
\newcommand{\DefsPrk}{\Mathit{Defs}$_{k}^{\,\prime}$}
\newcommand{\Defsk}{\Mathit{Defs}$_{k}$}
\newcommand{\Defsn}{\Mathit{Defs}$_{n}$}
\newcommand{\Defsz}{\Mathit{Defs}\/$_{0}$}
\newcommand{\Defs}{\Mathit{Defs}}
\begin{document}

\thispagestyle{empty}

\long\def\comment#1{}

\title[Transformations of Logic Programs on Infinite
Lists] {Transformations of Logic Programs on\\ Infinite Lists}

\author[A. Pettorossi, M. Proietti, and V. Senni]
{ALBERTO PETTOROSSI\\
DISP, University of Rome Tor Vergata,
Via del Politecnico 1, I-00133 Rome, Italy\\
\email{pettorossi@disp.uniroma2.it} \and
MAURIZIO PROIETTI\\
IASI-CNR, Viale Manzoni 30, I-00185 Rome, Italy\\
\email{maurizio.proietti@iasi.cnr.it} \and
VALERIO SENNI\\
DISP, University of Rome Tor Vergata,
Via del Politecnico 1, I-00133 Rome, Italy\\
\email{senni@disp.uniroma2.it} }

\pagerange{\pageref{firstpage}--\pageref{lastpage}}
\volume{\textbf{nn} (n):} \jdate{month 2010} \setcounter{page}{1}
\pubyear{2010}

\maketitle

\label{firstpage}

\maketitle
\begin{abstract}
We consider an extension of logic
programs, called {\em $\omega$-programs}, that can be used to define predicates
over {\em infinite lists}. 
\mbox{$\omega$-programs} allow us to specify properties of the 
infinite behavior of
reactive systems and, in general, properties of infinite sequences of events. 
The  semantics of \mbox{$\omega$-programs} is
an extension of the perfect model semantics. 
We present variants of the familiar unfold/fold rules which can be
used for transforming $\omega$-programs. We show that these new rules are correct, that is,
their application preserves the perfect model semantics. 
Then we outline a general methodology based on program transformation
for verifying properties of $\omega$-programs.
We demonstrate the power of our transformation-based verification methodology
by proving some properties of B\"uchi automata and \mbox{$\omega$-regular} languages.

\smallskip

\noindent{\em KEYWORDS}: Program Transformation, Program Verification, Infinite Lists.
\end{abstract}

\section{Introduction
\label{sec:introduction}}The problem of specifying and verifying
properties of {\em reactive systems}, such as protocols and
concurrent systems, has received much attention over the past fifty
years or so. The main peculiarity of reactive systems is that they
perform nonterminating computations and, in order to specify and
verify the properties of these computations, various formalisms
dealing with infinite sequences of events have been proposed. Among
these we would like to mention: (i)~B\"uchi automata and other
classes of finite automata on infinite sequences~\cite{Tho90},
(ii)~$\omega$-languages~\cite{Sta97}, and (iii)~various temporal and
modal logics (see~\cite{Cl&99} for a brief overview of these
logics).

Also logic programming has been proposed as a formalism for
specifying computations over infinite structures, such as infinite
lists or infinite trees (see, for
instance,~\cite{Col82,Llo87,Si&06,MiG09}). One advantage of using
logic programming languages is that they are general purpose
languages and, together with a model-theoretic semantics, they also have
an operational semantics. Thus, logic programs over infinite
structures can be used for {\em specifying\/} infinite computations
and, in fact, providing executable specifications for them. However,
very few techniques which use logic programs over infinite
structures, have been proposed in the literature for {\em verifying\/}
properties of infinite computations. We are aware only of a recent
work presented in~\cite{Gu&07}, which is based on {\rm coinductive}
logic programming, that is, a logic programming language whose
semantics is based on greatest models.

In this paper our aim is to develop a 
methodology based on the familiar unfold/fold transformation rules~\cite{BuD77,TaS84} 
for reasoning about infinite structures and verifying properties
of programs over such structures. In order to do so,
we do not introduce a new 
programming language, but we consider a simple 
extension of logic programming on finite terms by introducing
the class of the so-called {\em $\omega$-programs}, which are 
logic programs on infinite lists. 
Similarly to the case of logic programs,
for the class of {\em locally stratified}  $\omega$-programs
we define the {\em perfect model}
semantics (see~\cite{ApB94} for a survey on negation in 
logic programming). 

We extend to
$\omega$-programs the transformation 
rules for locally stratified programs presented
in~\cite{Fi&04a,PeP00a,Ro&02,Sek91,Sek09} and, 
in particular: (i)~we introduce
an {\em instantiation\/} rule which is specific for programs on
infinite lists, (ii)~we weaken the applicability conditions for
the {\em  negative unfolding\/} rule, and 
(iii)~we consider a more powerful {\em negative
folding\/} rule (see Sections~\ref{sec:rules} and \ref{sec:corr_of_rules}
for more details).
We prove that these rules preserve
the perfect model semantics of $\omega$-programs.

Then we extend to $\omega$-programs the transformation-based
methodology for verifying properties of programs presented in~\cite{PeP00a}.
We  demonstrate the power of our verification methodology
through some examples. In particular, we prove: (i)~the non-emptiness of the
language recognized by a B\"uchi automaton,
 and (ii)~the containment between languages
denoted by $\omega$-regular expressions.

The paper is structured as follows. In
Section~\ref{sec:omega-programs} we introduce the class of 
\mbox{$\omega$-pro}{\-grams} and we define the perfect model semantics for locally stratified
\mbox{$\omega$-programs}. In Section~\ref{sec:rules} we present the transformation rules
and in Section~\ref{sec:corr_of_rules} we prove that they preserve the  semantics of $\omega$-programs. In Section~\ref{sec:verification} we
present the transformation-based verification method and we see it in action in
some examples. Finally, in Section~\ref{sec:related} we discuss related work in
the area of program transformation and program verification.


\section{Programs on Infinite Lists
\label{sec:omega-programs}}

Let us consider a first order language $\mathcal{L}_{\omega}$ given
by a set {\it{Var}} of variables, a set {\it{Fun}} of function
symbols, and a set {\it{Pred}} of predicate symbols. We assume that
{\it{Fun}} includes: (i)~a {\em finite}, {\em non-empty} set $\Sigma$ of
constants, (ii)~the constructor $\llbracket
\_|\_\rrbracket $ of the infinite lists of
elements of $\Sigma$, and (iii)~at least one constant not in 
$\Sigma$. Thus, $\llbracket s|t\rrbracket $ is an
infinite list whose head is $s\in \Sigma$ and whose tail is the
infinite list~$t$. Let $\Sigma^{\omega}$ denote the set of the
infinite lists of elements of $\Sigma$.

We assume that $\mathcal{L}_{\omega}$ is a typed
language~\cite{Llo87} with three basic types:  (i)~\texttt{fterm},
which is the type of the finite terms, (ii)~\texttt{elem}, which is
the type of the constants in $\Sigma$, and (iii)~\texttt{ilist},
which is the type of the infinite lists of $\Sigma^{\omega}$.
Every function symbol in {\it{Fun}}$\,- (\Sigma \cup \{\llbracket
\_|\_\rrbracket \})$, with arity $n\,(\geq \!0)$, has type
$(\mathtt{fterm} \!\times\!\cdots\!\times\!\mathtt{fterm})\rightarrow\mathtt{
fterm}$, where ${\mathtt{fterm}}$ occurs $n$ times to the left of
$\rightarrow$. The function symbol $\llbracket \_|\_\rrbracket $ has
type $(\mathtt{elem}\!\times\!\mathtt{ilist})\!\rightarrow\!\mathtt{ilist}$.
A predicate symbol of arity $n\,(\geq \!0)$ in {\it{Pred}} has type
of the form $\tau_1\!\times\!\cdots\!\times\!\tau_n$, where
$\tau_1,\ldots,\tau_n\in\{\mathtt{fterm},\mathtt{elem},\mathtt{ilist}\}$.
For every term  (or formula) $t$,
we denote by $\mathit{vars}(t)$ 
the set of variables occurring in~$t$. 

An {\mbox{{\em $\omega$-clause} $\gamma$}} is a formula of the form
$A\leftarrow L_1\!\wedge \ldots \wedge\! L_m$, with $m\!\geq\!0$,
where $A$ is an atom and $L_1, \ldots, L_m$ are (positive or
negative) literals, constructed as usual from symbols in the typed
language~$\mathcal{L}_{\omega}$, with the following extra
condition:~every predicate in $\gamma$ has, among its
arguments, {\em at most one} argument of type $\mathtt{ilist}$. This
condition makes it easier to prove the correctness of the positive and
negative unfolding rules
(see Section~\ref{sec:rules} for further details).
We denote by {\it true} the empty conjunction of literals.
An {\em $\omega$-program} is a set of
$\omega$-clauses. 

Let \mbox{\it{HU}} be the Herbrand universe constructed from the set
{\it{Fun}}$\,-(\Sigma \cup \{\llbracket \_|\_\rrbracket \})$ of
function symbols. 
An interpretation for our typed language
$\mathcal{L}_{\omega}$, called an
{\mbox{$\omega$-in\-ter}}\-pretation, is a function $I$ such that:
(i)~$I$ assigns to the types \texttt{fterm}, \texttt{elem},
 and \texttt{ilist}, respectively,
the sets $\mbox{\it {HU}}$, $\Sigma$, and $\Sigma^{\omega}$
(which by our assumptions are non-empty),
(ii)~$I$ assigns to the function symbol~$\llbracket \_|\_\rrbracket
$, the function $\llbracket \_|\_\rrbracket _I$ such that, for any
element $s\in \Sigma$, for any infinite list $t \in\Sigma^\omega$,
$\llbracket s|t\rrbracket _I$ is the infinite list $\llbracket
s|t\rrbracket $,
(iii)~$I$ is an Herbrand interpretation for all function symbols in
$\mbox{\it{Fun}}-(\Sigma \cup \{\llbracket \_|\_\rrbracket\})$,
and~(iv)~$I$ assigns to every $n$-ary predicate $p\in $\,{\it{Pred}}
of type \mbox{$\tau_1\!\times\!\ldots\!\times\!\tau_n$}, a relation
on~$D_1\!\times\!\cdots\!\times\!D_n$, where, for $i=1,\ldots,n$,
$D_i$ is~either $\mbox{\it {HU}}$ or $\Sigma$ or $\Sigma^\omega$, if
$\tau_i$ is~either $\mathtt{fterm}$ or $\mathtt{elem}$ or
$\mathtt{ilist}$, respectively. We say that an
$\omega$-interpretation $I$ is an {\mbox{$\omega$-{\it model}}} of
an \mbox{$\omega$-program $P$} if for every clause $\gamma\!\in\! P$
we have that $I\vDash \forall X_1\ldots\forall X_k\,\gamma$, where
$\vars(\gamma)=\{X_1,\ldots, X_k\}$.

A \emph{valuation} is a function $v\! :$ {\it{Var}} $\rightarrow $
\mbox{\it{HU}} $\, \cup\, \Sigma\, \cup\, \Sigma^{\omega}$ such
that: (i) if $X$ has type $\mathtt{fterm}$ then
$v(X)\!\in\!\mbox{\it{HU}}$, (ii)~if $X$ has type $\mathtt{elem}$
then $v(X)\!\in\!\Sigma$, and (iii)~if $X$ has type $\mathtt{ilist}$
then $v(X)\!\in\!\Sigma^{\omega}$. The valuation function $v$ can be
extended to any term~$t$, or literal $L$, or clause $\gamma$, by
making the function $v$ act on the variables occurring in $t$, or $L$, or
$\gamma$.
We extend the notion of {\em Herbrand base}~\cite{Llo87} to
$\omega$-programs by defining it to be the set $\mathcal{B}_{\omega}=
\{p(v(X_{1}),\ldots ,v(X_{n})) \mid p $ is an $n$-ary predicate 
symbol in {\it Pred} and $v$
is a valuation$\}$. Thus, any $\omega$-interpretation can be identified with a
subset of~\( \mathcal{B}_{\omega} \).

A \emph{local stratification} is a function \( \sigma  \): \(
\mathcal{B}_{\omega}\rightarrow W \), where \( W \) is the set of
countable ordinals. Given $A\in \mathcal{B}_{\omega}$, we define
$\sigma(\neg A) = \sigma(A)\!+\!1$. Given an $\omega$-clause $\gamma$ of the
form \( H\leftarrow L_{1}\wedge \ldots \wedge L_{m} \) 
and a local stratification \( \sigma \), we
say that \( \gamma \) is \emph{locally stratified} w.r.t.~\( \sigma
\) if, for \( i=1,\ldots ,m \), for every valuation $v$, \emph{\(
\sigma (v(H))\geq \sigma (v(L_i)) \)}. An $\omega$-program \( P \)
is \emph{locally stratified w.r.t.~}\( \sigma  \), or \( \sigma \)
is a \emph{local stratification for} \( P \), if every clause in \(
P \) is locally stratified w.r.t.~\( \sigma  \). An $\omega$-program
\( P \) is \emph{locally stratified} if there exists a local
stratification \( \sigma  \) such that \( P \) is \emph{locally
stratified w.r.t.~}\( \sigma  \). 

A \emph{level mapping} is a function $\ell:
\mathit{Pred}\rightarrow\mathbb{N}$. A level mapping is extended to
literals as follows: for any literal $L$ having predicate $p$, if
$L$ is a positive literal, then $\ell(L)=\ell(p)$ and, if $L$ is a
negative literal then $\ell(L)=\ell(p)+1$. An $\omega$-clause
$\gamma$ of the form \( H\leftarrow L_{1}\wedge \ldots \wedge L_{m}
\) is {\em stratified} w.r.t.~$\ell$ if, for \( i=1,\ldots ,m \),
\(\ell (H)\geq \ell(L_i)\). An $\omega$-program $P$ is
\emph{stratified} if there exists a level mapping $\ell$ such that
all clauses of $P$ are stratified w.r.t.~$\ell$~\cite{Llo87}. 
Clearly, every stratified
$\omega$-program is a locally stratified $\omega$-program.
Similarly to the case of logic programs on finite terms, for every locally
stratified $\omega$-program~$P$, we can construct a unique {\em perfect
$\omega$-model} (or {\em perfect model}, for short) denoted by
$M(P)$ (see~\cite{ApB94} for the case of logic programs on finite
terms). Now we present an example of this construction.

\begin{example}\label{ex:omega_prog}
Let: (i)~$\Sigma\!=\!\{a,b\}$ be the set of constants of type
${\texttt{elem}}$, (ii) $S$ be a variable of type
${\texttt{elem}}$, and (iii)~$X$ be a variable of type
${\texttt{ilist}}$. Let $p$~and $q$ be predicates of type
${\texttt{ilist}}$. Let us consider the
following $\omega$-program $P$:

\smallskip

\makebox[45mm][l]{$p(X) \leftarrow \neg q(X)$}
\makebox[35mm][l]{$q(\llbracket b|X\rrbracket ) \leftarrow $}
\makebox[30mm][l]{$q(\llbracket a|X\rrbracket )
\leftarrow q(X)$}

\smallskip

\noindent We have that: (i)~$p(w)$ holds iff $w$ is an infinite list of $a$'s 
and (ii)~$q(w)$ holds iff at least one $b$ occurs in $w$.
Program $P$ is  stratified w.r.t.~the level mapping $\ell$ such that
$\ell(q)\!=\!0$ and {\mbox{$\ell(p)\!=\!1$}}. 
The perfect model~$M(P)$ is constructed by
starting from the ground atoms of level 0 (i.e., those with predicate
$q$). We have that, for all~$w\in\{a,b\}^{\omega}$,
$q(w)\! \in\! M(P)$ iff  $w\!\in\! a^* b (a\!+\!b)^\omega$, that is, 
$q(w)\! \not \in \!M(P)$ iff  $w\!\in\! a^\omega$. Then, we
consider the ground atoms of level~1 (i.e., those with predicate~$p$). 
For all~$w\!\in\!\{a,b\}^{\omega}$, $p(w)\!\in \!M(P)$ iff $q(w)\!\not\in\! M(P)$.
Thus, $p(w)\! \in\! M(P)$ iff $w\!\in\! a^\omega$. 
\end{example}

\section{Transformation Rules\label{sec:rules}}

Given an $\omega$-program $P_0$, a \emph{transformation sequence} is a sequence
$P_{0},\ldots,P_{n}$, with $n\!\geq\!0$, of \mbox{$\omega$-programs} 
constructed as follows. Suppose that we have constructed a sequence
$P_{0},\ldots,P_{k}$, for $0\! \leq \! k\! \leq \! n\! -\! 1 $. Then, the next
program $P_{k+1}$ in the sequence is derived from program $P_{k}$
by applying one of the following transformation rules \mbox{R1--R7}.

First we have the \emph{definition introduction} rule which allows
us to introduce a new predicate definition.

\medskip

\noindent \textbf{R1. Definition Introduction.} Let us consider \( m
\) (\( \geq \! 1 \)) clauses of the form:
\smallskip{}

\( \delta _{1}: \) {\it{newp}}$(X_{1},\ldots ,X_{d})\leftarrow
B_{1}, \) \( \ \ \ldots , \) \ \ \( \delta _{m}: \) \(
${\it{newp}}$(X_{1},\ldots ,X_{d})\leftarrow B_{m}$

\smallskip{}

\noindent where: (i) {\it{newp}} is a predicate symbol not occurring
in \( \{P_{0},\ldots ,P_{k}\} \), (ii) \( X_{1},\ldots ,X_{d} \) are
distinct variables occurring in \( \{ B_{1},\ldots ,B_{m}\} \),
(iii)~none of the $B_i$'s is the empty conjunction of literals,
and (iv)~every predicate symbol occurring in \( \{B_{1},\ldots ,B_{m}\}
\) also occurs in \( P_{0} \). The set $\{\delta _{1}, \ldots,
\delta _{m}\}$ of clauses is said to be the {\em definition} of
{\it{newp}}.

\noindent By \emph{definition introduction} from program \(
P_{k} \) we derive the new program \( P_{k+1}\!=\!P_{k}\cup \{\delta
_{1},\ldots ,\delta _{m}\} \). For \( n\!\geq \!0 \),
\Defsn~denotes the set of clauses introduced by the definition rule
during the transformation sequence \( P_{0},\ldots ,P_{n} \). In
particular, \Defsz$\,=\!\{\}$.

\medskip{}

\noindent In the following {\em instantiation} rule we assume that the set
 of constants of type~\texttt{elem} in the language
$\mathcal{L}_{\omega}$ is the finite set
$\Sigma\!=\!\{s_1,\ldots,s_h\}$.

\smallskip

\noindent \textbf{R2. Instantiation.}\label{rule:inst} Let~$\gamma$:
$H\leftarrow B$ be a clause in program~$P_k$ and $X$ be a variable
of type~\texttt{ilist} occurring in~$\gamma$. By
\emph{instantiation} of~$X$ in~$\gamma$, we get the clauses:

\smallskip

$\gamma_{1}$: $(H\leftarrow B)\{X/\llbracket s_1|X\rrbracket \}$,
~~\ldots,~~ $\gamma_{h}$: $(H\leftarrow B)\{X/\llbracket
s_h|X\rrbracket \}$

\smallskip

\noindent and we say that clauses $\gamma_1,\ldots,\gamma_h$ are
{\it{derived from}} $\gamma$. From~$P_{k}$ we derive the new
program~$P_{k+1}=(P_{k}-\{\gamma\})\cup\{\gamma_{1},\ldots,\gamma_{h}\}$.

\medskip

The \emph{unfolding} rule consists in replacing an atom $A$
occurring in the body of a clause by its definition in $P_k$. We
present two unfolding rules: (1)~the {\em positive unfolding}, and
(2)~the {\em negative unfolding}. They correspond, respectively, to
the case where $A$ or $\neg A$ occurs in the body of the clause to
be unfolded.

\medskip

\noindent \textbf{R3. Positive Unfolding.} Let \( \gamma :\, \,
H\leftarrow B_{L}\wedge A\wedge B_{R} \) be a clause in program \(
P_{k} \) and let \( P'_{k} \) be a variant of \( P_{k} \) without
variables in common with \( \gamma  \). Let

\smallskip

\( \gamma _{1}:\, \, K_{1}\leftarrow B_{1},
    \ \ \ldots,\ \
    \gamma _{m}:\, \, K_{m}\leftarrow B_{m} \) \ \ \ (\( m\geq 0 \))

\smallskip{}

\noindent be all clauses of program \( P'_{k} \) such that, for \(
i=1,\ldots ,m \), \( A \) is unifiable with \( K_i \), with most
general unifier \( \vartheta _{i}\).

\noindent By \emph{unfolding \( \gamma  \) w.r.t.~\( A \)} we get
the clauses $\eta _{1},\ldots ,\eta _{m}$, where for \( i=1,\ldots
,m \), \( \eta_{i} \) is \( (H\leftarrow B_{L}\wedge B_{i}\wedge
B_{R})\vartheta _{i} \), and we say that clauses $\eta _{1},\ldots
,\eta _{m}$ are {\it{derived from}} $\gamma $. From \( P_{k} \) we
derive the new program \(P_{k+1}=(P_{k}-\{\gamma \})\cup \{\eta
_{1},\ldots ,\eta _{m}\}\).

\medskip{}

In rule R3, and also in the following rule R4, the 
most general unifier can be computed by using a unification
algorithm for finite terms~(see, for instance,~\cite{Llo87}). 
Note that 
this is correct, even in the presence on infinite terms, 
because in any $\omega$-program
each predicate has at most one argument of type~\texttt{ilist}.
On the contrary, if predicates may have
more than one argument of type~\texttt{ilist},
in the unfolding rule it is necessary to use 
a unification algorithm for infinite structures~\cite{Col82}.
For reasons of simplicity, here
we do not make that extension of the unfolding rule
and we stick to our assumption
that every predicate has at most one argument of type~\texttt{ilist}.

\medskip

The {\em existential variables} of a clause $\gamma$ are the
variables occurring in the body of~$\gamma$ and not in its head.

\medskip

\noindent \textbf{R4. Negative Unfolding.}\label{rule:neg-unfold}
Let $\gamma$: $H\leftarrow B_{L}\wedge \neg\,A\wedge B_{R}$ be a
clause in program $P_{k}$ and let $P'_{k}$ be a variant of $P_{k}$
without variables in common with $\gamma $. Let

\smallskip

\( \gamma _{1}:\, \, K_{1}\leftarrow B_{1},
    \ \ \ldots,\ \
    \gamma _{m}:\, \, K_{m}\leftarrow B_{m} \) \ \ \ (\( m\geq 0 \))

\smallskip{}

\noindent be all clauses of program $P'_{k}$, such that, for \(
i=1,\ldots ,m \), \( A \) is unifiable with \( K_i \), with most
general unifier \( \vartheta _{i}\). Assume that: (1)~\(
A=K_{1}\vartheta _{1}=\cdots = K_{m}\vartheta _{m} \), that is, for
\( i=1,\ldots ,m \), \( A \) is an instance of \( K_{i}\), (2)~for
\( i=1,\ldots ,m \), \( \gamma_{i} \) has no existential variables,
and (3)~from \(\neg (B_{1}\vartheta _{1}\vee \ldots \vee
B_{m}\vartheta_{m})\) we get a logically equivalent disjunction \(
D_{1}\vee \ldots \vee D_{r} \) of conjunctions of literals, with \(
r\geq 0 \), by first pushing \( \neg \) inside and then pushing \(
\vee \) outside.

\noindent By \emph{unfolding $\gamma$  w.r.t.~$\neg A$}
 \emph{using $P_k$} we get the clauses $\eta_{1},\ldots ,
\eta_{r},$ where, for $ i\!=\!1,\ldots,r$, clause $\eta_{i}$ is
$H\leftarrow B_L\wedge D_{i}\wedge B_R$, and we say that clauses
$\eta _{1},\ldots ,\eta _{r}$ are {\it{derived from}} $\gamma $.
From $ P_{k}$ we derive the new program $P_{k+1}=(P_{k}-\{\gamma
\})\cup \{\eta _{1},\ldots ,\eta _{r}\}$.

\medskip

\noindent The following \emph{subsumption} rule allows us to remove
from $P_{k}$ a clause $\gamma$ such that
$M(P_{k})\!=\!M(P_{k}\!-\{\gamma \})$.

\medskip

\noindent \textbf{R5. Subsumption.}\label{rule:subsumption} Let
$\gamma_1$: $H\leftarrow$ be a clause in program $P_{k}$ and let
$\gamma_2$ in $P_{k}-\{\gamma_1\}$ be a variant of $(H\leftarrow
B)\vartheta$, for some conjunction of literals $B$ and substitution
$\vartheta$. Then, we say that $\gamma_2$ {\it is subsumed} by
$\gamma_1$ and by  \emph{subsumption}, from $ P_{k}$ we derive the
new program $P_{k+1}=P_{k}-\{\gamma_2\}$.

\medskip

\noindent The \emph{folding} rule consists in replacing instances of
the bodies of the clauses that define an atom $A$ by the
corresponding instance of $A$. Similarly to the case of the
unfolding rule, we have two folding rules: (1)~\emph{positive folding}
and (2)~\emph{negative folding}. They correspond, respectively, to the
case where folding is applied to positive or negative occurrences of
literals.

\smallskip{}

\noindent \textbf{R6. Positive Folding.} Let $\gamma$ be a clause in
$P_{k}$ and let $\mathit{Defs_k^{\prime}}$ be a variant of
$\mathit{Defs_k}$ without variables in common with $\gamma$. Let the
definition of a predicate in \DefsPrk~consist of the clause $\delta: \,
K\leftarrow B$, where $B$ is a non-empty conjunction of literals.
Suppose that there exists a substitution \( \vartheta  \) such that
clause $\gamma$ is of the form \( H\leftarrow B_{L}\wedge B\vartheta
\wedge B_{R} \) and, for every variable \( X \!\in
\,\vars(B)\,-\,\vars(K) \), the following conditions hold: (i) \(
X\vartheta \) is a variable not occurring in \( \{H,B_{L},B_{R}\}
\), and (ii)~\( X\vartheta  \) does not occur in the term \(
Y\vartheta  \), for any variable \( Y \) occurring in \( B \) and
different from \( X \).

\noindent By \emph{folding $\gamma$ using $\delta$} we get the
clause \(\eta  \): \( H\leftarrow B_{L}\wedge K\vartheta \wedge
B_{R} \), and we say that clause $\eta$  is {\it{derived from}}
$\gamma$. From \( P_{k} \) we derive the new program \(
P_{k+1}=(P_{k}-\{\gamma\})\cup \{\eta \} \).

\medskip{}

\noindent \textbf{R7. Negative Folding.} Let $\gamma$ be a clause in
\( P_{k} \) and let \DefsPrk~be a variant of \Defsk$\,$without
variables in common with $\gamma$. Let the definition of a predicate
in \DefsPrk~consist of the $q$ clauses \( \delta_{1}\!: K\leftarrow
L_{1},\ldots,\delta_{q}\!: K\leftarrow L_{q}\), with $q\!\geq\! 1$, such that, for
\(i=1,\ldots,q \), \( L_{i} \) is a literal and  \( \delta_{i} \)
has no existential variables. Suppose that there exists a
substitution \( \vartheta \) such that clause $\gamma$ is of the
form \( H\leftarrow B_{L}\wedge (M_1\wedge\ldots\wedge M_q)\vartheta
\wedge B_{R} \), where, for $i=1,\ldots,q$, if~$L_i$ is the negative
literal~$\neg\,A_i$ then~$M_i$ is~$A_i$, and if~$L_i$ is the
positive literal~$A_i$ then~$M_i$ is~$\neg\,A_i$.

\noindent By \emph{folding $\gamma$ using
\(\delta_{1},\ldots,\delta_{q}\)} we get the clause \(\eta\): \(
H\leftarrow B_{L}\wedge \neg\, K\vartheta \wedge B_{R} \), and we
say that clause $\eta$  is {\it{derived from}} $\gamma$. From \(
P_{k} \) we derive the program \( P_{k+1}=(P_{k}-\{\gamma\})\cup
\{\eta \}\).

\medskip

Note that the negative folding 
rule is not included in the sets of transformation rules
presented in~\cite{Ro&02,Sek91,Sek09}.
The negative folding rule presented in~\cite{Fi&04a,PeP00a} corresponds 
to our rule~R7 in the case where $q\!=\!1$. 


\section{Correctness of the Transformation Rules}
\label{sec:corr_of_rules}

Now let us introduce the notion of correctness of a\,transformation
sequence w.r.t.\,the perfect model semantics.

\begin{definition}
[Correctness of a Transformation Sequence]
\label{def:correctness-of-transf-sequence} Let $P_{0}$ be a locally
stratified $\omega$-program and $P_{0},\ldots ,P_{n}$, with $n\!\geq\!0$,
 be a transformation
sequence. We say that $P_{0},\ldots ,P_{n}$ is {\em correct} if
(i)~$P_{0}\,\cup\,$\Defsn~and $P_n$ are locally stratified
$\omega$-programs and (ii)~$M(P_{0}\,\cup\,$\Defsn$)=M(P_n)$.
\end{definition}

In order to guarantee the correctness of a transformation sequence
$P_{0},\ldots ,P_{n}$ (see Theorem~\ref{th:corr_of_rules} below) we
will require that the application of the transformation rules
satisfy some suitable conditions that refer to a given local
stratification $\sigma$. In order to state those
conditions we need the following definitions.
\vspace{0mm}
 
\begin{definition}
[$\sigma$-Maximal Atom] \label{def:sigma-maximal} Consider a clause
$\gamma$: $H\leftarrow G$. An atom $A$ in $G$ is
said to be {\em $\sigma$-maximal} if, for every
valuation~$v$ and for every literal $L$ in $G$, we have
$\sigma(v(A))\!\geq \!\sigma(v(L))$.
\end{definition}
\vspace{-2mm}

\begin{definition}
[$\sigma$-Tight Clause]\label{def:sigma-tight-definition}
A clause $\delta$: $H\leftarrow G$ is said to be {\mbox{$\sigma$-{\em
tight}}} if there exists a $\sigma$-maximal atom $A$ in $G$
such that, for every valuation~$v$, $\sigma (v(H))\!=\!\sigma(v(A))$.
\end{definition}
\vspace{-2mm}

\begin{definition}
[Descendant Clause] \label{def:descen-clause-definition} A clause
$\eta$ is said to be a {\em descendant} of a clause $\gamma$ if
{\it{either}} $\eta$ is $\gamma$ itself {\it{or}} there exists a
clause~$\delta$ such that $\eta$ is {\it{derived from}} $\delta$ by
using a rule in $\{ \mbox{R2}, \mbox{R3}, \mbox{R4}, \mbox{R6},
\mbox{R7}\}$, and~$\delta$ is a descendant of $\gamma$.
\end{definition}
\vspace{-2mm}

\begin{definition}
[Admissible Transformation Sequence] \label{def:adm-transformation}
Let $P_{0}$ be a locally stratified $\omega$-program and let
$\sigma$ be a local stratification for $P_0$. A transformation
sequence $P_{0},\ldots ,P_{n}$, with $n\!\geq\!0$, is said to be \emph{admissible} if:

\noindent\textup{(1)}~every clause in \Defsn \ is locally stratified w.r.t.~$\sigma$,

\noindent \textup{(2)}~for $k\!=\!0,\ldots,n\! -\! 1$, if $P_{k+1}$ is
derived from $P_{k}$ by positive folding of clause 
\(\gamma\) using clause \(\delta\), then: (2.1)~$\delta$ is 
$\sigma$-tight and {\em either} (2.2.i)~the head
predicate of $\gamma$ occurs in $P_0$, {\em or}~(2.2.ii)~$\gamma$ is a
descendant of a clause $\beta$ in $P_{j}$, with $0\!<\! j\!\leq\!
k$, such that $\beta$ has been derived by positive unfolding of a clause
$\alpha$ in $P_{j-1}$ w.r.t. an atom 
which is $\sigma$-maximal in the body of~$\alpha$
and whose predicate occurs in $P_0$, and

\noindent \textup{(3)}~for $k=0,\ldots ,n\! -\! 1$, if $P_{k+1}$ is
derived from $P_{k}$ by applying the negative folding rule thereby
deriving a clause $\eta$, then $\eta$ is locally stratified
w.r.t.~$\sigma$.
\end{definition}

Note that Condition (1) can always be fulfilled because the
predicate introduced in program $P_{k+1}$ by rule~R1 does
not occur in any of the programs $P_0,\ldots, P_k$.
Conditions (2) and (3) cannot be checked in an algorithmic way for
arbitrary programs and local stratification functions. In particular,
the program property of being locally stratified is undecidable.
However, there are significant classes of programs, such as the 
stratified programs, where these conditions are decidable and easy to
verify.

The following Lemma~\ref{lem:sigma-preservation} and
Theorem~\ref{th:corr_of_rules}, whose proofs can be found in
the Appendix, show that: (i)~when constructing an
admissible transformation sequence $P_0,\ldots,P_n$, the application
of the transformation rules preserves the local
stratification~$\sigma$ for the initial program $P_0$ and, thus, all
programs in the transformation sequence are locally stratified
w.r.t.~$\sigma$, and (ii)~any admissible transformation sequence
preserves the perfect model of the initial program.

\begin{lemma}[Preservation of Local Stratification]
\label{lem:sigma-preservation} Let  $P_{0}$ be a locally stratified
$\omega$-program, $\sigma$ be a local stratification for $P_0$, and
$P_{0},\ldots,P_{n}$\,be\,an\,admissible transformation sequence.\,Then
the\,programs\,$P_{0} \cup$\Defsn, $P_{1},\ldots,P_{n},$ are all
locally stratified w.r.t.~$\sigma$.
\end{lemma}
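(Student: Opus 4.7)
The plan is to prove the statement by induction on $k \in \{0, 1, \ldots, n\}$, showing that each program $P_k$ in the admissible transformation sequence is locally stratified w.r.t.~$\sigma$. The base case $k = 0$ is the hypothesis. For the claim about $P_0 \cup \mathit{Defs}_n$, it suffices to combine the local stratification of $P_0$ with Condition~(1) of admissibility, which ensures every clause in $\mathit{Defs}_n$ is locally stratified w.r.t.~$\sigma$.

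For the inductive step, I would case-split on which rule derives $P_{k+1}$ from $P_k$. The easy cases are R1 (the newly introduced clauses are locally stratified by Condition~(1) of admissibility), R5 (subsumption only removes a clause), and R7 (the derived clause is locally stratified by Condition~(3) of admissibility). For R2 (instantiation), R3 (positive unfolding), and R4 (negative unfolding), the core observation is that for any substitution $\theta$ and any valuation $v$, the map $X \mapsto v(X\theta)$ is again a valuation, so the universally quantified inequalities defining local stratification transfer across instances. Concretely, for R3 I would verify local stratification of each $\eta_i = (H \leftarrow B_L \wedge B_i \wedge B_R)\vartheta_i$ by checking literals of $(B_L \wedge B_R)\vartheta_i$ via local stratification of $\gamma$ (pulled back through $\vartheta_i$), and literals of $B_i\vartheta_i$ via local stratification of $\gamma_i$ together with the identity $A\vartheta_i = K_i\vartheta_i$.

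For R4, the body of each $\eta_i$ is a conjunction whose literals are complements of literals occurring in some $B_j\vartheta_j$. The key inequality is that for any literal $L$ the complement $\overline{L}$ satisfies $\sigma(v(\overline{L})) \leq \sigma(v(L))+1$ (equality when $L$ is positive, strict decrease when $L$ is negative). Combined with local stratification of $\gamma_j$ and of $\gamma$, I get the chain $\sigma(v(H)) \geq \sigma(v(\neg A\vartheta_j)) = \sigma(v(A\vartheta_j))+1 = \sigma(v(K_j\vartheta_j))+1 \geq \sigma(v(L))+1 \geq \sigma(v(\overline{L}))$, which is exactly what is needed.

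The main obstacle is the positive folding rule R6, where from $\gamma: H \leftarrow B_L \wedge B\vartheta \wedge B_R$ and $\delta: K \leftarrow B$ one derives $\eta: H \leftarrow B_L \wedge K\vartheta \wedge B_R$ and must establish $\sigma(v(H)) \geq \sigma(v(K\vartheta))$ for every valuation~$v$, since $K\vartheta$ did not appear in $\gamma$. This is precisely the role of Condition~(2.1) of admissibility: $\delta$ is $\sigma$-tight, so there is a $\sigma$-maximal atom $A^{\prime}$ in $B$ with $\sigma(v^{\prime}(K)) = \sigma(v^{\prime}(A^{\prime}))$ for every valuation~$v^{\prime}$. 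Taking $v^{\prime}$ defined by $v^{\prime}(X) = v(X\vartheta)$ on $\mathit{vars}(B) \cup \mathit{vars}(K)$ (and arbitrary elsewhere), I obtain $\sigma(v(K\vartheta)) = \sigma(v(A^{\prime}\vartheta))$. The atom $A^{\prime}\vartheta$ is a literal of the body of $\gamma$, so by the induction hypothesis $\sigma(v(H)) \geq \sigma(v(A^{\prime}\vartheta)) = \sigma(v(K\vartheta))$. The literals in $B_L$ and $B_R$ pose no new difficulty. The more subtle descendant condition~(2.2) plays no role here; it will be needed only later for the semantic correctness theorem, not for this stratification-preservation lemma.
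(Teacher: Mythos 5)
Your proposal is correct and follows essentially the same route as the paper's proof: induction on $k$ with a case split on the rules, the valuation pull-back $v'(X)=v(X\vartheta)$ for instantiation and the two unfolding rules, the admissibility conditions (1) and (3) for R1 and R7, and $\sigma$-tightness of the folded definition (via a $\sigma$-maximal atom $A'$ with $\sigma(v(K\vartheta))=\sigma(v(A'\vartheta))$) for R6. Your uniform inequality $\sigma(v(\overline{L}))\leq\sigma(v(L))+1$ for negative unfolding merely compresses the paper's four-case analysis of the body literals of the derived clauses, and your remark that Condition~(2.2) is not needed for this lemma is accurate.
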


\begin{theorem}[Correctness of Admissible Transformation
Sequences]\label{th:corr_of_rules} Every admissible transformation
sequence is correct.
\end{theorem}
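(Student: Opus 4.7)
The plan is as follows. Lemma~\ref{lem:sigma-preservation} already delivers half of the claim: the programs $P_0\cup$\,\Defsn, $P_1,\ldots,P_n$ are all locally stratified w.r.t.~$\sigma$, hence their perfect models are well defined and condition~(i) of Definition~\ref{def:correctness-of-transf-sequence} is satisfied. What remains is to establish the equality $M(P_{0}\,\cup\,$\Defsn$)=M(P_n)$. I would prove this by transfinite induction on $\alpha\in W$, showing that for every ground atom $A$ with $\sigma(A)=\alpha$ the membership of $A$ in $M(P_{0}\,\cup\,$\Defsn$)$ is equivalent to its membership in $M(P_n)$, under the inductive assumption that the equivalence already holds for every atom $B$ with $\sigma(B)<\alpha$. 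The stratum-by-stratum construction of the perfect model of a locally stratified program then lets me restrict attention to ground instances of clauses whose head has level exactly $\alpha$.

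Within the inductive step I would chain single-step equalities $M(P_k)\cap S_\alpha=M(P_{k+1})\cap S_\alpha$ taken modulo the as-yet-unintroduced auxiliary definitions $\Defs_n\setminus\Defs_k$, where $S_\alpha$ denotes the ground atoms of level $\leq\alpha$. The simpler rules yield to direct semantic arguments. R1 leaves the enlarged program unchanged. R2 is sound because $\Sigma=\{s_1,\ldots,s_h\}$ is finite, so the ground $\mathtt{ilist}$-instances of the original $\gamma$ are exactly covered by those of $\gamma_1,\ldots,\gamma_h$. R3 replaces an atom by a logically equivalent disjunction of bodies in the immediate-consequence operator; the at-most-one-$\mathtt{ilist}$-argument restriction ensures that finite-term mgu is sufficient. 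R4 uses its hypotheses (1)--(3), in particular the no-existential-variables condition and the fact that $A=K_1\vartheta_1=\cdots=K_m\vartheta_m$, to turn the Clark-completion-style rewriting of $\neg A$ into a genuine logical equivalence. R5 is immediate. R7 is justified because each $\delta_i$ has no existential variables, so $(M_1\wedge\cdots\wedge M_q)\vartheta$ is equivalent to $\neg K\vartheta$ under every $\omega$-interpretation.

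The heart of the proof, and what I expect to be the main obstacle, is positive folding~R6. A naive fold could produce a clause that justifies an atom only by circular appeal to the folded definition, and admissibility is designed precisely to rule this out. I would split on the two alternatives of condition~(2.2) of admissibility. In case (2.2.i), when the head predicate of $\gamma$ already occurs in $P_0$, the $\sigma$-tightness of $\delta$ guarantees that folding does not raise the level of the head: if $A$ is a $\sigma$-maximal atom in the body of $\delta$ then $\sigma(v(H))=\sigma(v(A))$ for every valuation~$v$, so every literal absorbed into $K\vartheta$ stays at the same stratum or below, and the fold is reversible by positive unfolding of $\eta$ with $\delta$, an operation entirely within the current stratum. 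In case (2.2.ii), $\gamma$ descends from a clause $\beta\in P_j$ itself obtained by positive unfolding of some $\alpha_0\in P_{j-1}$ along a $\sigma$-maximal atom whose predicate already occurs in $P_0$; I would run a secondary well-founded induction on the structure of proof trees in $P_{k+1}$, translating each use of $\eta$ into a use of $\gamma$ together with the clauses defining $\delta$, thereby producing a corresponding proof in $P_0\cup$\,\Defsn. Once again the $\sigma$-tightness of $\delta$ keeps the substituted subproofs inside stratum~$\alpha$, so the outer transfinite-induction hypothesis can discharge them.

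Chaining the single-step equalities along the whole sequence then yields $M(P_{0}\,\cup\,$\Defsn$)\cap S_\alpha=M(P_n)\cap S_\alpha$ for every ordinal~$\alpha$, whence the desired equality of perfect models. As is customary in unfold/fold correctness proofs of the Tamaki--Sato lineage, the real bookkeeping lives in case (2.2.ii) of R6, where each positive fold must be matched against its justifying positive unfolding, tracked through the descendant relation of Definition~\ref{def:descen-clause-definition}.
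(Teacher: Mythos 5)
Your overall architecture---transfinite induction on strata plus a chain of single-step equalities $M(P_k)\cap S_\alpha=M(P_{k+1})\cap S_\alpha$---cannot work, and the paper's own Example~\ref{ex:seki} is the counterexample. In that sequence every step up to the fold preserves the perfect model, the final step is a syntactically legal application of R6, and yet $f\in M(P_0\cup\Defs_4)$ while $f\notin M(P_4)$. So positive folding does \emph{not} preserve the perfect model in isolation, not even stratum by stratum: whether a given fold is correct depends on the history of the sequence (how much unfolding ``progress'' was made earlier), and a purely local semantic equality at step $k$ has no way to see that history. Your treatment of R6 does not close this gap. The direction that actually fails in Example~\ref{ex:seki} is \emph{completeness} (atoms true before the fold must remain true after it), whereas your ``translate each use of $\eta$ into a use of $\gamma$ together with $\delta$'' argument is the soundness direction, which was never the problem. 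For completeness you must build a proof of $K\vartheta$ in $P_{k+1}$ out of the proof of $B\vartheta$ you just consumed, and the well-foundedness of that construction is exactly what distinguishes the correct Example~\ref{ex:not_seki} from the incorrect Example~\ref{ex:seki}; $\sigma$-tightness and reversibility alone cannot make the distinction, since both examples end with a self-recursive clause. (A smaller instance of the same problem: your claim that R7 is justified because $(M_1\wedge\cdots\wedge M_q)\vartheta$ is equivalent to $\neg K\vartheta$ ``under every $\omega$-interpretation'' is false; the equivalence holds only for interpretations in which $K$'s predicate is defined exactly by $\delta_1,\ldots,\delta_q$, which must itself be propagated from $P_d$ to $P_k$ by an induction on $k$.)

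The missing idea is the quantitative invariant that the paper carries along the entire sequence. The paper assigns to each atom the measure $\mu(A)=\langle\sigma(A),\mathit{min}\mbox{-}\mathit{weight}(A)\rangle$ (decremented by one for new atoms), where $\mathit{min}\mbox{-}\mathit{weight}$ counts the minimal number of nodes at stratum $\sigma(A)$ in a proof tree w.r.t.\ $P_d=P_0\cup\Defs_n$, and proves a two-sided Proposition by induction on $k$: soundness (proof trees for $P_k$ yield proof trees for $P_d$) and completeness ($\mu$-\emph{consistent} proof trees for $P_d$ yield $\mu$-consistent proof trees for $P_k$). The point is that unfolding w.r.t.\ a $\sigma$-maximal \emph{old} atom strictly decreases the weight component (old atoms have positive $\mathit{min}$-$\mathit{weight}$), $\sigma$-tightness of $\delta$ makes the fold weight-neutral ($\mu(K)=\mu(L_1)\oplus\cdots\oplus\mu(L_q)$), and Condition~(2) of admissibility guarantees the strict decrease happened before the fold, so the reconstructed proof tree after folding still strictly decreases $\mu$ from head to body and the induction on the well-founded order $\succ$ goes through. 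Without some such global measure---which is the heart of all Tamaki--Sato-style correctness proofs and of this one---your ``secondary well-founded induction on the structure of proof trees'' has no well-founded quantity to induct on, and the argument would ``prove'' the incorrect sequence of Example~\ref{ex:seki} correct as well.
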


Now let us make a few comments on Condition (2) of
Definition~\ref{def:adm-transformation} and related conditions
presented in the literature. Transformation sequences of stratified
programs over finite terms constructed by using rules R1, R3, and R6
have been first considered in~\cite{Sek91}. In that paper there is a
sufficient condition, called (F4), for the preservation of the
perfect model. Condition~(F4) is like our Condition~(2) except that (F4) 
does not require the $\sigma$-maximality of the atom w.r.t.~which positive
unfolding is performed. A set of transformation rules which includes
also the negative unfolding rule R4, was proposed in~\cite{PeP00a}
for locally stratified logic programs, and in~\cite{Fi&04a} for
locally stratified constraint logic programs. 
In~\cite{Sek09} Condition (F4) is shown
to be insufficient for the preservation of the perfect model if
rule R4 is used together with rules R1, R3, and R6, as demonstrated
by the following example.

\begin{example}
\label{ex:seki} Let us consider the initial program  $P_0 = \{m
\leftarrow, \ e \leftarrow \neg m, \ e \leftarrow e\}$.
By rule R1 we introduce the clause 
$\delta_1$: $f\leftarrow m \wedge
\neg e$ 
and we derive program $P_1\!=\! P_0 \!\cup\! \{\delta_1\}$ and $\Defs_1\!=\!\{\delta_1\}$. 
By rule R3 we unfold $\delta_1$ w.r.t.~$m$ and we get the clause
$\delta_2$: $f\leftarrow \neg e$.
We derive program $P_2\!=\!P_0 \cup \{\delta_2\}$. Thus, Condition~(F4) is satisfied.
By rule R4 we unfold  $\delta_2$ w.r.t.~$\neg e$
and we get 
$\delta_3$: $f\leftarrow m \wedge \neg e$.
We derive program $P_3=P_0 \cup \{\delta_3\}$.
By rule R6 we fold clause $\delta_3$ using clause $\delta_1$, and we get 
$\delta_4$: $f\leftarrow f$. 
\noindent
We derive program $P_4=P_0\cup\{\delta_4\}$ and $\Defs_4\!=\!\{\delta_1\}$. We have that $f\in
M(P_0\cup\mbox{\Defs}_4)$ and $f\not\in M(P_4)$.  Thus, the 
transformation sequence $P_0,\ldots,P_4$ is not correct.
\end{example}

In order to guarantee the preservation of the perfect model
semantics,~\cite{Sek09} has proposed the following stronger
applicability condition for negative unfolding:

\smallskip \noindent
{\em{Condition}} (NU):
the negative unfolding rule R4 can be applied only if it does not
increase the number of positive occurrences of atoms in
the body of any derived clause. 

\smallskip \noindent
Indeed, in the incorrect transformation
sequence of Example~\ref{ex:seki} the negative unfolding does not
comply with this Condition~(NU).
However, Condition~(NU) is very restrictive, because it forbids  the
unfolding of a clause w.r.t.~a negative literal $\neg A$ when
 the body of a clause defining $A$ contains an occurrence of a
negative literal. Unfortunately, many of the correct
transformation strategies proposed in~\cite{PeP00a,Fi&04a} would be
ruled out if Condition~(NU) is enforced. Our Condition (2) is more
liberal than Condition (NU) and, in particular, it allows us to
unfold w.r.t.~a negative literal $\neg A$ also if the body of a
clause defining~$A$ contains occurrences of negative literals.
The following is an example of a correct, admissible transformation 
sequence which violates Condition (NU).

\begin{example}\label{ex:not_seki}\nopagebreak 
Let us consider the initial program  $P_0 =
\{\mathit{even}(0)\!\leftarrow$, \ \ 
$\mathit{even}(s(s(X)))\!\leftarrow\!\mathit{even}(X),$\ \ 
$\mathit{odd}(s(0))\!\leftarrow$, \ \ 
$\mathit{odd}(s(X))\!\leftarrow\!\neg\,\mathit{odd}(X)\}$
and the transformation sequence we now construct starting from~$P_0$.
By rule R1 we introduce the following clause

$\delta_1$: $p\leftarrow \mathit{even}(X)\wedge \neg\,\mathit{odd}(s(X))$

\noindent and we derive $P_1= P_0 \cup \{\delta_1\}$. 
By taking a local stratification function $\sigma$ such that, 
for all ground terms~$t_1$ and $t_2$, $\sigma(p)\!=\!\sigma(\mathit{even}(t_1))\!>\!
\sigma(\mathit{odd}(t_2))$, we have that $\delta_1$ is \mbox{$\sigma$-tight} and
$even(X)$ is a \mbox{$\sigma$-maximal} atom in its body.
By unfolding $\delta_1$ w.r.t. $\mathit{even}(X)$ we derive 
$P_2=P_0 \cup \{\delta_2,\delta_3\}$,
where

$\delta_2$: $p\leftarrow \neg\,\mathit{odd}(s(0))$\hspace{2cm}

$\delta_3$: $p\leftarrow \mathit{even}(X)\wedge \neg\,\mathit{odd}(s(s(s(X))))$

\noindent
By unfolding, clause $\delta_2$ is removed and we derive
$P_3=P_0 \cup \{\delta_3\}$. By unfolding $\delta_3$
w.r.t.~$\neg \mathit{odd}(s(s(s(X))))$ we derive $P_4=P_0 \cup \{\delta_4\}$, where

$\delta_4$: $p\leftarrow \mathit{even}(X)\wedge \mathit{odd}(s(s(X)))$

\noindent
By unfolding
$\delta_4$ w.r.t.~$\mathit{odd}(s(s(X)))$, we derive $P_5=P_0 \cup \{\delta_5\}$, where

$\delta_5$: $p\leftarrow \mathit{even}(X)\wedge \neg\,\mathit{odd}(s(X))$

\noindent
By applying rule R6, we fold clause $\delta_5$ using
clause $\delta_1$ and derive the final program $P_6=P_0\cup\{\delta_6\}$, where

$\delta_6$: $p\leftarrow p$.

\noindent The transformation
sequence $P_0,\ldots,P_6$ is admissible and, thus, correct. In particular, the
application of rule R6 satisfies Condition~(2) of
Definition~\ref{def:adm-transformation} because $\delta_1$ is $\sigma$-tight
and $\delta_5$
is a descendant of $\delta_3$ which has been derived by unfolding 
w.r.t.~a $\sigma$-maximal atom whose predicate occurs in $P_0$. 

Note that, $P_0,\ldots,P_6$ violates Condition (NU) 
because, by unfolding clause $\delta_3$ 
w.r.t.~$\neg \mathit{odd}(s(s(s(X))))$,
the number of positive occurrences of atoms in
the body of the derived clause $\delta_4$ is larger
than that number in $\delta_3$.
\end{example}

Finally, note that the incorrect transformation sequence of
Example~\ref{ex:seki} is {\it{not}} an admissible transformation
sequence in the sense of our
Definition~\ref{def:adm-transformation}, because it does not comply
with Condition (2). Indeed, consider any local stratification $\sigma$.
The atom $m$ is not $\sigma$-maximal in $m \wedge \neg
e$ because $e$ depends  on $\neg m$ and, hence,
$\sigma(\neg e)\!>\!\sigma(m)$. Thus, the positive folding rule R6 is
applied to the clause $\delta_3$ which is not a descendant of any
clause derived by unfolding w.r.t.~a $\sigma$-maximal atom.

\section{Verifying Properties of $\omega$-Programs by
Program Transformation}
\label{sec:verification}

In this section we will outline a general method, based on the
transformation rules presented in Section~\ref{sec:rules},
for verifying properties of $\omega$-programs.
Then we will see our transformation-based
verification method in action in the proof of: (i)~the
non-emptiness of the language accepted by a B\"{u}chi automaton, 
and (ii)~containment between $\omega$-regular languages.

We assume that we are given an $\omega$-program $P$ 
defining a unary predicate {\it prop} of type {\tt ilist},
which specifies a property of interest,
and we want to check whether or not $M(P)\models\exists X\, {\it prop}(X)$.
Our verification method consists of two steps. 

\smallskip

\noindent {\em Step 1.} 
By using the transformation rules for  $\omega$-programs presented in 
Section~\ref{sec:rules} we derive a {\em monadic} $\omega$-program $T$
(see Definition~\ref{def:monadic_omega_programs} below), such that

$M(P)\models\exists X\, {\it prop}(X)$ iff $M(T)\models\exists X\, {\it prop}(X)$.

\smallskip

\noindent {\em Step 2.} 
We apply to $T$ the decision procedure of~\cite{Pe&09b} for
monadic $\omega$-programs and we check whether or not 
$M(T)\models\exists X\, {\it prop}(X)$.

\smallskip

Our verification method is an extension to $\omega$-programs of
the transformation-based method for proving properties of logic programs on
finite terms presented in~\cite{PeP00a}. 
Furthermore, our method is more powerful than the transformation-based
method for verifying CTL$^*$ properties of finite state reactive
systems presented in~\cite{Pe&09b}. Indeed, at Step 1 of the verification
method proposed here,
(i)~we start from an arbitrary $\omega$-program,
instead of an $\omega$-program which encodes the
branching time temporal logic CTL$^*$, and 
(ii)~we use transformation rules more powerful than those in~\cite{Pe&09b}.
In particular, similarly to~\cite{PeP00a}, the rules
applied at Step 1 allow us to
eliminate the existential variables from program $P$, while the
transformation presented in~\cite{Pe&09b} consists of a specialization
of the initial program w.r.t.~the property to be verified.

Note that there exists no algorithm which always succeeds in transforming an
\mbox{$\omega$-program} into a monadic $\omega$-program.
Indeed, (i) the problem of verifying whether or not, for any $\omega$-program
$P$ and unary predicate {\it prop},
$M(P)\models\exists X\, {\it prop}(X)$ is undecidable, 
because the class of \mbox{$\omega$-programs} includes
the locally stratified logic programs on finite terms,
and (ii) the proof method for monadic
$\omega$-programs presented in~\cite{Pe&09b} is complete.
However, we believe that automatic transformation strategies
can be proposed for significant subclasses of \mbox{$\omega$-programs} along the lines 
of~\cite{PrP95a,PeP00a}.

\begin{definition} [Monadic $\omega$-Programs]
\label{def:monadic_omega_programs}\rm A~{\em monadic}
\mbox{$\omega$-clause} is an $\omega$-clause of the form
$A_0\leftarrow L_1\wedge\ldots\wedge L_m$, with $m\!\geq\! 0$, such
that: (i)~$A_0$ is an atom of the form~$p_0$ or~$q_0(\llbracket s|X_0
\rrbracket)$,
where~$q_0$ is a predicate of type~\texttt{ilist} and
$s\!\in\!\Sigma$, (ii)~for $i\!=\!1,\ldots,m,$ $L_i$ is either an atom
$A_i$ or a negated atom $\neg A_i$, where $A_i$ is of the form $p_i$
or $q_i(X_i)$, and $q_i$ is a predicate of type~\texttt{ilist}, and
(iii)~there exists a level mapping $\ell$ such that, for
$i\!=\!1,\ldots,m,$ if $L_i$ is an atom and
$\mathit{vars}(A_0)\!\not\supseteq\! \mathit{vars}(L_i)$, then
$\ell(A_0)\!>\!\ell(L_i)$ else $\ell(A_0)\!\geq\! \ell(L_i)$. A~{\em monadic
$\omega$-program} is a finite set of monadic $\omega$-clauses.
\end{definition}

\vspace*{-2mm}

\begin{example}[Non-Emptiness of Languages Accepted by B\"{u}chi Automata]
\label{ex:buechi}

In this first application of our verification method,
we will consider {\em B\"{u}chi automata},
which are finite automata acting on infinite words~\cite{Tho90}, 
and we will check whether or not the language 
accepted by a {B\"{u}chi automaton} is empty. 
It is
well known that this verification problem has important applications
in the area of model checking (see, for instance,~\cite{Cl&99}).

A {\em B\"{u}chi automaton} $\mathcal A$ is a nondeterministic finite automaton
 $\langle\Sigma,Q,q_0, \delta,F\rangle$, where, as usual, $\Sigma$ is the input
alphabet, $Q$ is the set of states, $q_0$ is the initial state,
$\delta\subseteq Q\!\times\!\Sigma\!\times\! Q$ is the transition
relation, and $F$ is the set of final states. A~{\em run} of the
automaton~$\mathcal A$ on an infinite input word
$w\!=\!a_0\,a_1\ldots\in\Sigma^\omega$ is an infinite sequence
$\rho\!=\!\rho_0\,\rho_1\ldots\in Q^\omega$ of states such that $\rho_0$
is the initial state $q_0$ and, for all~$n\!\geq\! 0$, $\langle
\rho_n,a_n,\rho_{n+1}\rangle \in \delta$. Let $\mbox{\it{Inf}}\,(\rho)$
denote the set of states that occur infinitely often in the
infinite sequence~$\rho$ of states. An infinite word
$w\in\Sigma^\omega$ is {\em accepted} by $\mathcal A$ if there
exists a run $\rho$ of $\mathcal A$ on $w$ such that
$\mbox{\it{Inf}}\,(\rho)\cap F\neq\emptyset$ or, equivalently, if there is
no state $\rho_m$ in $\rho$ such that every  state $\rho_n$,
with $n\geq m$, is not final. The {\em language accepted} by $\mathcal A$
is the subset of $\Sigma^{\omega}$, denoted $\mathcal L(\mathcal A)$, of the
infinite words accepted by~$\mathcal A$.
In order to check whether or not the language $\mathcal L(\mathcal A)$ is
empty, we construct an $\omega$-program which defines a unary predicate 
{\it{accepting\_run}} such that:

\smallskip
\noindent\makebox[5mm][l]{$(\alpha)$}~$\mathcal L(\mathcal
A)\neq\emptyset$~~iff~~$\exists
X\,\mbox{\it{accepting\_run}}(X)$

\smallskip
\noindent The predicate $\mbox{\it{accepting\_run}}$ is defined by the
following formulas:

\smallskip
\noindent\makebox[5mm][l]{($1$)}~$\mbox{\it{accepting\_run}}(X)
\equiv_{\scriptsize{\Mathit{def}}}\Mathit{run}(X)\wedge \neg\,\mbox{\Mathit{rejecting}}(X)$

\smallskip
\noindent\makebox[5mm][l]{(2)}~$\Mathit{run}(X)\equiv_{\scriptsize{\Mathit{def}}}\exists
S\,(\Mathit{occ}(0,X,S) \wedge \Mathit{initial}(S))\,\wedge$\\
\noindent\makebox[9mm][l]{} $\forall N\,\forall S_1\,\forall S_2\,
(\Mathit{nat}(N) \wedge \Mathit{occ}(N,X,S_1) \wedge \Mathit{occ}(s(N),X,S_2)
\rightarrow \exists A\, \Mathit{tr\/}(S_1,A,S_2)))$

\smallskip
\noindent\makebox[5mm][l]{(3)}~$\mbox{\it{rejecting}}(X)\equiv_{\scriptsize{\Mathit{def}}}\exists
M(\Mathit{nat}(M)\wedge\forall N\forall S
(\Mathit{geq}(N,M)\wedge \Mathit{occ}(N,X,S) \rightarrow
\neg\, \Mathit{final}(S)))$

\smallskip
\noindent where, for all $n\!\geq\! 0$, for all $\rho\!=\!\rho_0\,\rho_1\ldots\in Q^\omega$, for all
$q,q_1, q_2\!\in\! Q$, for all $a\!\in\!\Sigma$, 
(i)~$\Mathit{occ}(s^n(0),\rho,q)$ iff $\rho_n\!= \!q$, 
(ii)~$\Mathit{initial}(q)$ iff $q\!=\!q_0$, 
(iii)~$\Mathit{nat}(s^n(0))$ iff $n\!\geq \!0$, 
(iv)~$\Mathit{tr\/}(q_1,a,q_2)$ iff 
     $\langle q_1,a,q_2\rangle\!\in\!\delta$, 
(v)~$\Mathit{geq}(s^n(0),s^m(0))$ iff $n\!\geq\! m$, and
(vi)~$\Mathit{final}(q)$ iff $q\!\in\! F$.

By $(\alpha)$ and (1)--(3) above, $\mathcal L(\mathcal A)\neq\emptyset$ iff there
exists an infinite sequence $\rho\!=\!\rho_0\,\rho_1\ldots\in Q^\omega$ of
states such that: (i)~$\rho_0$ is the initial state $q_0$, (ii)~for 
all $n\!\geq\!0$, there exists $a\in\Sigma$ such that 
$\langle \rho_n,a,\rho_{n+1}\rangle\in\delta$ (see (2)), and (iii)~there exists no state 
$\rho_m$, with $m\!\geq\!0$, in $\rho$  such that, for all $n\!\geq\! m$, $\rho_n\notin F$ (see (3)).

Now we introduce an 
$\omega$-program $P_{\mathcal A}$ defining the predicates
{\it {accepting\_run}}, {\it run}, {\it{rejecting}},
$\Mathit{nat}$, $\Mathit{occ}$, and~$\Mathit{geq}$.
In particular, clause~1 corresponds to formula (1),
clauses~2--4 correspond to formula (2), and
clauses~5 and~6 correspond to formula~(3). 
(Actually, 
clauses 1--6 can be derived from formulas~(1)--(3)
by applying the Lloyd-Topor transformation~\cite{Llo87}.)
In program
$P\!_{\mathcal A}$ any infinite sequence $\rho_0\rho_1\!\ldots$ of states is
represented by the infinite list
$\llbracket\rho_0,\rho_1,\ldots\rrbracket\!$ of constants.

Given a B\"{u}chi automaton $\mathcal A =
\langle\Sigma,Q,q_0,\delta,F\rangle$, the encoding $\omega$-program
$P\!_{\mathcal{A}}$ consists of the following clauses (independent of
$\mathcal A$):

\smallskip

\makebox[2mm][r]{1.}~$\mbox{\it{accepting\_run}}(X)\leftarrow
\mbox{\it{run}}(X) \wedge \neg\,\mbox{\it{rejecting}}(X)$

\makebox[2mm][r]{2.}~$\mbox{\it{run}}(X)\leftarrow
\mathit{occ}(0,X,S)\wedge\mbox{\it{initial}}(S)
\wedge \neg\,\mbox{\it{not\_a\_run}}(X)$

\makebox[2mm][r]{3.}~$\mbox{\it{not\_a\_run}}(X)\leftarrow
\mbox{\it{nat}}(N)\wedge \mathit{occ}(N,\!X,\!S_1)\wedge
\mathit{occ}(s(N),\!X,\!S_2)\,\wedge$
$\neg\,\mbox{\it{exists\_tr}}(S_1,\!S_2)$

\makebox[2mm][r]{4.}~$\mbox{\it{exists\_tr}}(S_1,S_2) \leftarrow
\mbox{\it{tr}}(S_1,A,S_2)$

\makebox[2mm][r]{5.}~$\mbox{\it{rejecting}}(X)\leftarrow
\mbox{\it{nat}}(M)\wedge \neg\,\mbox{\it{exists\_final}}(M,X)$

\makebox[2mm][r]{6.}~$\mbox{\it{exists\_final}}(M,X)\leftarrow
\mbox{\it{geq}}(N,M)\wedge \mathit{occ}(N,X,S)\wedge
\mbox{\it{final}}(S)$

\makebox[60mm][l]{\makebox[2mm][r]{7.}~$\mbox{\it{nat}}(0)\leftarrow$}

\makebox[2mm][r]{8.}~$\mbox{\it{nat}}(s(N))\leftarrow nat(N)$

\makebox[60mm][l]{\makebox[2mm][r]{9.}~$\mathit{occ}(0,\llbracket S|X\rrbracket,S)\leftarrow$}

\makebox[2mm][r]{10.}~$\mathit{occ}(s(N),\llbracket S|X\rrbracket,R) \leftarrow
\mathit{occ}(N,X,R)$

\makebox[60mm][l]{\makebox[2mm][r]{11.}~$\mbox{\it{geq}}(N,0)\leftarrow$}

\makebox[2mm][r]{12.}~$\mbox{\it{geq}}(s(N),s(M))\leftarrow \mathit{geq}(N,M)$

\smallskip

\noindent together with the clauses (depending on $\mathcal A$)
which define the predicates $\mbox{\it{initial}}$, $\mbox{\it{tr}}$, and
$\mbox{\it{final}}$, where: for all states~$s,s_1,s_2\!\in\!Q$, for all
symbols~$a\!\in\!\Sigma$, (i)~$\mbox{\it{initial}}(s)$ holds iff
$s$ is $q_0$, (ii)~$\mbox{\it{tr}}(s_1,\!a,\!s_2)$ holds iff $\langle
s_1,\!a,\!s_2\rangle\!\in\!\delta$, and (iii)~$\mbox{\it{final}}(s)$
holds iff $s\!\in\! F$.
 
The $\omega$-program $P_\mathcal{A}$ is locally stratified w.r.t.~the
stratification function $\sigma$ defined as follows: for every atom
$A$ in $\mathcal B_\omega$, $\sigma(A)\!=\!0$, except that: for every
element $n$ in \mbox{$\{s^k(0)\mid k\!\geq\! 0\}$,} for every infinite list
$\rho$ in $Q^\omega$,
(i)~$\sigma(\Mathit{rejecting}(\rho))\! =\!
\sigma(\Mathit{not\_a\_run}(\rho))\! =\!\sigma(\mathit{nat}(n))\! =
\!1$, and (ii)~$\sigma(\mathit{run}(\rho))\!=$ $
\sigma(${\Mathit{ac\-cept\-ing\_run}}$(\rho))\!=\!2$.

Now,  let us consider a B\"{u}chi automaton
$\mathcal A$ such that:

$\Sigma\!=\!\{a,b\}$, $Q\!=\!\{1,2\}$, $q_0\!=\!1$,
$\delta\!=\!\{\langle 1,a,1\rangle,\langle 1,b,1\rangle,\langle
1,a,2\rangle,\langle 2,a,2\rangle\}$, $F\!=\!\{2\}$

\noindent which can be represented by the following graph:

\vspace{-1.5mm}
\begin{center}
\includegraphics[bb=80mm 244mm 125mm 263mm,clip,width=35mm]{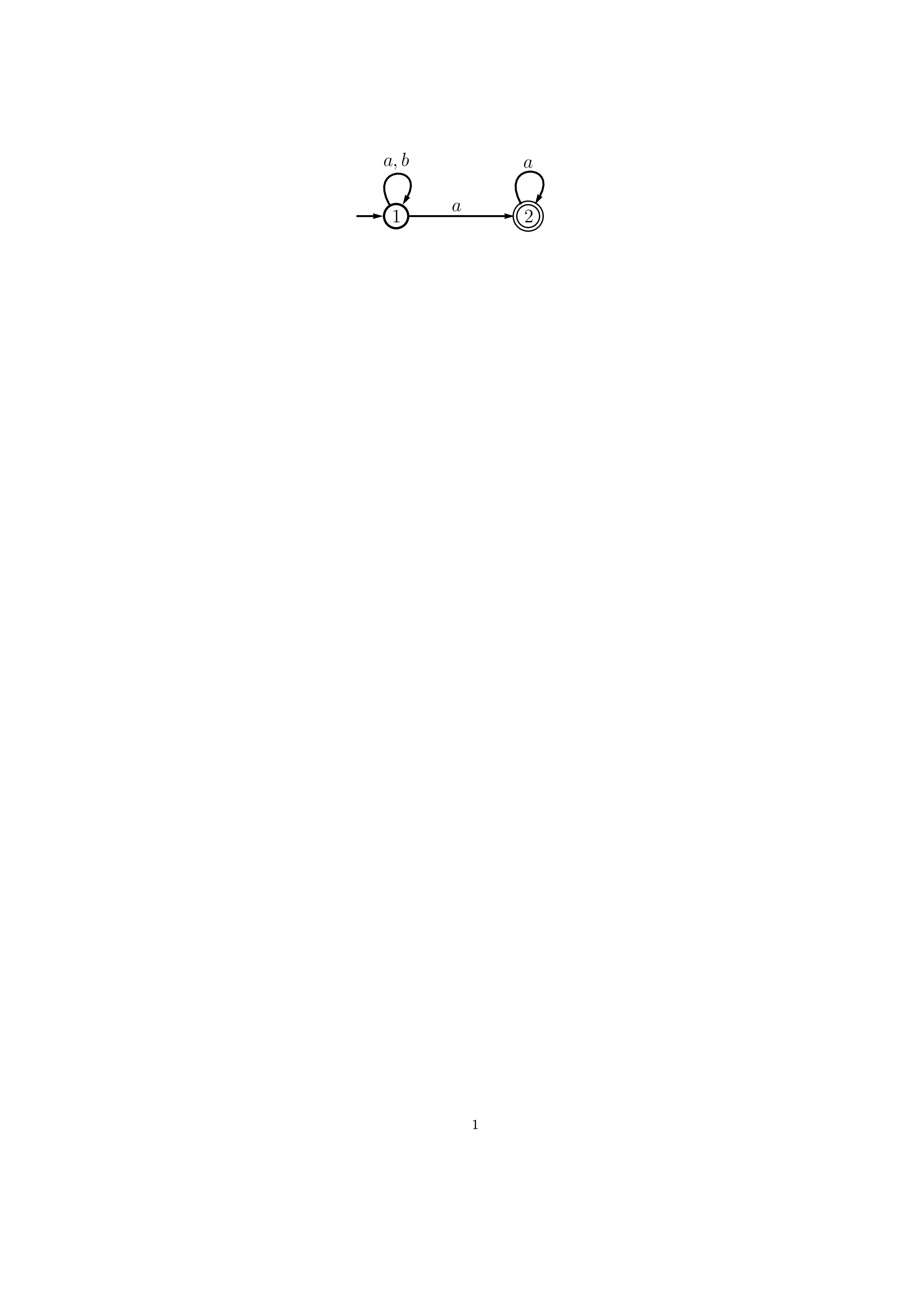}
\end{center}
\vspace{-1mm}

\noindent For this automaton $\mathcal{A}$, program $P_\mathcal{A}$ consists of clauses 1--12 
and the following clauses~\mbox{13--18} that encode
 the initial state (clause~13), the transition relation \linebreak
(clauses~14--17), and the final state (clause~18):

\smallskip

\makebox[30mm][l]{\makebox[3mm][r]{13.}~$\mathit{initial}(1)\leftarrow$}
\makebox[30mm][l]{\makebox[3mm][r]{14.}~$\mathit{tr}(1,a,1)\leftarrow$}
\makebox[30mm][l]{\makebox[3mm][r]{15.}~$\mathit{tr}(1,b,1)\leftarrow$}

\makebox[30mm][l]{\makebox[3mm][r]{16.}~$\mathit{tr}(1,a,2)\leftarrow$}
\makebox[30mm][l]{\makebox[3mm][r]{17.}~$\mathit{tr}(2,a,2)\leftarrow$}
\makebox[3mm][r]{18.}~$\mathit{final}(2)\leftarrow$

\smallskip
\noindent \noindent \noindent In order to check whether or not $\mathcal
L(\mathcal A)\!=\! \emptyset$ we proceed in two steps as indicated at the 
beginning of this Section~\ref{sec:verification}. 
In the first step we use
the rules of Section~\ref{sec:rules} for transforming the
$\omega$-program $P_\mathcal{A}$ into a monadic $\omega$-program~$T$.
This transformation aims at the elimination of the existential variables
from clauses~1--6, with the objective of deriving unary
predicates of type {\texttt{ilist}}. We start from clause~6 and, by
instantiation of the variable $X$ of type {\texttt{ilist}}, we get:

\smallskip

\makebox[3mm][r]{19.}~$\Mathit{exists\_final}(M,\llbracket
1|X\rrbracket) \leftarrow geq(N,M)\wedge {\Mathit{occ}}(N,\llbracket
1|X\rrbracket,S)\wedge final(S)$

\makebox[3mm][r]{20.}~$\Mathit{exists\_final}(M,\llbracket
2|X\rrbracket) \leftarrow \Mathit{geq}(N,M)\wedge
{\Mathit{occ}}(N,\llbracket 2|X\rrbracket,S)\wedge final(S)$

\smallskip

\noindent By some unfolding and subsumption steps, from clauses 19 and 20 we
get:

\smallskip

\makebox[3mm][r]{21.}~$\Mathit{exists\_final}(0,\llbracket
1|X\rrbracket) \leftarrow {\Mathit{occ}}(N,X,S)\wedge final(S)$

\makebox[3mm][r]{22.}~$\Mathit{exists\_final}(s(M),\llbracket
1|X\rrbracket) \leftarrow \Mathit{geq}(N,M)\wedge
{\Mathit{occ}}(N,X,S)\wedge final(S)$

\makebox[3mm][r]{23.}~$\Mathit{exists\_final}(0,\llbracket
2|X\rrbracket)\leftarrow$

\makebox[3mm][r]{24.}~$\Mathit{exists\_final}(s(M),\llbracket
2|X\rrbracket) \leftarrow \Mathit{geq}(N,M)\wedge
{\Mathit{occ}}(N,X,S)\wedge final(S)$

\smallskip

\noindent Note that clauses 21--24 are descendants of clauses derived by
unfolding clauses~19 and 20 w.r.t. the $\sigma$-maximal atom
$\Mathit{geq}(N,M)$. By rule R1, we introduce:

\smallskip

\makebox[3mm][r]{25.}~$\Mathit{new}_1(X) \leftarrow
{\Mathit{occ}}(N,X,S)\wedge final(S)$

\smallskip

\noindent This clause is $\sigma$-tight by taking, for
every infinite list $\rho$ of states, $\sigma(\Mathit{new}_1(\rho))\!=\!0$. By
folding clause 21 using clause 25, and folding clauses 22 and 24 using clause~6
(indeed, without loss of generality, we may assume that clauses 1--6 have been
introduced by rule R1), we get:

\smallskip

\makebox[3mm][r]{26.}~$\Mathit{exists\_final}(0,\llbracket
1|X\rrbracket) \leftarrow \Mathit{new}_1(X)$

\makebox[3mm][r]{27.}~$\Mathit{exists\_final}(s(M),\llbracket
1|X\rrbracket) \leftarrow \Mathit{exists\_final}(M,X)$

\makebox[3mm][r]{28.}~$\Mathit{exists\_final}(s(M),\llbracket
2|X\rrbracket) \leftarrow \Mathit{exists\_final}(M,X)$

\smallskip

\noindent By instantiation of the variable $X$ and by some unfolding and
subsumption steps, from clause 25 we get:

\smallskip

\makebox[3mm][r]{29.}~$\Mathit{new}_1(\llbracket 1|X\rrbracket)
\leftarrow {\Mathit{occ}}(N,X,S)\wedge final(S)$\hspace{2cm}

\makebox[3mm][r]{30.}~$\Mathit{new}_1(\llbracket 2|X\rrbracket)
\leftarrow$

\smallskip

\noindent Note that clause 29 is a descendant of clause~25, that has
been unfolded w.r.t. the $\sigma$-maximal atom ${\Mathit
occ}(N,X,S)$. By folding clause 29 using clause 25 we get:

\smallskip

\makebox[3mm][r]{31.}~$\Mathit{new}_1(\llbracket 1|X\rrbracket)
\leftarrow \Mathit{new}_1(X)$

\smallskip

\noindent At this point we have obtained the definitions of the
predicates $\Mathit{exists\_final}$ and $\Mathit{new}_1$ (that is,
clauses~23, 26--28, 30, and 31) that do not have existential variables.

Now the transformation of program $P_{\mathcal A}$ proceeds by
performing on clauses~1--5 a sequence of transformation steps,
which is similar to the one we have performed above on clause~6 for
eliminating its existential variables. By doing so, we get:

\smallskip
\makebox[3mm][r]{32.}~$\Mathit{accepting\_run}
(\llbracket1|X\rrbracket)\leftarrow
\neg\,\Mathit{not\_a\_run}(X)\wedge \Mathit{new}_1(X)\wedge
\neg\,\Mathit{rejecting}(X)$

\makebox[3mm][r]{33.}~$\Mathit{run}(\llbracket1|X\rrbracket)
\leftarrow \neg\, \Mathit{not\_a\_run}(X)$

\makebox[3mm][r]{34.}~$\Mathit{not\_a\_run}(\llbracket1|X\rrbracket)
\leftarrow \Mathit{not\_a\_run}(X)$

\makebox[65mm][l]{\makebox[3mm][r]{35.}~$\Mathit{not\_a\_run}
(\llbracket2|X\rrbracket) \leftarrow \Mathit{new}_2(X)$}

\makebox[65mm][l]{\makebox[3mm][r]{36.}~$\Mathit{not\_a\_run}
(\llbracket2|X\rrbracket)\leftarrow \Mathit{not\_a\_run}(X)$}

\makebox[65mm][l]{\makebox[3mm][r]{37.}~$\Mathit{new}
_2(\llbracket1|X\rrbracket)\leftarrow$}

\makebox[3mm][r]{38.}~$\Mathit{rejecting}(\llbracket1|X\rrbracket)
\leftarrow \neg\,new_1(X)$

\makebox[3mm][r]{39.}~$\Mathit{rejecting}(\llbracket1|X\rrbracket)
\leftarrow \Mathit{rejecting}(X)$

\makebox[3mm][r]{40.}~$\Mathit{rejecting}(\llbracket2|X\rrbracket)
\leftarrow \Mathit{rejecting}(X)$

\smallskip
\noindent The final $\omega$-program $T$ obtained from program
$P_{\mathcal A}$, consists of clauses 30--40 and it
is a monadic $\omega$-program.

Now, in the second step of our verification method, we check whether
or not $\exists X\,\Mathit{accepting\_run}(X)$ holds in $M(T)$ by
applying the proof method of~\cite{Pe&09b}. We construct
the tree depicted in Figure~\ref{fig:proof-ex1}, where the literals
occurring in the two lowest levels are the same (see the two rectangles) 
and, thus, we have detected an infinite loop.
According to the conditions given in Definition~6 of~{\protect\cite{Pe&09b}}, this tree
is a proof of $\exists X\,\Mathit{accepting\_run}(X)$. The run
$\rho\!=\!12^\omega$ is a witness for $X$ and corresponds to the
accepted word~$a^\omega$. Thus, $\mathcal
L(\mathcal A)\neq\emptyset$.

\begin{figure}
 \setlength{\unitlength}{1mm}
\begin{picture}(120,32)(0,-32)
\thicklines

\put(37,-4){\makebox[16mm][l]{$\exists X\,
\Mathit{accepting\_run}(X)$}}
\put(50,-5){\line(6,-1){38}}\put(24,-8){\footnotesize $1$}
\put(50,-5){\line(-6,-1){38}}\put(58.5,-10){\footnotesize $1$}
\put(50,-5){\line(3,-2){10}}\put(74,-8){\footnotesize $1$}

\put(0,-14){\makebox[16mm][l]{$\neg\,\Mathit{not\_a\_run}(X)$}}
\put(53,-14){\makebox[16mm][l]{$\Mathit{new}_1(X)$}}
\put(78,-14){\makebox[16mm][l]{$\neg\, \Mathit{rejecting}(X)$}}

\put(12,-15.3){\line(0,-1){6.5}}\put(9.5,-19){\footnotesize $2$}
\put(12,-15.3){\line(3,-1){18}}\put(23,-18){\footnotesize $2$}
\put(60,-15){\line(0,-1){6.5}}\put(61,-19){\footnotesize $2$}
\put(90,-15){\line(0,-1){6.5}}\put(91,-19){\footnotesize $2$}

\put(5,-24){\makebox[16mm][l]{$\neg\,\Mathit{new}_2(X)$}}
\put(23,-24){\makebox[16mm][l]{$\neg\, \Mathit{not\_a\_run}(X)$}}
\put(57,-24){\makebox[16mm][l]{$\Mathit{true}$}}
\put(78,-24){\makebox[16mm][l]{$\neg\, \Mathit{rejecting}(X)$}}

\put(12,-25){\line(0,-1){6.5}}\put(9.5,-29){\footnotesize $2$}
\put(36,-25.3){\line( 0,-1){6.2}}\put(33.5,-29){\footnotesize $2$}
\put(36,-25.3){\line(3,-1){18}}\put(47,-28){\footnotesize $2$}
\put(90,-25){\line(0,-1){6.5}}\put(91,-29){\footnotesize $2$}

\put(9,-34){\makebox[16mm][l]{$\Mathit{true}$}}
\put(24,-34){\makebox[16mm][l]{$\neg\,\Mathit{new}_2(X)$}}
\put(44,-34){\makebox[16mm][l]{$\neg\, \Mathit{not\_a\_run}(X)$}}
\put(78,-34){\makebox[16mm][l]{$\neg\, \Mathit{rejecting}(X)$}}

\thinlines
\put(3,-35.5){\line(1,0){100}}
\put(3,-30.5){\line(1,0){100}}
\put(3,-30.5){\line(0,-1){5}}
\put(103,-30.5){\line(0,-1){5}}
\put(3,-25.5){\line(1,0){100}}
\put(3,-20.5){\line(1,0){100}}
\put(3,-20.5){\line(0,-1){5}}
\put(103,-20.5){\line(0,-1){5}}
\put(104.4,-28.5){\oval(5,10)[r]}
\put(104.4,-33.55){\line(-1,0){0.8}}
\put(104.4,-23.53){\vector(-1,0){1}}

\put(115,-3.5){\circle*{1.5}}
\put(115,-4.5){\vector(0,-1){8}}
\put(116,-9){\footnotesize $1$}
\put(115,-13.5){\circle*{1.5}}
\put(115,-14.5){\vector(0,-1){8}}
\put(116,-19){\footnotesize $2$}
\put(115,-23.5){\circle*{1.5}}
\put(115,-24.5){\vector(0,-1){8}}
\put(116,-29){\footnotesize $2$}
\put(115,-33.5){\circle*{1.5}}
\put(117,-28.5){\oval(5,10)[r]}
\put(117,-33.55){\line(-1,0){0.8}}
\put(117,-23.53){\vector(-1,0){1}}
\end{picture}
\vspace{2mm}
\caption{Proof of $\exists X\, \Mathit{accepting\_run}(X)$
w.r.t.~the monadic $\omega$-program $T$. On the right we have shown the infinite loop and
the associated accepting run $122^\omega$ (that is, $12^\omega$).
\label{fig:proof-ex1}}
\vspace{-2mm}
\end{figure}
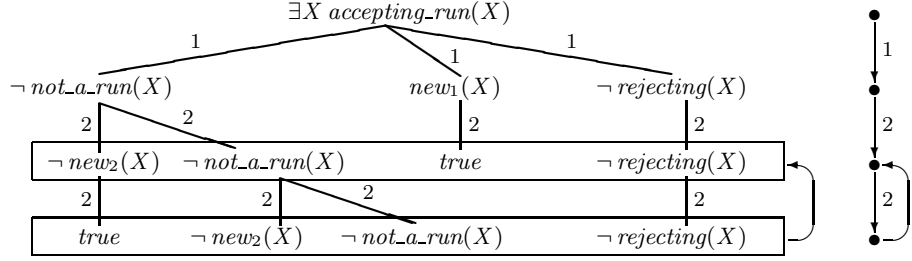
\end{example}

\begin{example}[Containment Between $\omega$-Regular Languages]

In this second application of our verification method, we will consider
regular sets of infinite words over a finite alphabet $\Sigma$~\cite{Tho90}. 
These sets are
denoted by $\omega$-regular expressions whose syntax is defined as follows: 

\smallskip
\makebox[45mm][l]{$e\,::=\,a\ |\ e_1e_2\ |\ e_1\!+\!e_2\ |\
e^*$}\makebox[25mm][l]{with $a\in \Sigma$}(regular expressions)

\smallskip
\makebox[70mm][l]{$f\,::=\,e^\omega\ |\ e_1e_2^\omega\ |\
f_1\!+\!f_2$}($\omega$-regular expressions)

\smallskip
\noindent Given a regular (or an $\omega$-regular) expression $r$,
by $\mathcal L(r)$ we indicate the set of all words in $\Sigma^*$ 
(or $\Sigma^\omega$, respectively) denoted by~$r$.
In particular, given a regular expression $e$, we have that
$\mathcal L(e^\omega)=\{w_0w_1\ldots\in\Sigma^\omega\ |\ $for~$i\!\geq\! 0,
w_i\in\mathcal L(e)\!\subseteq\! \Sigma^*\}$.

Now we introduce an $\omega$-program, called
$P_f$, which defines the predicate $\omega$-${\Mathit acc}$ such
that for any $\omega$-regular expression $f$, for any infinite word~$w$,
\mbox{$\omega$-${\Mathit acc}(f,w)$} holds iff $w\in \mathcal L(f)$.
Any infinite word $a_0a_1\ldots\in\Sigma^\omega$ is represented by the infinite
list  $\llbracket a_0,a_1,\ldots\rrbracket$ of symbols in $\Sigma$. 
The $\omega$-program $P_f$ is made out of the following clauses:

\smallskip


\makebox[3mm][r]{1.}~${\Mathit acc}(E,[E])\leftarrow
\Mathit{symb}(E)$

\makebox[3mm][r]{2.}~${\Mathit{acc}}(E_1E_2,X) \leftarrow
\Mathit{app}(X_1,X_2,X)\wedge {\Mathit{acc}}(E_1,X_1)\wedge {\Mathit
acc}(E_2,X_2)$

\makebox[3mm][r]{3.}~${\Mathit{acc}}(E_1\!+\!E_2,X) \leftarrow
{\Mathit acc}(E_1,X)$

\makebox[3mm][r]{4.}~${\Mathit{acc}}(E_1\!+\!E_2,X) \leftarrow
{\Mathit acc}(E_2,X)$

\makebox[3mm][r]{5.}~${\Mathit{acc}}(E^*,[\,])\leftarrow$

\makebox[3mm][r]{6.}~${\Mathit{acc}}(E^*,X) \leftarrow
{\Mathit{app}}(X_1,X_2,X)\wedge {\Mathit{acc}}(E,X_1)\wedge
{\Mathit{acc}}(E^*,X_2)$


\makebox[65mm][l]{\makebox[3mm][r]{7.}~$\omega$-$\Mathit{acc}(F_1\!+\!F_2,X)
\leftarrow \omega$-$\Mathit{acc}(F_1,X)$}

\makebox[3mm][r]{8.}~$\omega$-$\Mathit{acc}(F_1\!+\!F_2,X) \leftarrow
\omega$-$\Mathit{acc}(F_2,X)$

\makebox[3mm][r]{9.}~$\omega$-$\Mathit{acc}(E^\omega,X) \leftarrow
\neg\, \Mathit{new}_1(E,X)$


\makebox[3mm][r]{10.}~$\omega$-$\Mathit{acc}(E_1E_2^\omega,X)
\leftarrow \Mathit{prefix}(X,N,X_1)\wedge {\Mathit
acc}(E_1,X_1)\wedge \omega$-$\Mathit{acc}1(E_2^\omega,X_1,X)$

\makebox[3mm][r]{11.}~$\Mathit{new}_1(E,X) \leftarrow
\Mathit{nat}(M)\wedge \neg\, \Mathit{new}_2(E,M,X)$

\makebox[3mm][r]{12.}~$\Mathit{new}_2(E,M,X) \leftarrow
\Mathit{geq}(N,M)\wedge \Mathit{prefix\/}(X,N,V)\wedge
{\Mathit acc}(E^*,V)$

\makebox[3mm][r]{13.}~$\omega$-$\Mathit{acc}1(E,[\,],X)
\leftarrow \omega$-$\Mathit{acc}(E,X)$

\makebox[3mm][r]{14.}~$\omega$-$\Mathit{acc}1(E,[H|T],\llbracket H|X\rrbracket)
\leftarrow \omega$-$\Mathit{acc}1(E,T,X)$

\makebox[47mm][l]{\makebox[3mm][r]{15.}~$\Mathit{geq\/}(N,0)\leftarrow$}

\makebox[3mm][r]{16.}\makebox[70mm][l]{~$\Mathit{geq\/}(s(N),s(M))
\leftarrow \Mathit{geq\/}(N,M)$}

\makebox[47mm][l]{\makebox[3mm][r]{17.}~$\Mathit{nat\/}(0)\leftarrow$}

\makebox[3mm][r]{18.}~$\Mathit{nat\/}(s(N)) \leftarrow
\Mathit{nat\/}(N)$

\makebox[47mm][l]{\makebox[3mm][r]{19.}~$\Mathit{prefix\/}(X,0,[\,])\leftarrow$}

\makebox[3mm][r]{20.}~$\Mathit{prefix\/}(\llbracket S|X\rrbracket,s(N),[S|Y])
\leftarrow \Mathit{prefix\/}(X,N,Y)$



\makebox[47mm][l]{\makebox[3mm][r]{21.}~$\Mathit{app}([\,],Y,Y)\leftarrow$}

\makebox[3mm][r]{22.}~$\Mathit{app}([S|X],Y,[S|Z]) \leftarrow
\Mathit{app}(X,Y,Z)$

\smallskip

\noindent together with the clauses defining the predicate
${\Mathit{symb}}$, where $\Mathit{symb}(a)$ holds iff $a\in \Sigma$.
We have that $\Mathit{prefix\/}(X,N,Y)$ holds iff $Y$ is the list of the $N\,(\geq\!0)$
leftmost symbols of the infinite list $X$.
Clauses 1--6 stipulate that, for any finite word $w$ and regular expression~$e$,
${\Mathit acc}(e,w)$ holds iff $w\in\mathcal L(e)$. Analogously,
clauses 7--14 stipulate that, for any infinite word~$w$ and
$\omega$-regular expression~$f$, \mbox{$\omega$-${\Mathit acc}(f,w)$} holds iff
$w\in\mathcal L(f)$. In particular, clauses 9, 11, and 12 correspond to the
following definition:

\smallskip
$\omega$-$\Mathit{acc}(E^\omega,X) \equiv_{\scriptsize{\Mathit{def}}}$

\hspace*{15mm}$\forall M (\Mathit{nat}(M)\rightarrow \exists N \exists V 
(\Mathit{geq}(N,M) \wedge {\Mathit{prefix\/}}(X,N,V)\wedge
{\Mathit{acc}}(E^*,V)))$

\smallskip
\noindent The $\omega$-program $P_f$ is stratified and, thus, locally stratified.

Now, let us consider the $\omega$-regular expressions  $f_1 \equiv_{\Mathit{def}} a^\omega$
and $f_2 \equiv_{\Mathit{def}} (b^*a)^\omega$. The following two clauses:

\smallskip
\makebox[60mm][l]{\makebox[3mm][r]{23.}~$\Mathit{expr}_1(X)\leftarrow
\omega$-$\Mathit{acc}(a^\omega,X)$} \makebox[3mm][r]{24.}~$\Mathit{expr}_2(X)
\leftarrow \omega$-$\Mathit{acc}((b^*a)^\omega,X)$

\smallskip
\noindent together with program $P_f$, define the predicates
$\Mathit{expr}_1$ and $\Mathit{expr}_2$ such that, for every
infinite word $w$, $\Mathit{expr}_1(w)$ holds iff $w\in\mathcal
L(f_1)$ and $\Mathit{expr}_2(w)$ holds iff $w\in\mathcal L(f_2)$.
If we introduce the following clause:

\smallskip
\makebox[3mm][r]{25.}~$\Mathit{not\_contained}(X)\leftarrow
\Mathit{expr}_1(X)\wedge\neg\, \Mathit{expr}_2(X)$

\smallskip
\noindent 
we have that $\mathcal L(f_1)\subseteq \mathcal
L(f_2)$ iff $M(P_f \cup \{23,24,25\}) \not\models \exists X \Mathit{not\_contained}(X)$.
By performing a sequence of transformation steps which is similar to the one
we have performed in Example~\ref{ex:buechi}, from program
$P_f\cup\{23,24,25\}$ we get the following monadic $\omega$-program $T$:

\smallskip
\makebox[78mm][l]{\makebox[1mm][r]{26.}\,$\Mathit{not\_contained}(\llbracket
a|X\rrbracket)\leftarrow \neg \Mathit{new}_3(X)\wedge
\Mathit{new}_4(X)$}
\makebox[1mm][r]{31.}\,$\Mathit{new}_5(\llbracket a|X\rrbracket
)\leftarrow\Mathit{new}_4(X)$

\makebox[78mm][l]{\makebox[1mm][r]{27.}\,$\Mathit{new}_3(\llbracket
a|X\rrbracket)\leftarrow\Mathit{new}_3(X)$}
\makebox[1mm][r]{32.}\,$\Mathit{new}_5
(\llbracket b|X\rrbracket)\leftarrow \Mathit{new}_5(X)$

\makebox[78mm][l]{\makebox[1mm][r]{28.}\,$\Mathit{new}_3(\llbracket
b|X\rrbracket)\leftarrow$}
\makebox[1mm][r]{33.}\,$\Mathit{new}_5(\llbracket
b|X\rrbracket)\leftarrow \neg\,\Mathit{new}_6(X)$

\makebox[78mm][l]{\makebox[1mm][r]{29.}\,$\Mathit{new}_4
(\llbracket a|X\rrbracket)\leftarrow
\Mathit{new}_4(X)$}
\makebox[1mm][r]{34.}\,$\Mathit{new}_6
(\llbracket a|X\rrbracket)\leftarrow$

\makebox[78mm][l]{\makebox[1mm][r]{30.}\,$\Mathit{new}_4
(\llbracket b|X\rrbracket)\leftarrow
\Mathit{new}_5(X)$}
\makebox[1mm][r]{35.}\,$\Mathit{new}_6
(\llbracket b|X\rrbracket)\leftarrow \Mathit{new}_6(X)$

\smallskip
\noindent By using the proof
method for monadic $\omega$-programs
of~\cite{Pe&09b} we have that $M(T)\nvDash \exists X\,
\Mathit{not\_contained}(X)$ and, thus, $\mathcal L(f_1)\subseteq
\mathcal L(f_2)$.
\end{example}

\section{Related Work and Conclusions}
\label{sec:related}

There have been various proposals for extending logic programming languages to
infinite structures (see, for instance,~\cite{Col82,Llo87,MiG09,Si&06}). 
In order to provide the semantics of infinite
structures, these
languages introduce new concepts, 
such as {\em complete Herbrand interpretations}, {\em rational
trees}, and {\em greatest models}. Moreover, the operational semantics of these
languages requires an extension of SLDNF-resolution by means of equational reasoning 
 and new inference rules, such as the so-called {\em coinductive hypothesis} rule.

On the contrary, the semantics of $\omega$-programs we consider in this paper is very
close to the usual perfect model semantics for logic programs
on finite terms, and we do not define any new operational semantics.
Indeed, the main objective of this paper is {\em not} to provide a
new model for computing over infinite structures, but to present a
methodology, based on unfold/fold transformation rules, for reasoning about
such structures and proving their properties.

Very little work has been done for applying transformation techniques to
logic languages that specify the (possible infinite) 
computations of reactive systems.  Notable
exceptions are~\cite{UeF88} and~\cite{Et&01}, where the unfold/fold
transformation rules have been studied in the context of {\em
guarded Horn clauses} (GHC) and {\em concurrent constraint programs}
(CCP). However, GHC and CCP programs are {\em definite} programs and
do not manipulate terms denoting infinite lists. 

The transformation rules presented in this paper extend to 
$\omega$-programs the
rules for general programs proposed 
in~\cite{Fi&04a,PeP00a,Ro&02,Sek91,Sek09}. In 
Sections~\ref{sec:rules} and~\ref{sec:corr_of_rules}
we discuss in detail 
the relationship of the rules in those papers with our rules here.

In Section~\ref{sec:verification} we have used our transformation 
rules for extending to infinite lists a verification
methodology proposed in~\cite{PeP00a} 
and, as an example, we have shown how to verify
properties of the infinite behaviour of B\"uchi automata and 
properties of $\omega$-regular languages.
This extends our previous work (see~\cite{Pe&09b}), as already
illustrated at the beginning of Section~\ref{sec:verification}.

The verification methodology based on transformations we have 
proposed here is very general. It can be applied to the proof 
of properties of infinite state reactive systems; thus it goes 
beyond the capabilities of finite state model checkers.
The focus of our paper has been the proposal of correct 
transformation rules, that is, rules which preserve the
perfect model, while the automation of the verification 
methodology itself is 
left for future work.
This automation 
requires the design of suitable transformation strategies that can be 
defined by adapting to $\omega$-programs some strategies 
already developed in the case of logic programs on finite 
terms (see, for instance,~\cite{PrP95a,PeP00a}).

Many other papers use logic programming, possibly with constraints,
for specifying and verifying properties of finite or infinite state 
reactive systems~(see, 
for instance,~\cite{AbM89,DeP01,FrO97a,Ja&04,LeM99,NiL00,Ra&97}), but 
they do not consider terms which explicitly represent infinite
structures. As we have seen in the examples of Section~\ref{sec:verification}, 
infinite lists are very convenient for specifying those properties
and the use of infinite lists avoids
 ingenious encodings which would have been otherwise required.




\newpage

\appendix

\section{Proofs for Section~\ref{sec:corr_of_rules}}
\label{appendix:corr_of_rules}

We start off by showing that admissible transformation sequences
preserve the local stratification~\( \sigma \) for the initial
program \( P_{0} \) as stated in the following lemma.

\medskip

\noindent {\bf Lemma~\ref{lem:sigma-preservation} (Preservation of
Local Stratification)}

\noindent Suppose that $P_{0}$ is a locally stratified
$\omega$-program, $\sigma$ is a local stratification for $P_0$, and
$P_{0},P_{1},\ldots ,$ $P_{n}$ is an admissible transformation sequence. Then
the programs $P_{0}\,\cup\,${\it{Defs}}$_{n}$, $P_{1},\ldots ,P_{n}$
are locally stratified
w.r.t.~$\sigma$.

\begin{proof}
Since $P_{0},\ldots ,P_{n}$ is an admissible transformation
sequence, every definition in $\mathit{Defs}_{n}$ 
is locally stratified w.r.t.~$\sigma$ (see Point~(1) of
Definition~\ref{def:adm-transformation}). Since, by hypothesis, $P_0$
is locally stratified w.r.t.~$\sigma$, also $P_{0}\cup
\mathit{Defs}_{n}$ is locally stratified w.r.t.~$\sigma$.

\noindent Now we will prove that, for \( k=0,\ldots ,n \), \( P_{k}
\) is locally stratified w.r.t.~\( \sigma  \) by induction on \( k
\).

\medskip

\noindent \emph{Basis} (\( k=0 \)). By hypothesis \( P_{0} \) is
locally stratified w.r.t.~\( \sigma  \).

\medskip

\noindent \emph{Step}. We assume that \( P_{k} \) is
locally stratified w.r.t.~\( \sigma  \) and we show that \( P_{k+1}
\) is locally stratified w.r.t.~\( \sigma  \). We proceed by cases
depending on the transformation rule which is applied to derive \(
P_{k+1} \) from~\( P_{k} \).

\medskip

\noindent \emph{Case} 1. Program \( P_{k+1} \) is derived by
definition introduction (rule R1). We have that \( P_{k+1}=P_{k}\cup
\{\delta _{1},\ldots ,\delta _{m}\} \), where \( P_{k} \) is locally
stratified w.r.t.~\( \sigma  \) by the inductive hypothesis. Since
$P_{0},\ldots ,P_{n}$ is an admissible transformation sequence, 
\(\{\delta _{1},\ldots ,\delta _{m}\} \) 
is locally stratified w.r.t.~$\sigma$
(see Point~(1) of Definition~\ref{def:adm-transformation}). Thus, \( P_{k+1} \) is locally
stratified w.r.t.~\( \sigma  \).

\medskip

\noindent \emph{Case} 2. Program \( P_{k+1} \) is derived by
instantiation (rule R2). We have that
$P_{k+1}=(P_{k}-\{\gamma\})\cup\{\gamma_{1},\ldots,\gamma_{h}\}$,
where $\gamma$ is the clause $H\leftarrow B$ and, for
$i=1,\ldots,h$, $\gamma_{i}$ is the clause $(H\leftarrow
B)\{X/\llbracket s_i|X\rrbracket \}$. 

\noindent Take any $i\in\{1,\ldots,h\}$. 
Let $L\{X/\llbracket
s_i|X\rrbracket \}$ be a literal in the body of $\gamma_i$. Let $v$
be any valuation and $v'$ be the valuation such that
$v'(X)=\llbracket s_i|v(X)\rrbracket $ and $v'(Y)=v(Y)$ for every
variable $Y$ different from $X$. We have:

\smallskip

\makebox[3.3cm][l]{$\sigma(v(H\{X/\llbracket s_i|X\rrbracket \}))$}
\makebox[4cm][l]{$= \sigma(v'(H))$}
 (definition of $v'$)

\hspace*{3.3cm} \makebox[4cm][l]{$ \geq\sigma(v'(L))$} ($\gamma$ is
locally stratified w.r.t.~$\sigma$)

\hspace*{3.3cm} \makebox[4cm][l]{$=\sigma(v(L\{X/\llbracket
s_i|X\rrbracket \}))$} (definition of $v'$)

\smallskip

\noindent Thus, $\gamma_i$ is locally stratified w.r.t.~$\sigma$.
Hence, $P_{k+1}$ is locally stratified w.r.t.~$\sigma$.

\medskip

\noindent \emph{Case} 3. Program \( P_{k+1} \) is derived by
positive unfolding (rule R3). We have that \(
P_{k+1}=(P_{k}-\{\gamma \})\cup \{\eta _{1},\ldots ,\eta _{m}\} \),
where \( \gamma  \) is a clause in \( P_{k} \) of the form \(
H\leftarrow G_{L}\wedge A\wedge G_{R} \) and clauses \( \eta
_{1},\ldots ,\eta _{m} \) are derived by unfolding \( \gamma  \)
w.r.t.~\( A \). Since, by the induction hypothesis, \(
(P_{k}-\{\gamma \}) \) is locally stratified w.r.t.~\( \sigma  \),
it remains to show that, for \( i=1,\ldots ,m \), clause \( \eta
_{i} \) is locally stratified w.r.t.~\( \sigma \). For \( i=1,\ldots
,m \), \( \eta _{i} \) is of the form \( (H\leftarrow G_{L}\wedge
B_{i}\wedge G_{R})\vartheta_i \), where $\gamma_i$: \( K
_{i}\leftarrow B_i \) is a clause in a variant of $P_{k}$  
such that $\gamma_i$ has no variable in common with $\gamma$ and
$A\vartheta_i=K_i\vartheta_i$. Take any valuation \( v \) and let
$v'$ be a valuation such that, for every variable $X$ occurring in
$\gamma$ or $\gamma_i$, $v'(X)=v(X\vartheta_i)$.

Let $G_L\wedge B_i\wedge G_R$ be the conjunction of $s\ (\geq 0)$
literals $L_1,\ldots, L_s$. Without loss of generality, we assume
that $G_{L}\wedge G_{R}$ is $L_1\wedge\ldots\wedge L_r$ and $B_i$ is
$L_{r+1}\wedge\ldots\wedge L_s$, with $0\leq r \leq s$.

\noindent For $j=1,\ldots,r$, we have:

\smallskip

\makebox[1.75cm][l]{$\sigma(v(H\vartheta_i))$} \makebox[3cm][l]{$=
\sigma(v'(H))$} (definition of $v'$)

\hspace*{1.75cm} \makebox[3cm][l]{$ \geq\sigma(v'(L_j))$} ($\gamma$
is locally stratified w.r.t.~$\sigma$)

\hspace*{1.75cm} \makebox[3cm][l]{$=\sigma(v(L_j\vartheta_i))$}
(definition of $v'$)

\smallskip

\noindent \noindent For $j=r+1,\ldots,s$, we have:

\smallskip

\makebox[1.75cm][l]{$\sigma(v(H\vartheta_i))$} \makebox[3cm][l]{$=
\sigma(v'(H))$} (definition of $v'$)

\hspace*{1.75cm} \makebox[3cm][l]{$ \geq\sigma(v'(A))$} ($\gamma$ is
locally stratified w.r.t.~$\sigma$)

\hspace*{1.75cm} \makebox[3cm][l]{$=\sigma(v'(K_i))$} (definition of
$v'$ and because $A\vartheta_i=K_i\vartheta_i$)

\hspace*{1.75cm} \makebox[3cm][l]{$ \geq\sigma(v'(L_j))$}
($\gamma_i$ is locally stratified w.r.t.~$\sigma$)

\hspace*{1.75cm} \makebox[3cm][l]{$=\sigma(v(L_j\vartheta_i))$}
(definition of $v'$)

\smallskip

\noindent Thus, the clause \( \eta _{i} \) is locally stratified
w.r.t.~\( \sigma  \).

\medskip

\noindent \emph{Case} 4. Program \( P_{k+1} \) is derived by
negative unfolding (rule R4). We have that \(
P_{k+1}=(P_{k}-\{\gamma \})\cup \{\eta _{1},\ldots ,\eta _{r}\} \),
where \( \gamma  \) is a clause in \( P_{k} \) of the form \(
H\leftarrow G_{L}\wedge \neg A\wedge G_{R} \) and clauses \( \eta
_{1},\ldots ,\eta _{r} \) are derived by negative unfolding \(
\gamma \) w.r.t.~\( \neg A \). Since, by the inductive hypothesis,
\( (P_{k}-\{\gamma \}) \) is locally stratified w.r.t.~\( \sigma \),
it remains to show that, for \( j=1,\ldots ,r \), clause \( \eta
_{j} \) is locally stratified w.r.t.~\( \sigma  \).

Let $\gamma_1$: $K_{1}\leftarrow B_{1},$ $\ldots ,$ $\gamma_m$:
$K_{m}\leftarrow B_{m}$ be the clauses in a variant of \( P_{k} \)
such that, for \( i=1,\ldots ,m \), \(A\! =\! K_{i}\vartheta_i\) for
some substitution $\vartheta_i$. Then,  for \( j=1,\ldots ,r \), \(
\eta _{j} \) is of the form \( H\leftarrow L_{j1}\wedge\ldots\wedge
L_{js}\) and, by construction, for $p=1,\ldots,s$, $L_{jp}$ is a
literal such that either (Case~a)~$L_{jp}$ is an atom that occurs
positively in $G_L\wedge G_R$, or (Case~b)~$L_{jp}$ is a negated
atom that occurs in $G_L\wedge G_R$, or (Case~c)~$L_{jp}$ is an atom
$M$ and $\neg M$ occurs in $B_i\vartheta_i$, for some $i\in
\{1,\ldots ,m\} $, or (Case~d)~$L_{jp}$ is a negated atom $\neg M$
and $M$ is an atom that occurs positively in $B_i\vartheta_i$, for
some $i\in \{1,\ldots ,m\} $.

Take any $j\in\{1,\ldots,h\}$.
Take any $p\in\{1,\ldots,s\}$. Take any valuation $v$. In Cases~(a) and (b) we have
$\sigma(v(H))\geq \sigma(v(L_{jp}))$ because, by the inductive
hypothesis, $\gamma$ is locally stratified w.r.t.~$\sigma$. 
In Case~(c) we have:

\smallskip

\makebox[1.4cm][l]{$\sigma(v(H))$} \makebox[3cm][l]{$
>\sigma(v(A))$} ($\gamma$ is locally stratified w.r.t.~$\sigma$ and

\makebox[48mm][l]{}
$\neg A$ occurs in the body of $\gamma$)

\hspace*{1.4cm} \makebox[3cm][l]{$=\sigma(v(K_i\vartheta_i))$}
($A=K_i\vartheta_i$)

\hspace*{1.4cm} \makebox[3cm][l]{$>\sigma(v(L_{jp}))$} ($\gamma_i$
is locally stratified w.r.t.~$\sigma$)

\smallskip

\noindent In Case (d) we have:
\smallskip

\makebox[1.4cm][l]{$\sigma(v(H))$} \makebox[3cm][l]{$
\geq\sigma(v(A)) +1$} ($\gamma$ is locally stratified
w.r.t.~$\sigma$ and 

\makebox[48mm][l]{}
$\neg A$ occurs in the body of $\gamma$)

\hspace*{1.4cm} \makebox[3cm][l]{$=\sigma(v(K_i\vartheta_i))+1$}
($A=K_i\vartheta_i$)

\hspace*{1.4cm} \makebox[3cm][l]{$\geq\sigma(v(L_{jp}))+1$}
($\gamma_i$ is locally stratified w.r.t.~$\sigma$)

\smallskip

\noindent Thus, \( \eta _{j}\) is locally stratified w.r.t.~\(
\sigma  \). Hence, $P_{k+1}$ is locally stratified w.r.t.~$\sigma$.

\medskip

\noindent \emph{Case} 5. Program \( P_{k+1} \) is derived by
subsumption (rule R5). \( P_{k+1} \) is locally stratified w.r.t.~\(
\sigma  \) by the inductive hypothesis because \( P_{k+1}\subseteq
P_{k} \).

\medskip

\noindent \emph{Case} 6. Program \( P_{k+1} \) is derived by
positive folding (rule R6). We have that \( P_{k+1}=(P_{k}-\{\gamma
\})\cup \{\eta \} \), where \( \eta  \) is a clause of the form \(
H\leftarrow  B_L\wedge K\vartheta\wedge B_R \) derived by positive
folding of clause \( \gamma \) of the form \( H\leftarrow B_L\wedge
B\vartheta\wedge B_R \) using a clause \( \delta \) of the form \(
K\leftarrow B \ \in \mathit{Defs}_k\). We have to show that $\eta$
is locally stratified w.r.t.~$\sigma$, that is, for every valuation
\(v\), \(\sigma(v(H))\geq \sigma(v(K)\vartheta) \). 

Take any
valuation~$v$. 
By the inductive hypothesis, since $\gamma$ is locally stratified 
w.r.t.~$\sigma$, we have that: $(\alpha)$~for every literal $L$
occurring in $B_L\wedge B\vartheta \wedge B_R$, we have \(\sigma( v(H))\geq
\sigma(v(L))\). 

By the applicability conditions of rule R6, clause
$\delta$ is the unique clause defining the predicate of its head and, by the
hypothesis that the transformation sequence is admissible, this
definition is $\sigma$-tight (see Point~(2) of
Definition~\ref{def:adm-transformation}). Thus, for every
valuation~$v'$, we have that: (1)~for every $L$ in $B$, 
$\sigma(v'(K))\geq \sigma(v'(L))$,
and (2)~there exists an atom $A$ in $B$ such that 
$\sigma(v'(K)) = \sigma(v'(A))$.

Let the valuation $v'$ be defined as follows: for every variable $X$,
$v'(X)=v(X\vartheta)$. Then, we have that: 
$(\beta.{\rm 1})$~for every $L$ in $B$, $\sigma(v(K\vartheta))\geq \sigma(v(L\vartheta))$,
and $(\beta.{\rm 2})$~there exists an atom $A$ in $B$ such that 
$\sigma(v(K\vartheta)) = \sigma(v(A\vartheta))$. 
Thus, from $(\alpha)$, $(\beta.{\rm 1})$, and $(\beta.{\rm 2})$, we get that
$\sigma(v(H)) \geq \sigma(v(K\vartheta))$. Hence, $\eta$ is locally stratified
w.r.t.~$\sigma$.

\medskip

\noindent \emph{Case} 7. Program \( P_{k+1} \) is derived by
negative folding (rule R7). We have that \( P_{k+1}=(P_{k}-\{\gamma
\})\cup \{\eta \} \) and, by the hypothesis that the transformation
sequence is admissible, \( \eta  \) is locally stratified
w.r.t.~$\sigma$ (see Point~(3) of
Definition~\ref{def:adm-transformation}).
\end{proof}

\medskip
In the rest of this Appendix we will consider:

\noindent
\makebox[6mm][r]{(i)}~a local stratification~\( \sigma:
\mathcal{B}_{\omega} \rightarrow W \), 

\noindent
\makebox[6mm][r]{(ii)}~an $\omega$-program $P_0$ 
which is locally stratified w.r.t.~$\sigma$, and 

\noindent
\makebox[6mm][r]{(iii)}~an
admissible transformation sequence $P_0,\ldots,P_n$.

\begin{definition}[Old and New Predicates, Old and New Literals]
\label{def:old-new}$ $Each predicate
occurring in $P_0$ is called an {\em old predicate} and each
predicate introduced by rule {\rm{R1}} is called a {\em new
predicate}. An {\em{old literal}} is a literal with an old predicate.
A {\em{new literal}} is a literal with a new predicate.
\end{definition}

Thus, the new predicates are the ones
which occur in the heads of the clauses of $\mathit{Defs}_n$.

Without loss of generality, we will assume that the admissible
transformation sequence \( P_{0},\ldots ,$ $P_{n} \) is of the form \(
P_{0},\ldots ,P_{d},\ldots ,P_{n} \), with \( 0\! \leq \! d\! \leq
\! n \), where:

\smallskip

\noindent \textup{(1)} the sequence \( P_{0},\ldots ,P_{d} \), with
\( d\! \geq \! 0 \), is constructed by applying \( d \) times the
definition introduction rule, and

\noindent \textup{(2)} the sequence \( P_{d},\ldots ,P_{n} \), is
constructed by applying any rule, except the definition introduction
rule R1.

\smallskip{}

\noindent Thus, $P_{d} = P_{0}\cup\mathit{Defs}_n$. In order to
prove the correctness of the admissible transformation sequence
\(P_{0},\ldots ,P_{n}\) (see
Proposition~\ref{prop:preserv-mu-consistency} below) we will show that
$M(P_d)=M(P_n)$. In order to prove
Proposition~\ref{prop:preserv-mu-consistency}, we introduce
the notion of a {\em proof tree}  which is the proof-theoretic counterpart of the perfect model semantics (see
Theorem~\ref{th:proof_trees_perfect_model} below). 
A proof tree for an atom \( A
\in\mathcal{B}_{\omega} \) and a locally stratified $\omega$-program
\( P \) is constructed by transfinite induction as indicated in the
following definition.

\begin{definition}[Proof Tree for Atoms and Negated Atoms]
\label{def:proof_tree}  Let \( A \) be an atom in
$\mathcal{B}_{\omega}$, let \( P \) be a locally stratified
$\omega$-program, and let \( \sigma \) be a local stratification for
\( P \). Let \( PT_{<A} \) denote the set of proof trees for \( H \) and
\( P \), where $H\in\mathcal{B}_{\omega}$ and \( \sigma (H)<\sigma
(A) \). 

\noindent
A \emph{proof tree} for \( A \) and \( P \) is a finite tree
\( T \) such that\/{\rm :} 

\hangindent=8mm\noindent
\makebox[6mm][r]{\rm{(i)}}~the root of \( T \) is labeled by \( A \),

\noindent
\makebox[6mm][r]{\rm{(ii)}}~a node \( N \) of \( T \) has children labeled by  \(
L_{1},\ldots ,L_{r} \) iff \( N \) is labeled by an atom
\(H\in\mathcal{B}_{\omega}\) and there exist a clause \( \gamma \in
P \) and a valuation \( v \) such that \( v(\gamma ) \) is \(
H\leftarrow L_{1}\wedge \ldots \wedge L_{r} \), and 

\noindent
\makebox[6mm][r]{\rm{(iii)}}~every leaf
of \( T \) is either labeled by the empty conjunction \(
\mathit{true} \) or by a negated atom~\( \neg H \), with
\(H\in\mathcal{B}_{\omega}\), such that there is no proof tree for
\( H \) and \( P \) in \( PT_{<A} \).

\hangindent=0mm\smallskip
\noindent Let $A$ be an atom in $\mathcal{B}_{\omega}$
and \( P \) be a locally stratified
$\omega$-program. 

\noindent A \emph{proof tree} for~$\neg A$ and $P$ exists iff there are no proof
trees for~$A$ and $P$. There exists at most one proof tree 
for~$\neg A$ and $P$ and, when it exists, it consists of the single root node labeled by~$\neg A$.

\end{definition}

\vspace{-3mm} 
\noindent
\begin{remark}\label{rem:proof-tree} 
{\rm(i)}~For any 
$A\in\mathcal{B}_{\omega}$ 
if  there is a proof tree for~$A$ and $P$, then
there is no proof tree for $\neg A$ and~$P$.

\noindent
{\rm(ii)}~In any proof tree if a node $H$ is an ancestor of a 
node $A$ then
$\sigma(H) \geq \sigma (A)$.
\end{remark}

The following theorem, whose proof is omitted, shows that proof trees
can be used for defining a semantics equivalent to the perfect model
semantics.

\begin{theorem}[Proof Tree and Perfect
Model] \label{th:proof_trees_perfect_model}  Let \( P \) be a
locally stratified $\omega$-program. For every \(
A\in\mathcal{B}_{\omega} \), there exists a proof tree for \( A \)
and \( P \) iff \( A\in M(P) \).
\end{theorem}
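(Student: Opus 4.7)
The plan is to prove both directions of the biconditional by transfinite induction on $\sigma(A)$, nested with a finite induction on either the height of the proof tree (for the forward direction) or the iteration index of the intra-stratum least fixed point operator (for the backward direction). This is the standard technique for relating operational derivations to the perfect model of a locally stratified program, adapted here to the $\omega$-setting, where valuations may send variables of type \texttt{ilist} to infinite lists in $\Sigma^{\omega}$. The key well-foundedness property is Remark~\ref{rem:proof-tree}(ii): along any branch of a proof tree, $\sigma$ of the labels is non-increasing, and strict decrease happens when one passes through a negated-atom leaf (by local stratification, $\sigma(v(H)) > \sigma(v(A))$ whenever $\neg A$ is a body literal of the clause used at $v(H)$).

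For the forward direction ($\Rightarrow$), suppose $T$ is a proof tree for $A$ and $P$. Assume inductively that, for every atom $B$ with $\sigma(B) < \sigma(A)$, there is a proof tree for $B$ and $P$ iff $B \in M(P)$. The root of $T$ is labeled by $A$ and has children $L_{1},\ldots,L_{r}$ witnessing a ground clause instance $v(\gamma)$: $A \leftarrow L_{1} \wedge \ldots \wedge L_{r}$ with $\gamma \in P$. By local stratification, $\sigma(A)\!\geq\!\sigma(L_i)$, with strict inequality whenever $L_i$ is negative. By an inner induction on the height of $T$, combined with the outer hypothesis applied to the strictly-smaller-stratum negative leaves, each $L_i$ is satisfied in $M(P)$: positive $L_i$'s are witnessed by the subtree of $T$ rooted at $L_i$ (giving $L_i \in M(P)$); negative $L_i = \neg H$ leaves satisfy $\sigma(H) < \sigma(A)$ and, by definition, admit no proof tree in $PT_{<A}$, hence $H \notin M(P)$ by the outer IH. Combining these, $A \in M(P)$.

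For the backward direction ($\Leftarrow$), suppose $A \in M(P)$ with $\sigma(A)=\alpha$, and assume the statement for all strata below $\alpha$. The perfect model $M(P)$ restricted to stratum $\alpha$ is the least fixed point of the monotone immediate-consequence operator $T_\alpha$ that acts on ground atoms of stratum $\alpha$, evaluating negative and lower-stratum positive literals against $M_{<\alpha} \subseteq M(P)$. Thus $A$ enters the least fixed point at some finite iteration $n$, and we proceed by inner induction on $n$. There exist $\gamma \in P$ and a valuation $v$ with $v(\gamma) = A \leftarrow L_{1}\wedge\ldots\wedge L_{r}$ such that every negative body literal $\neg H$ satisfies $H \notin M_{<\alpha}$ and every positive body literal $B$ either lies in $M_{<\alpha}$ (when $\sigma(B) < \alpha$) or in $T_\alpha\!\uparrow\!(n{-}1)$ (when $\sigma(B) = \alpha$). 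Attach to the root $A$ the children $L_{1},\ldots,L_{r}$: for negative $\neg H$, the outer IH applied contrapositively (using $\sigma(H)<\alpha$) provides the absence of a proof tree for $H$ in $PT_{<A}$, so $\neg H$ is a valid leaf; for positive $B$ with $\sigma(B)<\alpha$, the outer IH supplies the required subtree; for positive $B$ with $\sigma(B)=\alpha$, the inner IH supplies it. The resulting tree is finite because the inner induction on $n$ bounds its construction in finitely many steps.

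The main obstacle is to set up the nested induction cleanly: the outer transfinite induction handles every negated body literal (its stratum drops strictly), whereas positive same-stratum body literals force an orthogonal inner induction. For the forward direction the inner induction is immediate since proof trees are finite by definition. For the backward direction the inner induction relies on the characterization of $M(P)$ at each stratum as the least fixed point of a monotone operator on the definite program obtained by evaluating negative literals against $M_{<\alpha}$; the $\omega$-continuity of this operator is what guarantees that any atom added at stratum $\alpha$ is added at some \emph{finite} iteration, which in turn makes the constructed proof tree finite. A final technicality is to verify that the stratum-wise fixed-point construction behaves as expected in the typed $\omega$-setting, where valuations range over $\mathit{HU} \cup \Sigma \cup \Sigma^{\omega}$; this is a routine adaptation of the classical construction recalled in \cite{ApB94}, since the new base types introduce no new difficulty at the proof-theoretic level.
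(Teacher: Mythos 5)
The paper gives no proof of Theorem~\ref{th:proof_trees_perfect_model} --- it is stated with the remark that its proof is omitted --- so there is nothing to compare your argument against; I can only assess it on its own terms. On those terms it is sound: it is the standard equivalence between a proof-theoretic (derivation-tree) semantics and the iterated-least-fixed-point construction of the perfect model, with the outer transfinite induction on $\sigma(A)$ discharging negative literals and an inner finite induction (tree height in one direction, fixed-point iteration index in the other) discharging positive same-stratum literals. Two points that you treat as routine do deserve an explicit line each. First, when you graft subtrees onto the root in the backward direction, or extract subtrees in the forward direction, you must check that condition~(iii) of Definition~\ref{def:proof_tree} is preserved: a leaf $\neg H$ of a proof tree for $B$ is certified relative to $PT_{<B}$, and you need it certified relative to $PT_{<A}$; this works because local stratification forces $\sigma(H)<\sigma(B)\leq\sigma(A)$ for every negated leaf, so ``no proof tree for $H$ in $PT_{<B}$'' is equivalent to ``no proof tree for $H$ at all'' and hence implies the condition relative to $PT_{<A}$. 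Second, your appeal to $\omega$-continuity of the per-stratum operator is correct but worth justifying in this setting: $\mathcal{B}_{\omega}$ is uncountable (valuations range over $\Sigma^{\omega}$), yet continuity depends only on the finiteness of clause bodies, so every atom of stratum $\alpha$ in $M(P)$ still enters the least fixed point at a finite iteration, which is exactly what makes the constructed tree finite. With these two verifications spelled out, the argument is complete.
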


In order to show that $M(P_d) = M(P_n)$, we will use
Theorem~\ref{th:proof_trees_perfect_model} and we will show that,
given any atom \( A \in\mathcal{B}_{\omega}\), there exists a proof
tree for \( A \) and \( P_{d} \) iff there exists a proof tree for
\( A \) and~\( P_n\). 

\smallskip
In the following, we will use suitable \emph{measures} which we now introduce.

\begin{definition}[Three Measures: $\mathit{size},$ $\mathit{weight},$ $\mu$] 
\label{def:measure} \hangindent=7mm{\rm{(i)}}~For any proof tree $T$, $\mathit{size}(T)$
denotes the number of nodes in~\( T \) labeled by atoms in
$\mathcal{B}_{\omega}$.

\noindent {\rm{(ii)}}~For any atom $A\in \mathcal{B}_{\omega}$, the ordinal $\sigma(A)$ is said 
to be the {\em{stratum}} of $A$. 

\hangindent=7mm\hspace{3mm}For any ordinal $\alpha\in W$, for any proof tree $T$,
$\mbox{\it weight}(\alpha,T)$ is the number of nodes 
of $T$ whose label is an atom with stratum $\alpha$.
{\rm{(}}Recall that $\mathit{true}$, that is,
the empty conjunction of literals, is {\it{not}} an atom.{\rm{)}}



\noindent {\rm{(iii)}}~For any atom \(A\in\mathcal{B}_{\omega} \), we
define\/{\rm{:}}

\smallskip

\hspace*{8mm}$\mbox{\it min-weight}(A) =_{\mathit{def}} \min \{\mbox{\it
weight}(\alpha,T) \mid \sigma(A)\!=\!\alpha \mbox{~and~}$

\makebox[69mm][l]{}$T \mbox{ is a proof tree for $A$ and } P_d\}$.

\smallskip

\noindent {\rm{(iv)}}~For any atom \(A\in\mathcal{B}_{\omega} \) such that 
 there exists at least a proof tree for $A$ and $P_d$, we
define\/{\rm{:}}
\smallskip{}

\hspace*{8mm}\makebox[18mm][l]{$ \mu (A)=_{\mathit{def}}$}
\makebox[45mm][l]{$\langle \sigma (A), \mbox{\it
min-weight}(A) \rangle $}  ${\mathit{if}}~A$ is an old atom

\hspace*{8mm}\makebox[18mm][l]{$ \mu (A)=_{\mathit{def}}$}
\makebox[45mm][l]{$\langle \sigma (A), \mbox{\it
min-weight}(A)-1 \rangle \)}  ${\mathit{if}}~A$ is a new atom

\smallskip

\noindent {\rm{(v)}}~For any atom \(A\in\mathcal{B}_{\omega} \) 
such that there exists no proof tree for $A$ and $P_d$, we
define\/{\rm{:}}

\smallskip

\hspace*{8mm}\makebox[18mm][l]{$\mu (\neg A)=_{\mathit{def}}$} $\langle \sigma (A),0\rangle \).

\end{definition}

\begin{remark}\label{rem:measures}
\noindent {\rm(i)}~{\it{If}} $A$ is an old atom {\it{then}}
 \mbox{\it{min-weight}}$(A)\!>\!0$ {\it{else}}  \mbox{\it{min-weight}}$(A)\!\geq\! 0$.
 
\noindent
{\rm(ii)}~For any atom  \(A\in\mathcal{B}_{\omega} \), \( \mu (A) \) is 
undefined if there is no proof
tree for \( A \) and \( P_d \).
\end{remark}

Now we extend \( \mu \) to conjunctions of literals. First, we
introduce the binary operation \( \oplus:\, (W\times \mathbb
N)^{2}\rightarrow (W\times \mathbb{N}) \), where $W$ is the set of
countable ordinals and $\mathbb N$ is the set of natural numbers,
defined as follows:

\smallskip

$\langle \alpha _{1},m_{1}\rangle \oplus \langle \alpha
_{2},m_{2}\rangle = \left\{ \begin{array}{lll} \langle \alpha
_{1},\, m_{1}\rangle && \mathit{if}\ \ \alpha _{1}> \alpha _{2}\\
\langle \alpha _{1},\, m_{1}\! +\!
m_{2}\rangle && \mathit{if}\ \ \alpha _{1}=\alpha _{2} \\
\langle \alpha _{2},\, m_{2}\rangle && \mathit{if}\ \ \alpha _{1}<
\alpha _{2}
\end{array}\right.$

\smallskip{}

\noindent or equivalently,

\smallskip{}

$\langle \alpha _{1},m_{1}\rangle \oplus \langle \alpha
_{2},m_{2}\rangle =$

$=\langle \max(\alpha _{1},\alpha _{2}), 
\bif \alpha_1\!=\!\alpha_2 \bth m_1\!+\!m_2 
\bel \!(\! \!\bif \alpha_1\!>\!\alpha_2 \bth m_1 \!\bel m_2)\rangle$

\smallskip{}

\noindent Given a conjunction of literals \( L_{1}\wedge \ldots
\wedge L_{r} \) such that, for $i=1,\ldots,r$, with $r\!\geq\! 1$, 
there is a proof tree
for~$L_i$ and $P_d$, we define:
\smallskip{}

\( \mu (L_{1}\wedge \ldots \wedge L_{r})=_{\mathit{def}} \mu (L_{1})\oplus \cdots
\oplus \mu (L_{r}) \)

\smallskip

\noindent For  $\mathit{true}$, which is the empty conjunction of literals, we define:
\smallskip

$\mu(\mathit{true})=_{\mathit{def}}\langle0,0\rangle$

\smallskip
\noindent
Note that the definition of  $\mu(\mathit{true})$ is consistent with
the fact that $\mathit{true}$ is 
the neutral element for $\wedge$ and, thus, 
$\mu(\mathit{true})$ should be the neutral element for $\oplus$, which 
is $\langle0,0\rangle$.
 
%

 The following lemma follows from the definition of the measure $\mu$. Recall that
a new predicate can only be defined in terms of old predicates.

\begin{lemma}[Properties of $\mu$ for a Definition in $P_d$]
\label{lemma:mu-for-a-definition}
\noindent Let $\delta\in P_d$ be a $\sigma$-tight
clause introduced by the
definition rule {\rm{R1}} with $m\!=\!1$, that is, $\delta$
is the only clause defining the head predicate of $\delta$ in $P_d$.
Let $v$ be a valuation and  $v(\delta)$  be of the form\/{\rm{:}}
$K\leftarrow L_1\wedge \ldots \wedge L_q$.
We have that\/{\rm{:}} $\mu(K)=\mu(L_1)\oplus \ldots \oplus \mu(L_q)$.
\end{lemma}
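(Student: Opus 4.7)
The plan is to unpack the definition of $\mu$ on both sides and match their first and second components, exploiting that $\delta$ is the only clause defining the head predicate of~$K$ in $P_d$ together with the $\sigma$-tightness hypothesis. First, I would observe that, since $K$'s head predicate is new and is defined in $P_d$ only by~$\delta$, every proof tree for $K$ and $P_d$ must have root~$K$ with children obtained as the body literals of $v'(\delta)$ for some valuation $v'$ agreeing with~$v$ on the head variables of~$\delta$. By condition~(iv) of rule~R1, every predicate occurring in the body of $\delta$ already occurs in~$P_0$, hence each~$L_i$ is an old literal.

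For the stratum component, I would invoke $\sigma$-tightness of $\delta$: there is a $\sigma$-maximal atom $A$ in the body with $\sigma(K)=\sigma(v(A))$ and $\sigma(v(A))\geq\sigma(v(L_i))$ for every~$i$. Since $\sigma(v(\neg B))=\sigma(v(B))+1$, every negative body literal $\neg B_i$ satisfies $\sigma(v(B_i))<\sigma(K)$, while for positive $L_i=A_i$ we have $\sigma(A_i)\leq\sigma(K)$, with equality realised by $v(A)$. Because the first component of $\mu$ is $\sigma(A_i)$ for positive $A_i$ and $\sigma(B_i)$ for $\neg B_i$, the maximum of the first components of $\mu(L_1),\ldots,\mu(L_q)$ equals $\sigma(K)$, matching that of $\mu(K)$.

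For the weight component, I would dissect any proof tree $T$ for $K$ and $P_d$: its weight at stratum $\sigma(K)$ equals $1$ for the root, plus the contributions of the subtrees rooted at the positive body children, plus $0$ for each negative body child, which is a leaf labelled by a negated atom and therefore not counted by the weight function. By Remark~1(ii), a positive child $A_i$ with $\sigma(A_i)<\sigma(K)$ has all descendants at stratum strictly below $\sigma(K)$ and so contributes $0$ to $\mbox{weight}(\sigma(K),T)$, whereas a positive child with $\sigma(A_i)=\sigma(K)$ contributes at least $\mbox{min-weight}(A_i)$, with equality achievable by choosing an optimal subtree. Minimising over $T$ yields $\mbox{min-weight}(K)=1+\sum\mbox{min-weight}(A_i)$, where the sum ranges over the positive $L_i=A_i$ with $\sigma(A_i)=\sigma(K)$. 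On the other side, $\oplus$ selects exactly these $\mu(A_i)$ and sums their second components, each of which is $\mbox{min-weight}(A_i)$ since $A_i$ is old; hence the second component of $\mu(L_1)\oplus\cdots\oplus\mu(L_q)$ is $\mbox{min-weight}(K)-1$, matching the $-1$ correction in $\mu(K)$ because $K$'s predicate is new.

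I expect the main obstacle to be the bookkeeping around existential variables in~$\delta$: distinct valuations agreeing with $v$ on the head variables may yield body instantiations of different weights, so one must justify that a minimum-weight proof tree for $K$ can be chosen whose root children are exactly the literals in the body of the particular $v(\delta)$ under consideration. The inequality $\mbox{min-weight}(K)\leq 1+\sum\mbox{min-weight}(v(A_i))$ is immediate from constructing a proof tree from the given $v$; the reverse inequality requires selecting $v$ as a minimising valuation, which is the implicit reading under which the identity will be used in the subsequent correctness argument.
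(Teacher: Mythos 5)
Your proof is correct and takes essentially the same route as the paper's own: it matches the first components via $\sigma$-tightness (the $\sigma$-maximal atom must occur positively and sit at stratum $\sigma(K)$, while negated and lower-stratum literals are absorbed by $\oplus$) and then computes \emph{min-weight}$(K)$ by decomposing a minimal proof tree at the root, which is exactly the $\min\sum=\sum\min$ calculation in the paper. You are in fact somewhat more explicit than the paper about why negative children and lower-stratum positive children contribute nothing to the weight, and about the existential-variable caveat, which the paper's proof also leaves implicit.
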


\begin{proof}  Without loss of generality, we may assume that $L_1$ is an atom and $\sigma(K)=\sigma(L_1)$ because $\delta$ is $\sigma$-tight
and, thus, $L_1$ is $\sigma$-maximal. We have that:

\noindent
Thus

\smallskip
 $\mu(K)=\langle \sigma(K), \;${\it min}-{\it weight}$(K)\!-\!1\rangle$.

\smallskip
\noindent
Now, {\it min}-{\it weight}$(K) =$ \{by definition of 
{\it min}-{\it weight}\} $= $

\smallskip \indent
$= \min \{${\it weight}$(\sigma(K),T_K)\}$ where $T_K$ is a 
proof tree for $K$ and $P_d$ $=$

\smallskip \indent
= \{by definition of {\it weight} (see also Figure~\ref{fig:sigma-tight-lemma})\} =

\smallskip \indent
$= \big(\!\min \sum_{i=1,\ldots,q}$ {\it weight}$(\sigma(K),T_i)\!\big)\!+\!1$ 

\smallskip 
\makebox[20mm][l]{}where for $i\!=\!1,\ldots,q$, $T_i$ is a proof tree for $L_i$ and $P_d$ $=$

\smallskip \indent
= \{by definition of {\it weight} and Remark~\ref{rem:proof-tree}\} $=$

\smallskip \indent
$= \big(\!\min \sum_{i=1,\ldots,q\ \wedge\ \sigma(L_i)=\sigma(K)}$ 
{\it weight}$(\sigma(K),T_i)\!\big)\!+\!1=$  \{by $\min \sum  = \sum \min$\} $=$

\medskip \indent 
$= \big(\!\sum_{i=1,\ldots,q\ \wedge\ \sigma(L_i)=\sigma(K)} 
${\it min}-{\it weight}$(L_i)\!\big)\!+\!1=$  \{by $\sigma$-tightness\} $=$

\medskip \indent 
$= \big(\!\sum_{i=1,\ldots,q\ \wedge\ \sigma(L_i)=\sigma(L_1)} 
${\it min}-{\it weight}$(L_i)\!\big)\!+\!1$.

\smallskip\noindent 
Thus, 

\smallskip\noindent 
$\mu(K)=\langle \sigma(L_1),\ \sum_{i=1,\ldots,q\ \wedge\ \sigma(L_i)=\sigma(L_1)} 
${\it min}-{\it weight}$(L_i)\rangle$ = \{by definition of $\oplus\} =$

\smallskip \indent 
= $\mu(L_1)\oplus \ldots \oplus \mu(L_q)$.
\end{proof}

\begin{figure}[ht]
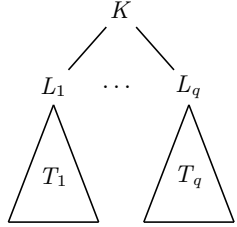

\begin{center}
\VCDraw{%
\begin{VCPicture}{(3,0)(5,5.3)} 
\FixStateDiameter{10mm}
\ChgStateLineWidth{1.3}
\ChgStateLineColor{white}
\SetEdgeArrowWidth{0pt}
\State[K]{(2.5,4.7)}{K}
\State[L_1]{(1,3)}{L1}
\State[L_q]{(4,3)}{Lq}
\State[T_1]{(1,1)}{T1}
\State[T_q]{(4,1)}{Tq}
\State[\ldots]{(2.4,3)}{ldots}

\Point{(1,2.6)}{T1top}
\Point{(0,0)}{T11}
\Point{(2,0)}{T12}

\Point{(4,2.6)}{Tqtop}
\Point{(3,0)}{Tq1}
\Point{(5,0)}{Tq2}
\EdgeL{K}{L1}{}
\EdgeL{K}{Lq}{}
\EdgeL{T1top}{T11}{}
\EdgeL{T1top}{T12}{}
\EdgeL{T11}{T12}{}

\EdgeL{Tqtop}{Tq1}{}
\EdgeL{Tqtop}{Tq2}{}
\EdgeL{Tq1}{Tq2}{}
\end{VCPicture}}
\end{center}
\vspace{-3mm}
\caption{A proof tree for $K$ and $P_d$. There is a valuation $v$ and  a clause $\delta\in P_d$ such that $v(\delta)$ is of the form:
$K\leftarrow L_1\wedge \ldots \wedge L_q$.
For $i=1,\ldots,q$, $T_i$ is a proof tree for $L_i$ and~$P_d$.
\label{fig:sigma-tight-lemma}}
\end{figure}

\medskip

\noindent Let $>$ denote the usual greater-than relation 
on $\mathbb N$. 
Let $>_{\mathit{lex}}$ denote the lexicographic ordering
over $W\times\mathbb N$. 

Let $\pi_1$ and $\pi_2$ denote, respectively, the first and second projection function on pairs. Given a pair
$A=\langle a,b\rangle$ by $A_1$ we denote $a$ and by $A_2$ we denote
$b$.

\begin{lemma}[Properties of $\oplus$]
\label{lemma:properties_of_oplus}
\noindent\makebox[5mm][r]{\rm{(i)}}~$\oplus$ is an associative, commutative
binary operator.

\noindent\makebox[5mm][r]{\rm{(ii)}}~For every $A,B,C\in W\times \mathbb N$ and
${\mathcal R}\in\{\geq_{\mathit lex},>_{\mathit lex}\}$, we have 
that\/{\rm{:}}

\noindent\makebox[8mm][r]{\rm{(ii.1)}}~$A \oplus B\geq_{\mathit{lex}} A$

\noindent\makebox[8mm][r]{\rm{(ii.2)}}~if \makebox[57mm][l]{$A \geq_{\mathit lex}
B$} then $A\oplus C\geq_{\mathit{lex}} B\oplus C$

\noindent\makebox[8mm][r]{\rm{(ii.3)}}~if \makebox[57mm][l]{$A>_{\mathit lex}
B$, $A_1\geq C_1$, and $A_2>0$} then $A\oplus C>_{\mathit{lex}}
B\oplus C$

\noindent\makebox[8mm][r]{\rm{(ii.4)}}~if \makebox[57mm][l]{$A\ {\mathcal R}\ B$
and $A_1>C_1$} then $A\ {\mathcal R}\ B\oplus C$

\noindent\makebox[8mm][r]{\rm{(ii.5)}}~if  \makebox[57mm][l]{$A\ {\mathcal R}\
B\oplus C$} then $A\ {\mathcal R}\ B$ and $A\ {\mathcal R}\ C$
\end{lemma}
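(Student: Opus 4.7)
The plan is to dispatch the whole lemma by elementary case analysis on the three possible orderings of the first components (the strata) that enter each application of $\oplus$. Throughout I use the shorthand $A=\langle \alpha_A, m_A\rangle$, $B=\langle \alpha_B, m_B\rangle$, $C=\langle \alpha_C, m_C\rangle$ and I use the equivalent ``max/if--then--else'' form of $\oplus$ given in the paper.

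For part (i), commutativity is immediate because both clauses of the definition of $\oplus$ are symmetric in $(\alpha_1,m_1)$ and $(\alpha_2,m_2)$. Associativity is handled by splitting on the order type of $\alpha_A,\alpha_B,\alpha_C$: in each of the thirteen cases the first component of both $(A\oplus B)\oplus C$ and $A\oplus(B\oplus C)$ is $\max(\alpha_A,\alpha_B,\alpha_C)$, and the second component is the sum (in $\mathbb{N}$) of exactly those $m_i$ whose $\alpha_i$ attains that maximum; associativity of $+$ in $\mathbb{N}$ then gives the result.

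For (ii.1)--(ii.3) I would case split on the relations among $\alpha_A,\alpha_B,\alpha_C$. Claim (ii.1) is straightforward: either $\alpha_A>\alpha_B$ (so $A\oplus B=A$), or $\alpha_A=\alpha_B$ (so $A\oplus B=\langle\alpha_A,m_A+m_B\rangle\geq_{\mathit{lex}}A$), or $\alpha_A<\alpha_B$ (so $A\oplus B=B$ and the first coordinate strictly dominates). Claim (ii.2) is a routine checking of the eight or nine subcases arising from $A\geq_{\mathit{lex}}B$ combined with the position of $\alpha_C$. The only delicate point is (ii.3), where the extra hypothesis $A_2>0$ is used precisely in the subcase $\alpha_A=\alpha_C>\alpha_B$: here $A\oplus C=\langle\alpha_A,m_A+m_C\rangle$ and $B\oplus C=\langle\alpha_C,m_C\rangle=\langle\alpha_A,m_C\rangle$, and the strict inequality $A\oplus C>_{\mathit{lex}}B\oplus C$ holds exactly because $m_A=A_2>0$. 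In all other subcases either the first coordinate of $A\oplus C$ strictly dominates that of $B\oplus C$, or the two first coordinates agree and the strict inequality in the second coordinate is inherited directly from $A>_{\mathit{lex}}B$.

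Part (ii.4) is a short case split on $\alpha_B$ versus $\alpha_C$: if $\alpha_B\geq\alpha_C$ then $(B\oplus C)_1=\alpha_B$ and the hypothesis $A\,\mathcal{R}\,B$ gives $A\,\mathcal{R}\,B\oplus C$ (the second coordinate of $B\oplus C$ being either $m_B$ or $m_B+m_C$, which does not matter since $\alpha_A>\alpha_C\geq(B\oplus C)_1$ suffices when $\alpha_A>\alpha_B$, and in the remaining subcase the conclusion follows from $A\,\mathcal{R}\,B$); if $\alpha_B<\alpha_C$ then $B\oplus C=C$ and $\alpha_A>\alpha_C$ already forces $A>_{\mathit{lex}}B\oplus C$, which yields both relations in $\mathcal{R}$. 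Finally, (ii.5) is the easiest: by (ii.1) and the commutativity established in (i) we have $B\oplus C\geq_{\mathit{lex}}B$ and $B\oplus C\geq_{\mathit{lex}}C$, so from $A\,\mathcal{R}\,B\oplus C$ the transitivity of $\geq_{\mathit{lex}}$ (respectively, the transitivity of $>_{\mathit{lex}}$ combined with $\geq_{\mathit{lex}}$) yields $A\,\mathcal{R}\,B$ and $A\,\mathcal{R}\,C$. The only real obstacle is the bookkeeping in (ii.3), where one must be careful to identify the unique subcase that consumes the $A_2>0$ hypothesis; every other part of the lemma is mechanical.
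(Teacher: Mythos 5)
Your proposal is correct, and for parts (i)--(ii.4) it follows essentially the same route as the paper: a direct case analysis on the relative order of the first components, with the same identification of $\alpha_A=\alpha_C>\alpha_B$ as the one subcase of (ii.3) that actually consumes the hypothesis $A_2>0$. Two local differences are worth noting. For associativity in (i) the paper simply declares it immediate, whereas you extract the invariant that $\oplus$ returns the maximum of the first components paired with the sum of exactly those second components whose first component attains that maximum; this is a cleaner justification and reduces associativity to associativity of $+$ on $\mathbb{N}$. For (ii.5) you replace the paper's explicit case split (on $A=B\oplus C$ versus $A>_{\mathit{lex}}B\oplus C$ and on the order of $B_1,C_1$) by the one-line derivation from (ii.1), commutativity, and transitivity of $\geq_{\mathit{lex}}$ (respectively of $>_{\mathit{lex}}$ composed with $\geq_{\mathit{lex}}$); since $\geq_{\mathit{lex}}$ is a total order this is sound and strictly shorter than what the paper does. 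The only blemish is the parenthetical in your treatment of (ii.4), where the chain $\alpha_A>\alpha_C\geq(B\oplus C)_1$ only holds in the subcase $\alpha_B=\alpha_C$; in the subcase $\alpha_B>\alpha_C$ one has $(B\oplus C)_1=\alpha_B$ and the conclusion instead follows because $B\oplus C=B$. The surrounding text shows you are aware of this, so it is a slip of exposition rather than a gap.
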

 
\begin{proof} (i)~It follows immediately from the definition.

\smallskip
\noindent(ii.1)~By cases. If $A_1>B_1$ then $A \oplus B =
A\ \geq_{\mathit{lex}}A$. If $A_1=B_1$ then $A \oplus B  =
\langle A_1,A_2+B_2\rangle\ \geq_{\mathit{lex}}A$. If
$B_1>A_1$ then $A \oplus B = B >_{\mathit{lex}}A$.

\medskip
\noindent(ii.2)~
Let us consider the following two pairs:

$(\alpha)=_{\mathit{def}}A\oplus C=$

$= \langle \max(A_1,C_1), \bif A_1\!=\!C_1\bth A_2\!+\!C_2 
\bel \! \bif A_1\!>\!C_1 \bth A_2 \bel C_2\rangle$ 

\noindent
and 

$(\beta)=_{\mathit{def}}B\oplus C =$

$= \langle \max(B_1,C_1),
 \bif B_1\!=\!C_1\bth B_2\!+\!C_2 
\bel \! \bif B_1\!>\!C_1 \bth B_2 \bel C_2\rangle$.

\noindent 
We have to show that $(\alpha)\gel (\beta)$.

\noindent 
Since $A\gel B$, there are two cases. Case~(1): $A=B$,  and  
Case~(2): $A\gl B$. Case~(2) consists of two subcases:
Case~(2.1): $A_1>B_1$,  and Case~(2.2): $A_1=B_1$  and $A_2>B_2$.

\noindent
In Case~(1) we have that $(\alpha)= (\beta)$. Thus, we get $(\alpha)\gel (\beta)$ as desired.

\noindent
In Case~(2.1) we consider two subcases: Case~(2.1.1): $A_1>B_1$
and $B_1\geq C_1$, and Case~(2.1.2): $A_1>B_1$
and $B_1< C_1$.

\noindent
In Case~(2.1.1) we have that $\max(A_1,C_1)>\max(B_1,C_1)$ and thus, 
we get that $(\alpha)\gel (\beta)$.

\noindent
In Case~(2.1.2) $(\beta)$ reduces to $\langle C_1,  C_2\rangle$
and, since $A_1>B_1\geq C_1$,  we get that $(\alpha)\gel (\beta)$.

\noindent
In Case~(2.2) since $A_1=B_1$, $(\beta)$ reduces to

$\langle \max(A_1,C_1), \bif A_1=C_1\bth B_2\!+\!C_2 
\bel \! \bif A_1>C_1 \bth B_2 \bel C_2\rangle$

\noindent
and, since $A_2>B_2$, we get that $(\alpha)\gel (\beta)$.

%




\medskip
\noindent(ii.3)~Let us consider again the two pairs:

$(\alpha)=_{\mathit{def}}A\oplus C=$

$=\langle \max(A_1,C_1), \bif A_1\!=\!C_1\bth A_2\!+\!C_2 
\bel \! \bif A_1\!>\!C_1 \bth A_2 \bel C_2\rangle$ 

\noindent
and 

$(\beta)=_{\mathit{def}}B\oplus C =$

$= \langle 
\max(B_1,C_1), \bif B_1\!=\!C_1\bth B_2\!+\!C_2 
\bel \! \bif B_1\!>\!C_1 \bth B_2 \bel C_2\rangle$.

\noindent
We have to show $(\alpha)\gl (\beta)$.

\noindent
Since $A\gl B$ there are two cases. Case~(1): $A_1>B_1$ and $A_1\geq C_1$ 
and $A_2>0$. Case~(2):  $A_1=B_1$ and $A_2> B_2$ 
and $A_1\geq C_1$ and $A_2>0$.

\noindent
For Case~(1) we consider two subcases: Case~(1.1) $A_1=C_1$ and 
Case~(1.2) $A_1>C_1$.

\noindent
In Case (1.1) we have that $(\alpha)$ reduces to $\langle C_1,A_2+C_2\rangle $
and 

\noindent
$(\beta)$ reduces to $\langle C_1, \bif B_1=C_1\bth B_2\!+\!C_2 
\bel \! \bif B_1>C_1 \bth B_2 \bel C_2 \rangle $

\noindent
and since $A_1>B_1$ and $A_1=C_1$, we get that $(\beta)$ further reduces to 
$\langle C_1,  C_2 \rangle $ and, since $A_2>0$, we get that $(\alpha)\gl (\beta)$.

\noindent
In Case (1.2) we have that $(\alpha)$ reduces to 
$\langle A_1,  \ldots \rangle $ and 
$(\beta)$ reduces to 

\noindent
$\langle \max(B_1,C_1), \ldots \rangle $, and since 
$A_1>B_1$  and $A_1>C_1$ we 
get that $(\alpha)\gl (\beta)$.

\smallskip

\noindent
For Case~(2) we consider two subcases: Case~(2.1) $A_1=B_1=C_1$ and 
Case~(2.2) $A_1=B_1>C_1$.

\noindent
In Case (2.1) we have that $(\alpha)$ reduces to 
$\langle A_1,  A_2+C_2 \rangle $ and 
$(\beta)$ reduces to $\langle A_1,  B_2+C_2 \rangle $, and since 
in Case~(2) we have that $A_2>B_2$, we get that $(\alpha)\gl (\beta)$.

\noindent
In Case (2.2) we have that $(\alpha)$ reduces to 
$\langle A_1,  A_2 \rangle $ and 
$(\beta)$ reduces to $\langle B_1,  B_2 \rangle $, and since 
$A_1=B_1$ and in Case~(2) we have that $A_2>B_2$, we 
get that $(\alpha)\gl (\beta)$.

%



\medskip
\noindent(ii.4)~We have that:

$B\oplus C=\langle \max(B_1,C_1), \bif B_1=C_1\bth B_2\!+\!C_2 
\bel \bif B_1>C_1 \bth B_2 \bel C_2\rangle$

\noindent
We reason by cases. Case~(1): we assume $A=B$ and $A_1>C_1$
and we show $A\gel B \oplus C$. Case~(2):  we assume $A\gl B$ and $A_1>C_1$
and we show $A\gl B \oplus C$.

\smallskip

\noindent
Case~(1). Since $A=B$, from $A_1>C_1$ we get that $B_1>C_1$ and thus, $B \oplus C=B$.
Thus, $A\gel B \oplus C$.

\smallskip
\noindent
Case~(2). There are two subcases: (2.1)~$A_1>B_1$ and $A_1>C_1$, and
(2.2)~($A_1=B_1$ and $A_2>B_2$) and $A_1>C_1$.

\noindent
Case~(2.1). We have that: $A_1>\max(B_1,C_1)$ and thus, $A\gel B \oplus C$.

\noindent
Case~(2.2). Since $A_1=B_1$ and $A_1>C_1$, we have that: $B \oplus C=\langle B_1,B_2 \rangle=_{\mathit{def}}B$. Since
$A_1=B_1$ and $A_2>B_2$ we get $A\gl B$, and thus, $A\gl B \oplus C$.


\medskip
\noindent(ii.5)~We have that:

$B\oplus C=\langle \max(B_1,C_1), \bif B_1=C_1\bth B_2\!+\!C_2 
\bel \bif B_1>C_1 \bth B_2 \bel C_2\rangle$

\noindent
We reason by cases: Case~(1)~$A=B\oplus C$, and Case~(2)~$A\gl B\oplus C$.
In order to show Point~(ii.5) in Case (1) we have to show 
$A\gel B$  and $A\gel C$,
and in Case~(2) we have to show $A\gl B$  and $A\gl C$.

\smallskip
\noindent
Case~(1)~Assume $A=B\oplus C$. 

\noindent Case (1.1): $B_1=C_1$. Thus, $A_1=B_1=C_1$
and $A_2=B_2+C_2$. Thus, $A\gel B$  and $A\gel C$.

\noindent
Case (1.2): $B_1>C_1$. Thus, $A_1=B_1$
and $A_2=B_2$. Thus, $A\gel B$  and $A\gel C$.

\noindent
Case (1.3): $B_1<C_1$. Like Case~(1.2), by interchanging
$B$ and $C$. 

\smallskip
\noindent
Case~(2)~Assume $A\gl B\oplus C$. 

\noindent Case (2.1): $A_1>\max(B_1,C_1)$. 
We get: $A\gl B$  and $A\gl C$. 

\noindent Case~(2.2): $A_1=\max(B_1,C_1)$.

\noindent 
Case~(2.2.1): $B_1=C_1$. We have:  $A_1=B_1=C_1$ and, since $A\gl B\oplus C$, 
we have: $A_2>B_2+C_2$. Thus, we get $A\gl B$  and $A\gl C$. 

\noindent 
Case~(2.2.2): $B_1>C_1$. Thus, $A_1=\max(B_1,C_1)=B_1$. Since $A\gl B\oplus C$ and  $A_1=\pi_1( B\oplus C)$, we
 have:  $A_2>\pi_2( B\oplus C)$, that is, 
 
 $A_2>\bif B_1\!=\!C_1\bth B_2\!+\!C_2 
\bel \bif B_1>C_1 \bth B_2 \bel C_2$, that is,

$A_2> B_2$. 

\noindent Thus, we get $A\gl B$  and, since $B_1>C_1$, we also get 
$A\gl C$. 

\noindent 
Case~(2.2.3): $B_1<C_1$. Like Case~(2.2.2), by interchanging $B$ and $C$.
\end{proof}

\begin{notation}
{\rm{By $\overline L$ we will denote  the
negative literal~$\neg L$, if $L$ is a positive literal, and the positive
literal $A$, if $L$ is the negative literal $\neg A$.}}
\end{notation}

\begin{lemma}\label{lemma:properties_of_mu}
For all atoms \(A\in\mathcal{B}_{\omega}\), literals $L_1,\ldots,L_m$, which are either atoms in $\mathcal{B}_{\omega}$ or
negation of atoms in $\mathcal{B}_{\omega}$, if for
$i=1,\ldots,m$, $\sigma(A)\geq\sigma(L_i)$ then
$\mu(\overline{A})\geq_{\mathit{lex}}\mu(\overline{L}_1)\oplus\cdots
\oplus\mu(\overline{L}_m)$.
\end{lemma}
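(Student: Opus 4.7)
The key observation is that $\mu(\overline A) = \mu(\neg A) = \langle \sigma(A), 0\rangle$ by Definition~\ref{def:measure}(v), so the claim essentially reduces to showing that the right-hand side has first component at most $\sigma(A)$ and, whenever that first component equals $\sigma(A)$, its second component is $0$. I proceed by induction on $m$, exploiting the algebra of $\oplus$ established in Lemma~\ref{lemma:properties_of_oplus}.

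In the base case $m = 0$, the empty $\oplus$-sum equals $\mu(\mathit{true}) = \langle 0, 0\rangle$, and $\langle \sigma(A),0\rangle \gel \langle 0,0\rangle$ trivially. For the inductive step, set $B = \mu(\overline L_1) \oplus \cdots \oplus \mu(\overline L_{m-1})$ and $C = \mu(\overline L_m)$; the inductive hypothesis gives $\mu(\overline A) \gel B$, which in particular forces $B_1 \leq \sigma(A)$.

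To analyze $C$, I split on the polarity of $L_m$. If $L_m$ is an atom, then $\overline L_m = \neg L_m$ and $C = \langle \sigma(L_m), 0\rangle$, so $C_1 = \sigma(L_m) \leq \sigma(A)$ and $C_2 = 0$. If $L_m = \neg B_m$ is negative, then $\overline L_m = B_m$ and $C_1 = \sigma(B_m) = \sigma(L_m) - 1 < \sigma(L_m) \leq \sigma(A)$, so $C_1 < \sigma(A)$ strictly. Whenever $C_1 < \sigma(A)$, I apply property~(ii.4) of Lemma~\ref{lemma:properties_of_oplus} with $\mathcal R = \gel$ to conclude $\mu(\overline A) \gel B \oplus C$. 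In the remaining case, $L_m$ is an atom with $\sigma(L_m) = \sigma(A)$ and $C = \langle \sigma(A), 0\rangle = \mu(\overline A)$; in the subcase $B_1 < \sigma(A)$ one obtains $B \oplus C = C = \mu(\overline A)$, while in the subcase $B_1 = \sigma(A)$ one obtains $B \oplus C = \langle \sigma(A), B_2 + 0\rangle = B$, and both subcases yield $\mu(\overline A) \gel B \oplus C$.

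The main obstacle is precisely the boundary situation $\sigma(L_m) = \sigma(A)$ with $L_m$ positive, since property~(ii.4) of Lemma~\ref{lemma:properties_of_oplus} requires the strict bound $\pi_1(\mu(\overline A)) > C_1$. Closing this gap by a direct computation of $B \oplus C$, crucially exploiting that $C_2 = 0$ in that case, is the subtle step of the argument.
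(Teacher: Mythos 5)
Your proof is correct and follows essentially the same route as the paper's: induction on $m$, with a case split on the polarity of the last literal, using Lemma~\ref{lemma:properties_of_oplus}~(ii.4) whenever $\pi_1(\mu(\overline{L}_m))$ is strictly below $\sigma(A)$ and a direct computation of $\oplus$ in the boundary case (the paper organizes that boundary case by comparing $\sigma(\overline{L}_{j+1})$ with the first component of the accumulated sum rather than with $\sigma(A)$, but the computations coincide). No gaps.
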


\begin{proof}
The proof is by induction on $m$ by recalling that the $\oplus$ is 
associative and commutative. We do the induction step. The base case 
can be proved similarly to Cases~(1) and (2.1) below.

We assume that $\mu(\overline{A})\geq_{\mathit
lex}\mu(\overline{L}_1)\oplus\cdots\oplus\mu(\overline{L}_j)$, for some $j\geq 1$, and we show that $\mu(\overline{A})\geq_{\mathit
lex}\mu(\overline{L}_1)\oplus\cdots
\oplus\mu(\overline{L}_{j}) \oplus\mu(\overline{L}_{j+1})$.

By definition, $\mu(\overline{A})=\langle\sigma(A),0\rangle$. Let
$\mu(\overline{L}_1)\oplus\cdots\oplus\mu(\overline{L}_j)=\langle\beta,
w_1\rangle$, for some $\beta \in W$  and 
$w_1\in\mathbb{N}$. Thus, the induction hypothesis can be stated as follows:
$\langle\sigma(A),0\rangle
\geq_{\mathit lex}\langle\beta,w_1\rangle$.

We have the 
following two cases. 

Case (1). Assume that $\overline{L}_{j+1}$ is a positive literal, say $B$. 
Let $\mu(B)$ be $\langle\sigma(B),w_2\rangle$, for some $w_2 \in W$.
Since $\sigma(A)\geq \sigma({L}_{j+1}) > \sigma(B)$,  by
Lemma~\ref{lemma:properties_of_oplus}~(ii.4) we get that
$\mu(\overline{A})\geq_{\mathit lex}\mu(\overline{L}_1)\oplus\cdots\oplus\mu(\overline{L}_{j}) \oplus\mu(B)$. 

Case (2). Assume that $\overline{L}_{j+1}$ is a negative literal, say $\neg B$. Let $\mu(\neg B)$ be $\langle\sigma(B),0\rangle$.
By hypothesis, we have $\sigma(A)\geq\sigma(L_{j+1})=\sigma(B)$. We have three subcases.

\noindent 
Case~(2.1).~$\sigma(B)>\beta$. By induction hypothesis we have that $\langle\sigma(A),0\rangle
\geq_{\mathit lex}\langle\beta,w_1\rangle$. We also have that $\langle\beta,w_1\rangle \oplus\langle\sigma(B),
0\rangle=\langle\sigma(B),0\rangle$  and $\langle\sigma(A),0\rangle
\geq_{\mathit lex}\langle\sigma(B),0\rangle$. Thus, we get 
$\langle\sigma(A),0\rangle
\geq_{\mathit lex} \langle\beta,w_1\rangle \oplus\langle\sigma(B),0\rangle$.

\noindent 
Case~(2.2).~$\sigma(B)=\beta$. By induction hypothesis we 
have that $\langle\sigma(A),0\rangle
\geq_{\mathit lex}\langle\beta,w_1\rangle$.
We also have that 
 $\langle\beta,w_1\rangle\oplus
\langle\sigma(B),0\rangle=\langle\beta, w_1\rangle$.
Thus, we get 
$\langle\sigma(A),0\rangle
\geq_{\mathit lex} \langle\beta,w_1\rangle \oplus\langle\sigma(B),0\rangle$.

\noindent 
Case~(2.3).~$\sigma(B)<\beta$. As Case~(2.2). 
\end{proof} 

\medskip
\noindent Now we introduce the notion of a {\em $\mu$-consistent}
proof tree which will be used in
Proposition~\ref{prop:preserv-mu-consistency} below. This notion is
a generalization of the one of a {\em rank-consistent} proof tree
introduced in~\cite{TaS84}.

\begin{definition}[$\sigma$-max Derived   Clause]
We say that a clause $\gamma$ in a program $P_k$ of the sequence
$P_d,\ldots,P_n$ is a $\sigma${\em -max derived} clause if~$\gamma$ is a
descendant of a clause $\beta$ in $P_{j}$, with $d\!<\! j\!\leq\! k$, such that
$\beta$ has been derived by unfolding a clause $\alpha$ in $P_{j-1}$
w.r.t.~an old $\sigma$-maximal atom. {\rm{(}}Recall that, by definition, a clause is
a descendant of itself.{\rm{)}}
\end{definition}

\begin{definition}[\(\mu\)-consistent Proof Tree] \label{def:mu-consistency}
Let \( A \) be an atom in \(\mathcal{B}_{\omega} \) and $P_k$ be a
program in the transformation sequence \( P_{d},\ldots,P_{n} \). We
say that a proof tree \( T \) for \( A \) and \( P_{k} \) is
\mbox{\emph{\(\mu\)-consistent}} if for all atoms \( H\), all literals \(
L_{1},\ldots ,L_{r} \) which are the children of \( H \) in \( T \),
where $H\leftarrow L_{1}\wedge\ldots \wedge L_{r}$ is a clause
$v(\gamma)$ for some valuation $v$ and some clause $\gamma\in P_k$,  we
have that\/{\rm{:}}

\noindent {\em if} $H$ has a new predicate and~$\gamma$ is not
$\sigma$-max derived 

\noindent 
\makebox[7mm][l]{{\em then}} \( \mu (H)\geq_{\mathit{lex}}\mu
(L_{1})\oplus \cdots \oplus \mu (L_{r}) \) 

\noindent 
\makebox[7mm][l]{{\em else}} \( \mu
(H)>_{\mathit{lex}}\mu (L_{1})\oplus \cdots \oplus \mu (L_{r}) \).

The proof tree for the negated atom $\neg A$ and $P_k$, if any, is $\mu$-consistent.
{\rm{(}}Recall that this proof tree, if it exists, consists of the single root node labeled by~$\neg A$.{\rm{)}}

\end{definition}

Let us consider the following ordering on $\mathcal
B_{\omega}$ which will be used in the proof of
Proposition~\ref{prop:preserv-mu-consistency}.

\begin{definition}[Ordering $\succ$] \label{def:curly-greater}
Given any two atoms 
$A_1,A_2\in
\mathcal B_{\omega}$, we write $A_1 \succ A_2$ if either

\makebox[5mm][l]{\rm{(i)}}~$\mu(A_1)
>_{\mathit{lex}} \mu(A_2)$, or

\makebox[5mm][l]{\rm{(ii)}}~$\mu(A_1) = \mu(A_2)$
and $A_1$ is a new atom and
$A_2$ is an old atom. 

\noindent
By abuse of notation, 
given any two atoms $A_1,A_2\in \mathcal B_{\omega}$, 
we write $A_1 \succ \neg A_2$ if  $\sigma(A_1)\! >\! \sigma (A_2)$ 
$($that is, $\sigma(A_1)\! \geq\! \sigma (A_2))$.
\end{definition}

We have that $\succ$ is a well-founded  ordering on $B_{\omega}$.

\begin{lemma}\label{lemma:wfo} Let $T$ be a $\mu$-consistent proof tree 
for an atom $A$
and a program $P$. Then, for every atom \(B\) and literal \( L \)
which is a child of \( B \) in \( T \), we have $B\succ L$.
\end{lemma}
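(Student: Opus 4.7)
The plan is to fix an atom $B$ in $T$ together with one of its children $L$, and to argue by cases on (a)~whether $L$ is a positive or a negative literal, and (b)~which branch of the conditional of Definition~\ref{def:mu-consistency} applies at $B$. By construction of a proof tree, there is some $\gamma\in P$ and valuation $v$ with $v(\gamma)=B\leftarrow L_1\wedge\ldots\wedge L_r$ and $L=L_i$ for some $i$. Since Lemma~\ref{lem:sigma-preservation} gives local stratification of $P$ w.r.t.~$\sigma$, we have $\sigma(B)\geq\sigma(L_j)$ for every~$j$.

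The negative case is immediate: if $L=\neg A_i$, local stratification yields $\sigma(B)\geq\sigma(\neg A_i)=\sigma(A_i)\!+\!1$, so the second clause of Definition~\ref{def:curly-greater} gives $B\succ L$. For the positive case, I iterate Lemma~\ref{lemma:properties_of_oplus}(ii.1) together with the associativity and commutativity of $\oplus$ (Lemma~\ref{lemma:properties_of_oplus}(i)) to obtain
$$\mu(L_1)\oplus\cdots\oplus\mu(L_r)\,\geq_{\mathit{lex}}\,\mu(L_i).$$
If $B$ falls in the ``else'' branch of $\mu$-consistency, then $\mu(B)>_{\mathit{lex}}\mu(L_1)\oplus\cdots\oplus\mu(L_r)\geq_{\mathit{lex}}\mu(L_i)$, and clause~(i) of Definition~\ref{def:curly-greater} gives $B\succ L_i$. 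In the ``then'' branch ($B$ new and $\gamma$ not $\sigma$-max derived) I only obtain $\mu(B)\geq_{\mathit{lex}}\mu(L_i)$: strict inequality again settles things through clause~(i), whereas equality $\mu(B)=\mu(L_i)$ must be resolved through clause~(ii), which requires $L_i$ to be an old atom.

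The main obstacle is therefore closing this last equality subcase. My plan is to show as an auxiliary invariant that, whenever $B$ is new, $\gamma$ is not $\sigma$-max derived and $\mu(B)=\mu(L_i)$, the literal $L_i$ is necessarily old. The intended argument traces back along the descendant chain of $\gamma$ to a definition clause $\delta\in\mathit{Defs}_n$ whose body contains only old literals by condition~(iv) of rule~R1; along this chain the only way a new body literal can appear without introducing an unfolding step w.r.t.~an old $\sigma$-maximal atom is through a folding step (R6 or R7). Such a folding step, in view of Lemma~\ref{lemma:mu-for-a-definition} applied to the definition being folded, strictly decreases the $\mu$-value of the folded literal relative to the stratum of $B$, so an equality $\mu(B)=\mu(L_i)$ with a new $L_i$ cannot occur. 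Once this invariant is established, clause~(ii) of Definition~\ref{def:curly-greater} finishes the remaining subcase and the lemma follows.
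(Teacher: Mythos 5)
Your handling of the negative-literal case, of the case where $B$ is old, and of the case where $B$ is new and $\gamma$ is $\sigma$-max derived coincides with the paper's proof. The problem lies entirely in the subcase you yourself single out as the ``main obstacle'' ($B$ new, $\gamma$ not $\sigma$-max derived, $\mu(B)=\mu(L_i)$), and the plan you sketch for it does not work. The pivotal claim that a folding step ``strictly decreases the $\mu$-value of the folded literal'' contradicts Lemma~\ref{lemma:mu-for-a-definition}, which asserts the exact equality $\mu(K)=\mu(L_1)\oplus\cdots\oplus\mu(L_q)$: folding is $\mu$-neutral by design, and the $-1$ offset in the definition of $\mu$ for new atoms exists precisely to make it so. Hence no contradiction can be extracted from $\mu(B)=\mu(L_i)$ by a quantitative decrease argument.

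The ingredient your sketch never invokes is Condition~(2) of Definition~\ref{def:adm-transformation}, and it is what actually closes this subcase. Rules R2--R7 preserve the head predicate, so every ancestor of $\gamma$ under the descendant relation has the same new head predicate as $B$; Condition~(2.2.i) therefore never applies, and any positive folding performed on such an ancestor forces, via Condition~(2.2.ii), the resulting clause --- and all of its descendants, including $\gamma$ --- to be $\sigma$-max derived. Since $\gamma$ is assumed not to be $\sigma$-max derived, no ancestor of $\gamma$ was obtained by positive folding, and so the positive body literal $L_i$ has an old predicate (its lineage goes back to a clause of $\mathit{Defs}_n$, whose body contains only old predicates by condition~(iv) of rule~R1). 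Once $L_i$ is known to be old, clause~(ii) of Definition~\ref{def:curly-greater} handles the equality $\mu(B)=\mu(L_i)$ directly; no argument about $\mu$-values under folding is needed. This syntactic step is the substance of the paper's proof of the lemma, and it is the part missing from, and misreplaced in, your proposal.
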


\begin{proof}
Let $L_1,\ldots,L_r$ be the children of $B$ in $T$, for some \(
\gamma \in P \) and valuation \( v \) such that \( v(\gamma ) \) is
\( B\leftarrow L_{1}\wedge \ldots \wedge L_{r} \), and let $L$ be
the literal $L_i$. If $L_i$ is the negated atom $\neg A_i$ then,
since $P$ is locally stratified w.r.t.~$\sigma$, we have $\sigma(B)
> \sigma (A_i)$ and $B\succ L_i$. Let us now consider the case where
$L_i$ is positive. 

If the predicate of $B$ is old then, by
$\mu$-consistency of~$T$, $\mu(B)>_{\mathit{lex}}\mu(L_1)
\oplus\cdots\oplus\mu(L_r)$. By
Lemma~\ref{lemma:properties_of_oplus}~(ii.1), $\mu(L_1)
\oplus\cdots\oplus\mu(L_r)\geq_{\mathit{lex}}\mu(L_i)$ and, thus,
$\mu(B)>_{\mathit{lex}}\mu(L_i)$. By definition of~$\succ$, we have 
that $B\succ L_i$. 

If the predicate of $B$ is new and $\gamma$ is $\sigma$-max
derived then, by $\mu$-consistency of $T$,
$\mu(B)>_{\mathit{lex}}\mu(L_i)$ and, thus,
$B\succ L_i$. 

Finally, if the predicate of $B$ is new and
 $\gamma$ is {\emph{not}} $\sigma$-max derived then
$\gamma$ is a descendant of a clause that has not been derived 
by folding and, thus, the predicate of $L_i$ is old. By
$\mu$-consistency, $\mu(B)\geq_{\mathit{lex}}\mu(L_i)$ and, since
the predicate of $B$ is new and the one of $L_i$ is old, we have
$B\succ L_i$.
\end{proof}

\begin{lemma} \label{lem:muconsistent-prooftree}
Consider the locally stratified $\omega$-program $P_d$ of the admissible transformation 
sequence $P_0,\ldots,P_d,\ldots,P_n$,
where\/{\rm{:}} {\rm{(1)}}~\(
P_{0},\ldots ,P_{d} \) is constructed by using rule {\rm{(R1)}}, and
 {\rm{(2)}}~$P_{d},\ldots,$ $P_{n}$ is constructed by 
 applying rules {\rm{(R2)}}--{\rm{(R7)}}. 
If there exists a proof tree
for $A$ and $P_d$ then there exists a \mbox{$\mu$-consistent} proof tree for $A$ and $P_d$.
\end{lemma}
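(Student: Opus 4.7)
The plan is to argue by well-founded induction on the ordering $\succ$ of Definition~\ref{def:curly-greater}. Fix an atom $A$ that has a proof tree for $A$ and $P_d$, and assume inductively that every atom $B$ with $B \prec A$ and having a proof tree for $B$ and $P_d$ already admits a $\mu$-consistent proof tree. The goal is to construct a $\mu$-consistent proof tree $T^*_A$ for $A$ and $P_d$.

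The key observation is that in $P_d$ every clause has a body containing only old literals: clauses in $P_0$ introduce no new predicates, and by condition~(iv) of rule~R1 each clause in $\mathit{Defs}_n$ uses only predicates occurring in $P_0$. Moreover, no clause of $P_d$ is $\sigma$-max derived, since $\sigma$-max derivedness requires the existence of an unfolding step after stage~$d$. At an atom node labeled by~$H$ with $\alpha = \sigma(H)$, Remark~\ref{rem:proof-tree}(ii) combined with the definition of $\mbox{\it min-weight}$ implies that, for any clause $\gamma\in P_d$ and valuation $v$ with $v(\gamma) = H \leftarrow L_1 \wedge \ldots \wedge L_r$, the minimum weight at stratum $\alpha$ of a proof tree for $H$ with $v(\gamma)$ at the root equals $1 + \sum_i \mbox{\it min-weight}(L_i)$, where the sum ranges over positive atom literals $L_i$ with $\sigma(L_i) = \alpha$. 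I choose $\gamma,v$ to be the clause and valuation appearing at the root of a minimum-weight proof tree for $H$ and $P_d$; this yields the identity $\mbox{\it min-weight}(H) = 1 + \sum_i \mbox{\it min-weight}(L_i)$ with the same index set.

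I next verify $\mu$-consistency at the root for this choice. Since the chosen clause is not $\sigma$-max derived, the obligation is $\mu(H) >_{\mathit{lex}} \mu(L_1) \oplus \cdots \oplus \mu(L_r)$ if $H$ has an old predicate and $\mu(H) \geq_{\mathit{lex}} \mu(L_1) \oplus \cdots \oplus \mu(L_r)$ if $H$ has a new predicate. Because every negative literal $\neg A_i$ in the body satisfies $\sigma(A_i) < \alpha$ and every body atom is old, the first component of the right-hand side is at most $\alpha$, and it equals $\alpha$ only when some positive atom $L_i$ has $\sigma(L_i) = \alpha$; in that subcase the second component equals $\sum_i \mbox{\it min-weight}(L_i) = \mbox{\it min-weight}(H) - 1$. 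Comparing with $\mu(H) = \langle \alpha, \mbox{\it min-weight}(H)\rangle$ in the old case and with $\mu(H) = \langle \alpha, \mbox{\it min-weight}(H)-1\rangle$ in the new case discharges both inequalities. By Lemma~\ref{lemma:properties_of_oplus}(ii.5) the same inequality passes to each individual $\mu(L_i)$, and, since every body atom is old, the definition of $\succ$ yields $H \succ L_i$ for every positive atom child $L_i$, so the inductive hypothesis applies.

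To close the construction, for each positive atom child $L_i$ the chosen clause was read off from a genuine proof tree, so $L_i$ admits a proof tree in $P_d$ and hence, by the inductive hypothesis, a $\mu$-consistent one $T^*_{L_i}$; each negated literal $\neg A_i$ may legitimately remain a leaf, because the minimum-weight proof tree of $H$ certifies that no proof tree for $A_i$ exists. Attaching the $T^*_{L_i}$'s below the root yields the required $T^*_A$. The main obstacle is the bookkeeping in the root verification: the offset $-1$ in the definition of $\mu$ for new atoms must cancel precisely with the $+1$ contributed by the root node to the weight count, which is why one cannot pick an arbitrary clause and must single out a weight-minimising one.
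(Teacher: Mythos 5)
Your proof is correct and takes essentially the same route as the paper's: both single out the root clause of a weight-minimising proof tree for $A$, establish the decomposition $\mbox{\it min-weight}(A)=1+\sum_i\mbox{\it min-weight}(L_i)$ (summing over body atoms at the stratum of $A$, with negative literals and lower-stratum atoms contributing nothing by Remark~\ref{rem:proof-tree}), and use the $-1$ offset in $\mu$ for new atoms to discharge the strict versus non-strict root inequality. The only difference is how the recursion is organised --- the paper inducts on $\mathit{size}(T)$ while you use well-founded induction on $\succ$, justified via Lemma~\ref{lemma:properties_of_oplus}~(ii.5) and the old/new tie-break --- and both are sound.
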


\begin{proof}
Let us consider a proof tree $T$ for $A$ and $P_d$ such that 

\noindent
{\it{min}}-{\it{weight}}$(A)=$ {\it{weight}}$(\sigma(A),T)$.
We want to show that $T$ is $\mu$-consistent. That tree $T$ can be 
depicted as in Figure~\ref{fig:lemma-mu-consistency-in-Pd}. That
tree has been constructed by using at the top the clause $\gamma$ and a 
valuation $v$ such that $v(\gamma)$ is of the form
$A\leftarrow L_1\wedge\ldots\wedge L_n$.

\begin{figure}[ht]
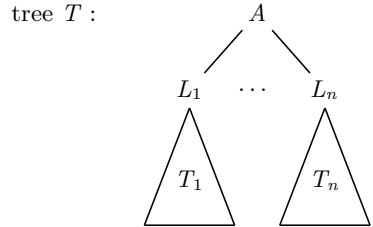

\begin{center}
\VCDraw{%
\begin{VCPicture}{(3,0)(5,5.3)} 
\FixStateDiameter{10mm}
\ChgStateLineWidth{1.3}
\ChgStateLineColor{white}
\SetEdgeArrowWidth{0pt}
\State[{\mathrm{tree}}~T:]{(-2.,4.7)}{T}
\State[A]{(2.5,4.7)}{A}
\State[L_1]{(1,3)}{L1}
\State[L_n]{(4,3)}{Ln}
\State[T_1]{(1,1)}{T1}
\State[T_n]{(4,1)}{Tn}
\State[\ldots]{(2.4,3)}{ldots}

\Point{(1,2.6)}{T1top}
\Point{(0,0)}{T11}
\Point{(2,0)}{T12}

\Point{(4,2.6)}{Tntop}
\Point{(3,0)}{Tn1}
\Point{(5,0)}{Tn2}
\EdgeL{A}{L1}{}
\EdgeL{A}{Ln}{}
\EdgeL{T1top}{T11}{}
\EdgeL{T1top}{T12}{}
\EdgeL{T11}{T12}{}

\EdgeL{Tntop}{Tn1}{}
\EdgeL{Tntop}{Tn2}{}
\EdgeL{Tn1}{Tn2}{}
\end{VCPicture}}
\end{center}
\vspace{-3mm}
\caption{A proof tree $T$ for $A$ and $P_d$ such that 
{\it{min}}-{\it{weight}}$(A)=$ {\it{weight}}$(\sigma(A),T)$. There is a valuation $v$ and  a clause $\gamma\in P_d$ such that $v(\gamma)$ is of the form:
$A\leftarrow L_1\wedge \ldots \wedge L_n$.
For $i=1,\ldots,n$, $T_i$ is a $\mu$-consistent proof tree for $L_i$ and $P_d$.
\label{fig:lemma-mu-consistency-in-Pd}}
\end{figure}

By induction on {\it{size}}$(T)$, we may assume that $T_1,\ldots,T_n$ are
$\mu$-consistent proof trees. Since $\gamma$ is locally stratified, we also 
have that
for $i\!=\!1,\ldots,n$, $\sigma(A)\geq\sigma(L_i)$.

\noindent
(Recall that if $L_i$, for some $i\in \{1,\ldots,
n\}$, is a negated atom, then $T_i$ consists of the single node $L_i$ and
$T_i$ is $\mu$-consistent.)

In order to prove the lemma we have to show the following two points:

\noindent  (P1)~if $A$ is a new atom then $\mu(A)\!\geq_{\mathit {lex}}\!
 \mu(L_1)\oplus\wedge\ldots\wedge\oplus  \mu(L_n)$, and 

\noindent 
 (P2)~if $A$ is an old atom then $\mu(A)\!>_{\mathit {lex}}\!
 \mu(L_1)\oplus\wedge\ldots\wedge\oplus  \mu(L_n)$.
 
(Note that $A\leftarrow L_1\wedge\ldots\wedge L_n$ is not an instance of a
$\sigma$-max derived clause belonging to $P_d$, because no such a clause exists in $P_d$
and, thus, if Points~(P1) and (P2) hold then the proof tree $T$ is $\mu$-consistent.)

Now, let us consider the following two cases: (1)~$A$ is a new atom, and 
(2)~$A$ is an old atom.

\medskip
{\it{Case}}~(1): $A$ is a new atom. We have that:

\smallskip

\noindent 
$\mu(A)=\langle\sigma(A),$ {\it min-weight}$(A)\!-\!1\rangle =
\langle\sigma(A),\sum_{i=1,\ldots,n \ \wedge \ \sigma(L_i)=\sigma(A)}$ {\it min-weight}$(L_i)\rangle$. 
\smallskip

Now, we consider two subcases.

\noindent
{\it{Case}}~(1.1): for $i=1,\ldots,n$,  $\sigma(A)\!>\!\sigma(L_i)$.
In this case we have that: 

\smallskip

$\langle\sigma(A),\sum_{i=1,\ldots,n \ \wedge \ \sigma(L_i)=\sigma(A)}$ {\it min-weight}$(L_i)\rangle = $  

\smallskip

$=\langle\sigma(A),0\rangle>_{\mathit{lex}} \mu(L_1)\oplus\ldots\oplus \mu(L_n)$.

\smallskip
\noindent
This last  inequality holds because $\pi_1(\mu(L_1)\oplus\ldots\oplus \mu(L_n))=$

\noindent
$=\max\{\sigma(L_i)\mid i=1,\ldots,n\}<\sigma(A)$, because
for $i=1,\ldots,n$,  $\sigma(A)\!>\!\sigma(L_i)$.

\noindent
{\it{Case}}~(1.2): there exists $i\in\{1,\ldots,n\}$ such that $\sigma(A)\!=\!\sigma(L_i)$. In this case we have that:

\smallskip

$\langle\sigma(A),\sum_{i=1,\ldots,n \ \wedge \ \sigma(L_i)=\sigma(A)}$ {\it min-weight}$(L_i)\rangle =  \mu(L_1)\oplus\ldots\oplus \mu(L_n)$,
\smallskip

\noindent
because $\mu(L_p)\oplus\mu(L_q)\!=\!\mu(L_p)$, whenever 
$\pi_1(\mu(L_p))\!>\!\pi_1(\mu(L_q))$.
This concludes the proof of Case~(1) and of Point~(P1).

\medskip
{\it{Case}}~(2): $A$ is an old atom. We have that:

\smallskip

\noindent 
$\mu(A)=\langle\sigma(A),$ {\it min-weight}$(A)\rangle =$ 

\smallskip
= \{the proof tree $T$  for $A$ and $P_d$ is such that

\makebox[6mm][]{}{\it{min-weight}}$(A) =$ {\it{weight}}$(\sigma(A), T)$\} =

\smallskip 
$=\langle\sigma(A),\ \big(\sum_{i=1,\ldots,n \ \wedge \ \sigma(L_i)=\sigma(A)}$ {\it min-weight}$(L_i)\big)+1\rangle$.
\smallskip

\noindent Let $M$ be the subset of $\{1,\ldots,n\}$ such that for all $j\in M$,
$\sigma(L_j)\!=\!\sigma(A)$. 
We have that: 

\smallskip

$\langle\sigma(A),\ \big(\sum_{i=1,\ldots,n \ \wedge \ \sigma(L_i)=\sigma(A)}$ {\it min-weight}$(L_i)\big)+1\rangle=$

\smallskip

$=\langle\sigma(A), \ \big(\sum_{j\in M}$ {\it min-weight}$(L_j)\big)+1\rangle
>_{\mathit{lex}}\mu(L_1)\oplus\ldots\oplus \mu(L_n)$.

\smallskip
\noindent
This last inequality holds because $\sum_{j\in M}$ {\it min-weight}$(L_j)=
\pi_2(\mu(L_1)\oplus\ldots\oplus \mu(L_n)$. This concludes the proof of Case~(2), of Point~(P2), and of the lemma.
\end{proof}

\begin{proposition} \label{prop:preserv-mu-consistency}
Let $P_0$ be a locally stratified $\omega$-program, $\sigma$ be a
local stratification for $P_0$, and \( P_{0},\ldots ,P_{d},\ldots
,P_{n} \) be an admissible transformation sequence where\/{\rm{:}} {\rm{(1)}}~\(
P_{0},\ldots ,P_{d} \) is constructed by using rule {\rm{(R1)}}, and
 {\rm{(2)}}~$P_{d},\ldots,$ $P_{n}$ is constructed by applying rules {\rm{(R2)}}--{\rm{(R7)}}.
Then, for every atom \( A \in \mathcal{B}_{\omega}\), we have that,
for $k=d,\ldots,n$\/{\rm{:}}

\smallskip

\noindent \textup{({\em Soundness})} if there exists a proof tree
for \( A \) and \( P_{k} \), then there exists a proof tree for \( A
\) and \( P_d\), and
\smallskip{}

\noindent \textup{({\em Completeness})} if there exists a \(\mu
\)-consistent proof tree for \( A \) and \( P_d\), then there exists
a \mbox{\( \mu\)-con\-sist}\-ent proof tree for \( A \) and \( P_{k}\).
\end{proposition}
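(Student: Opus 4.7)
The plan is to prove both Soundness and Completeness by induction on $k$ in the range $d,\ldots,n$, with the base case $k=d$ being trivial since the same tree serves for both claims. For the inductive step from $k$ to $k+1$, I would case-split on which of the rules R2--R7 is applied to derive $P_{k+1}$ from $P_k$, and within each case convert proof trees in both directions by a structural sub-induction on the (finite) proof tree, using the well-founded ordering $\succ$ of Definition~\ref{def:curly-greater} together with Lemma~\ref{lemma:wfo} to justify recursion through negated leaves in the completeness direction.

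The easier rules cause little trouble. R2 (instantiation) and R5 (subsumption) preserve the set of derivable ground clause instances, so proof trees can essentially be reused. R3 (positive unfolding) is handled by a local contraction/expansion at a $v(A)$-node: the step in $P_k$ that uses $\gamma$ and then $\gamma_i$ at the $v(A)$-child becomes a single step in $P_{k+1}$ using the unfolded clause~$\eta_i$, and the effect on $\mu$ reduces to manipulations of $\oplus$ via Lemma~\ref{lemma:properties_of_oplus}. R4 (negative unfolding) and R7 (negative folding) rest on the logical equivalence between $\neg A$ and the disjunction $D_1\vee\cdots\vee D_r$ (and the analogous equivalence for R7); in one direction one picks a disjunct $D_j$ witnessing the failure of $v(A)$, while in the other one recognizes that the simultaneous failure of every disjunct forces $\neg A$ to be a leaf. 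The $\mu$-consistency checks for R4 follow the same pattern as Case~4 of the proof of Lemma~\ref{lem:sigma-preservation}, combined with Lemma~\ref{lemma:properties_of_mu} to bound the $\mu$-values of the new literals in $D_j$ via the stratification.

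The main obstacle is the R6 (positive folding) case in the Completeness direction. Given a $\mu$-consistent proof tree for $A$ and $P_k$ that uses $\gamma: H\leftarrow B_L\wedge B\vartheta\wedge B_R$ at some node $v(H)$, I would assemble the subtrees for $v(B\vartheta)$ under a fresh node $v(K\vartheta)$ built via the unique definition $\delta: K\leftarrow B$, thereby obtaining a step that uses $\eta$ in $P_{k+1}$. To verify $\mu$-consistency at the new $v(H)$-node I would invoke Lemma~\ref{lemma:mu-for-a-definition}, whose $\sigma$-tightness hypothesis is supplied by admissibility condition~(2.1), to conclude $\mu(v(K\vartheta)) = \mu(v(B\vartheta))$; the strict inequality $\mu(v(H)) >_{\mathit{lex}} \mu(v(B_L))\oplus\mu(v(K\vartheta))\oplus\mu(v(B_R))$ then transfers from the original tree, since admissibility condition~(2.2) forces either the head of $\eta$ to be old or $\eta$ (as a descendant of $\gamma$, which is in turn a descendant of the clause $\beta$ derived by unfolding w.r.t.\ a $\sigma$-maximal old atom) to be $\sigma$-max derived, and in both subcases the original proof tree already provided the strict variant. $\mu$-consistency at the freshly inserted $v(K\vartheta)$-node holds in the non-strict form, which is enough because $v(K\vartheta)$ has a new predicate and $\delta$ is not $\sigma$-max derived. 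The Soundness direction of R6 is simpler: any subtree rooted at $v(K\vartheta)$ in a $P_{k+1}$ proof tree must begin with an application of $\delta$ (the unique clause defining that new predicate), so its grandchildren are already proof trees for the atoms of $v(B\vartheta)$ and can be spliced in to rebuild a step using $\gamma$ in $P_k$.
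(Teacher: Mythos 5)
Your overall architecture (outer induction on $k$, case split on the rule applied, inner well-founded induction via $\succ$ and Lemma~\ref{lemma:wfo}) matches the paper's, and your treatment of R2, R3 and R4 is essentially the paper's. But there is a genuine gap in the folding cases, which you yourself identify as the crux. In both directions of R6 you treat the definition clause $\delta$ as if it were a clause of the current program: in the Completeness direction you propose to justify the fresh node $v(K\vartheta)$ in a $P_{k+1}$ proof tree ``via the unique definition $\delta$'', and in the Soundness direction you assert that a subtree rooted at $v(K\vartheta)$ in a $P_{k+1}$ proof tree ``must begin with an application of $\delta$''. Neither holds: $\delta$ belongs to $\mathit{Defs}_k$, hence to $P_d=P_0\cup\mathit{Defs}_n$, but it is typically no longer present in $P_k$ or $P_{k+1}$ (it has been unfolded or folded away --- this is exactly the situation in Examples~\ref{ex:seki} and~\ref{ex:not_seki}). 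The paper's proof therefore detours through $P_d$: for Soundness it transports the proof tree for $v(\tK\vartheta)$ from $P_{k+1}$ to $P_k$ (by the size induction) and then to $P_d$ (by Soundness at level $k$), where $\delta$ \emph{is} the unique defining clause, and then brings the body subtrees back to $P_k$ using Completeness at level $k$; for Completeness it first shows there is a $\mu$-consistent proof tree for $K$ and $P_k$ (again via $P_d$ and Lemma~\ref{lem:muconsistent-prooftree}), then establishes $A\succ K$ --- using Lemma~\ref{lemma:mu-for-a-definition} together with Lemma~\ref{lemma:properties_of_oplus}~(ii.5) --- precisely so that the inner inductive hypothesis (I$\mu$) can supply a $\mu$-consistent proof tree for $K$ and $P_{k+1}$ to plug under the root step $v(\eta)$. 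Your sketch never establishes $A\succ K$ and never invokes the inductive hypothesis for $K$, because it assumes the subtree can be built in place; without the detour the construction simply does not produce a proof tree for $P_{k+1}$. The same issue affects R7, which you fold into the R4 discussion: concluding that there is no proof tree for $K$ requires knowing that in $P_d$ the predicate of $K$ is defined exactly by $\delta_1,\ldots,\delta_q$, and again one must travel to $P_d$ and back using both halves of the level-$k$ induction hypothesis.

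A secondary point: for R5 your claim that proof trees ``can essentially be reused'' is too quick in the Completeness direction. The subtree below the node $A$ is replaced by the single child $\mathit{true}$, and one must still verify $\mu$-consistency there, i.e.\ $\mu(A)>_{\mathit{lex}}\langle 0,0\rangle$; the paper needs a nontrivial case analysis for new atoms (using that definition bodies introduced by R1 are non-empty and built from old predicates, so either an old atom with positive $\mathit{min}$-$\mathit{weight}$ or a negated old atom forcing $\sigma(A)>0$ occurs in the body). This is repairable, but it is not a mere reuse of the old tree.
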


\begin{proof} We prove the ({\em Soundness}) and ({\em Completeness}) properties
by induction on $k$.

\noindent
Clearly they hold for $k=d$.

\noindent
Now, let us assume, by induction, that: 

\smallskip


\noindent
\framebox{\begin{minipage}[c][1.2\totalheight]{0.98\columnwidth}
(IndHyp) the ({\em Soundness}) and ({\em Completeness})
properties hold for~$k$, with $d\!\leq\! k\!<\!n$.
\end{minipage}}

\smallskip

We have to show that they hold
 for $k\!+\!1$. 
 
 In order to prove that the 
({\em Soundness}) and ({\em Completeness})
 properties hold for $k\!+\!1$, it is sufficient to prove that:

\smallskip

\noindent (S) for every atom \( A\in\mathcal{B}_{\omega} \), 
if there exists a proof tree for $A$ and \( P_{k+1} \)
then there exists a proof tree for \( A \) and \(
 P_{k}\), and

\noindent (C) for every atom \( A\in\mathcal{B}_{\omega} \),
if there exists a \(\mu\)-consistent proof tree for
$A$ and \( P_k \) then there exists a \(\mu\)-consistent proof tree
for \( A \) and \(  P_{k+1} \).

\smallskip

\noindent We proceed by complete induction on the ordinal \( \sigma
(A) \) associated with the atom~\( A \). The inductive hypotheses
(IS) and (IC) for (S) and (C), respectively, are as follows:
\smallskip{}

\noindent
\framebox{\begin{minipage}[c][1.2\totalheight]{0.98\columnwidth}
(IS) for every atom \( A'\in\mathcal{B}_{\omega} \) such
that \( \sigma (A')\! <\! \sigma (A) \), if there exists a proof
tree for \( A' \) and \(  P_{k+1} \) then there exists a proof tree
for \( A' \) and \(  P_{k} \),
\end{minipage}}

\medskip{}
and

\medskip{}
\noindent 
\framebox{\begin{minipage}[c][1.2\totalheight]{0.98\columnwidth}
(IC) for every atom \( A'\in\mathcal{B}_{\omega} \) such
that \( \sigma (A')\! <\! \sigma (A) \), if there exists a \mbox{\(
\mu\)-consistent} proof tree for \( A' \) and \(  P_{k}\) then there
exists a \( \mu \)-consistent proof tree for \( A' \) and \(
P_{k+1}\).
\end{minipage}}


\medskip{}
By the inductive hypotheses  (IS) and (IC), we have that:
\medskip{}

\noindent
\framebox{\begin{minipage}[c][1.2\totalheight]{0.98\columnwidth}
(ISC) for every atom \( A'\in\mathcal{B}_{\omega} \) such
that \( \sigma (A')\! <\! \sigma (A) \) (and thus, $A\succ A'$), 
there exists a proof tree $T'$
for \( A' \) and \(  P_{k} \) iff there exists a proof tree $U'$ for
\( A' \) and \(  P_{k+1} \).
\end{minipage}}

\smallskip{}

\medskip{}

\noindent \framebox{\emph{Proof of} (S).} Given a proof tree \( U \) for \( A
\) and \(  P_{k+1} \) we have to prove that there exists a proof
tree \( T \) for \( A \) and \(  P_{k} \). The proof is by complete
induction on \emph{\( \mathit{size}(U) \)}. The inductive 
hypothesis~is:

\medskip{}

\noindent
\framebox{\begin{minipage}[c][1.2\totalheight]{0.98\columnwidth}
\noindent (Isize) for every atom \(A'\in\mathcal{B}_{\omega}\),
for every proof tree \( U' \) for $A'$ and \(  P_{k+1} \), if \(
\mathit{size}(U')<\mathit{size}(U) \) then there exists a proof tree
\( T' \) for \( A' \) and \(  P_{k} \).
\end{minipage}}


\medskip{}

Let \( \eta  \) be a clause in \(  P_{k+1} \) and \( v \) be a
valuation. Let \( v(\eta)\) be a clause of the form 
$ A\leftarrow L_{1}\wedge \ldots \wedge L_{r} $
 used at the root of \( U \). We
proceed by considering the following cases: \emph{either} (Case
1)~\( \eta \) belongs to \(  P_{k} \) \emph{or} (Case 2)~\( \eta \)
does not belong to \(  P_{k} \) and it has been derived from a
clause in \(P_{k} \) by applying a transformation rule among R2, R3,
R4, R6, and R7. These two cases are mutually exclusive and
exhaustive because rule~R5 removes a clause.

We have that, for \( i=1,\ldots ,r \), there is a proof tree \(
T_{i} \) for $L_i$ and \(  P_{k} \). Indeed, (i)~if \( L_{i} \) is
an atom then, by induction on (Isize), there exists a
proof tree \( T_{i} \) for \( L_{i} \) and \(  P_{k} \), and (ii)~if
\( L_{i} \) is a negated atom \( \neg A_{i} \) then, by the fact
that program \(  P_{k+1} \) is locally stratified w.r.t.~\( \sigma
\) and by the inductive hypothesis (ISC), there is no proof tree for
\( A_{i} \) and \(  P_{k} \) and hence, by definition, there is a
proof tree \( T_{i} \) for $L_i$ and \(  P_{k} \).

\medskip

\noindent \emph{Case} 1. A proof tree $T$ for $A$ and $ P_k$ can be constructed by using $v(\eta)$ and the proof trees 
$T_1,\ldots,T_r$ for $L_1,\ldots,L_r$, respectively, and $P_k$.

\medskip

\noindent \emph{Case} 2.1 ($P_{k+1}$ is derived from $P_k$ by using
rule R2.) Clause \( \eta  \) is derived by instantiating a variable
$X$ in a clause $\gamma\in  P_k$. We have that $\gamma$ is a clause
of the form $\tA \leftarrow \tL_1\wedge\ldots \wedge\tL_r$ and $\eta$ is of the form $(\tA\leftarrow
\tL_1\wedge\ldots \wedge\tL_r)\{X/\llbracket s|X\rrbracket \}$ for some $s\in\Sigma$. Thus,
$v(\tA\{X/\llbracket s|X\rrbracket \})=A$ and, for $i\!=\!1,\ldots,r$,
$v(\tL_i\{X/\llbracket s|X\rrbracket \})=L_i$.

Let $v'$ be the valuation
such that $v'(X)=v(\llbracket s|X\rrbracket) $ and $v'(Y)=v(Y)$ for
every variable $Y$ different from $X$. Then $v'(\gamma) = v(\eta)$
and a proof tree $T$ for $A$ and $ P_k$ can be constructed from
$T_1,\ldots,T_r$ by using $v'(\gamma)$ at the root of $T$.

\medskip

\noindent \emph{Case} 2.2 ($P_{k+1}$ is derived from $P_k$ by using
rule R3.)  Clause \( \eta  \) is derived by unfolding a clause
$\gamma\in  P_k$ w.r.t.~a positive literal, say $\tK$, in its body using
clause $\gamma_i$. Recall that clauses $\gamma$ and $\gamma_i$ are 
assumed to have no variables in common (see rule~R3). Without loss
of generality, we may assume that: (i)~$\eta$
 is of the form $(\tA\leftarrow \tL_{1}\wedge \ldots \wedge \tL_{r})\vartheta_i$,
(ii)~$\gamma$ is of the form $\tA\leftarrow \tK \wedge \tL_{q+1}\wedge \ldots \wedge \tL_{r}$, with $0\!\leq\!q\!\leq\!r$, and
(iii)~$\gamma_i$ is of the form $\tH\leftarrow \tL_{1}\wedge \ldots \wedge \tL_{q}$, where $\vartheta_i$ is an (idempotent and 
 without identity bindings) most general unifier of $\tK$ and $\tH$.

Let $v'$ be the valuation
such that:(i)~$v'(X)=v(X\vartheta_i)$ for every variable $X$ in the domain of~$\vartheta_i $, and (ii)~$v'(Y)=v(Y)$ for every variable $Y$ not in the domain of ~$\vartheta_i$. For this choice of $v'$ we have that
   $v'(\tK)=$ \{by definition of~$v'$\} $=v(\tK\vartheta_i)=$ \{since $\tK\vartheta_i=\tH\vartheta_i$\} $=v(\tH\vartheta_i)=$ \{by definition of $v'$\} $=v'(\tH)$.
   
For instance, given $\gamma$: $p(X)\If q(X,Y) \wedge s(X,Y,W)$ and
 $\gamma_i$: $q(Z,a)\If r(Z)$, by unfolding~$\gamma$ w.r.t.~$q(X,Y)$
 using $\gamma_i$, we get a most general unifier $\vartheta_i=\{Z/X,Y/a\}$ 
  and the clause $\eta$: $p(X)\If r(X) \wedge s(X,a,W)$. 
 Thus, if $v(\eta)\!=\!p(b)\If r(b) \wedge s(b,a,c)$,
   we have $v'(X)\!=\!b$, $v'(W)\!=\!c$, $v'(Z)\!=\!b$, and
  $v'(Y)\!=\!a$.

Now, since $v'(\tK)\!=\!v'(\tH)$, given the proof trees 
$T_1,\ldots,T_r$ for $L_1,\ldots,L_r$, respectively, and $P_k$,
we can construct a proof tree $T$ for $A$ and $ P_k$ as follows.
Let $K$ denote $v'(\tK)$.
(i)~We first construct a proof tree $T_K$ for~$K$
 and $ P_k$ from $T_1,\ldots,T_q$ by using clause $v'(\gamma_i)$ at the root of $T_K$, and then,
(ii)~we construct~$T$ from $T_K,T_{q+1},\ldots,T_r$
by using clause~$v'(\gamma)$ at the root of $T$.

\medskip 

\noindent \emph{Case} 2.3 ($P_{k+1}$ is derived from $P_k$ by using
rule R4.)  Clause \( \eta  \) is derived by unfolding a clause
$\gamma\in  P_k$ w.r.t.~a negative literal, say $\neg \tK$, in its body. Recall that we have assumed that $v(\eta)$ is of the form $A
\leftarrow  L_{1} \wedge \ldots \wedge L_r$. Without loss
of generality, we may assume that: 

\noindent
(i)~there exist $m$ 
substitutions $\vartheta_1,\ldots,\vartheta_m$ and $m$ clauses $\gamma_1,\ldots,\gamma_m$ in $
P_k$ such that, for $i=1,\ldots,m$, $\vartheta_i$ is a most general
unifier of $\tK$ and $\Mathit{hd}(\gamma_i)$, $\tK\!=\!\Mathit{hd}(\gamma_i)\vartheta_i$,
and $v(\gamma_i\vartheta_i)$ is of the form
$K\leftarrow B_i$, and

\noindent
(ii)~$v(\gamma)$ is of the form $A
\leftarrow \neg K \wedge L_{m+1} \wedge \ldots \wedge L_r$, with
$0\leq m\leq r$, (note that, by Condition~(1) of rule R4, $\gamma$ is not
instantiated by the negative unfolding).
Thus, $v(\eta)=A
\leftarrow  L_{1} \wedge \ldots \wedge L_r$,  is derived 
from $A
\leftarrow \neg (B_1\vee \ldots \vee B_m) \wedge L_{m+1} \wedge \ldots \wedge L_r$
by pushing $\neg$ inside and pushing $\vee$ outside.

Now, let us assume by absurdum that there exists a proof tree $U_K$ for
$K$ and $P_{k+1}$. Then, there exists a valuation $v'$ such that the children of
the root of $U_K$ are labeled by the literals $M_1,\ldots,M_s$, where
$v'(\mathit{bd}(\gamma_i\vartheta_i))=M_1\wedge\ldots\wedge M_s$, for
some~$i$, with $1\leq i\leq m$. Since $\gamma_i$ has no existential variables, without 
loss of generality we take $v'(X)=v(X)$, for every variable $X$. 
By the definition of the negative unfolding rule, 
there exist $j\in\{1,\ldots,s\}$ and $h\in\{1,\ldots,m\}$ such that $M_j=\overline L_h$.
By hypothesis, there exists a proof tree for $L_h$ and $P_k$ and, thus, $U_K$
is not a proof tree for $K$ and $P_{k+1}$. This is a contradiction and, thus, we have that
there is no proof tree for $K$ and $P_{k+1}$. Since $\sigma(K)<\sigma(A)$, by
the inductive hypothesis (ISC), we have that there is no proof tree for
$K$ and $ P_k$. Hence, there is a proof tree $T_{\neg K}$ for
$\neg K$ and $ P_{k}$. Thus, we can construct a proof tree $T$ for $A$
and $ P_k$ from $T_{\neg K}, T_{m+1},\ldots,T_r$ by using clause~$v(\gamma)$ at the root of $T$.

\medskip 

\noindent \emph{Case} 2.4 ($P_{k+1}$ is derived from $P_k$ by using
rule  R6.) Let us assume that clause \( \eta  \) of the form 
$\tA \leftarrow \tL_1 \wedge\tL_2 \wedge\ldots \wedge \tL_r$
is derived by positive folding from a
clause $\gamma\in P_k$ of the form $\tA\leftarrow \tM'_1\wedge\ldots \wedge \tM'_s\wedge\tL_2 \wedge\ldots \wedge \tL_r$  using a clause $\delta\in\mbox{\it Defs}_k$ of the form
$\tK\leftarrow \tM_1\wedge\ldots \wedge \tM_s$. Without loss of generality, we may assume that $\tL_1=\tK\vartheta$, where $\vartheta$ is a substitution such that, for $i\!=\!1,\ldots,s$, $\tM_i\vartheta\!=\!\tM'_i$. Thus, the literal 
$L_1$ in the body of $v(\eta)$ is $v(\tK\vartheta)$.
We have that
$\delta\in P_d$ and the definition of the head predicate of $\delta$ in $ P_d$
consists of clause $\delta$ only. 

By induction on $k$, we have that the ({\em Soundness})
property holds for $k$. 
We know that there is a proof tree for $L_1$ and $P_k$. 
Hence, by Conditions~(i) and~(ii) of rule
R6, there exists a proof tree for $L_1$ and $P_d$, for some
valuation $v'$ such that $v'(\delta)$ is of the form $L_1\leftarrow
M_1\wedge\ldots \wedge M_s$ (note that if $X\!\in\! {\mathit{vars}}(\eta)$  then $v'(X)=v(X)$). 

By induction on $k$, we have that the ({\em Soundness}) and
({\em Completeness}) properties hold for $k$. Thus, there are proof
trees $U_1,\ldots,U_s$ for $M_1,\ldots, M_s$, respectively,  and 
$ P_k$. 

Finally, by
 induction on (Isize), we know that there exist the proof trees
$T_2,\ldots,T_r$ for $L_2,\ldots,L_r$, respectively, and $ P_{k}$.
As a consequence, we can construct a proof tree $T$ for $A$ and $
P_k$ from $U_1,\ldots,U_s,T_2,\ldots,T_r$ by using clause
$v(\gamma)$  at the root of $T$.

\medskip 

\noindent \emph{Case} 2.5 ($P_{k+1}$ is derived from $P_k$ by using
rule R7.) Clause \( \eta \) is derived by negative folding from a
clause $\gamma\in P_k$ using clauses $\delta_1,\ldots,\delta_m$ in
$\mbox{\it Defs}_k$. Thus, we have that: (i)~$v(\gamma)$ is of the
form $A\leftarrow N_1\wedge \ldots\wedge N_m\wedge
L_2\wedge\ldots\wedge L_r$, (ii)~for $i=1,\ldots,m$, $v(\delta_i)$
is of the form $K\leftarrow B_i$, where either $N_i$ is a positive
literal $A_i$ and $B_i$ is $\neg A_i$, or $N_i$ is a negative
literal $\neg A_i$ and $B_i$ is $A_i$, and (iii)~$v(\eta)$ is of the
form $A\leftarrow \neg K\wedge L_2\wedge\ldots\wedge L_r$. Thus, $L_1=\neg K$.

By the inductive hypothesis (ISC), there exists a proof tree for
$L_1$ and $P_k$ and, since $L_1=\neg K$, there is no
proof tree for $K$ and $ P_{k}$. By induction on $k$, we have
that the ({\em
Completeness}) holds for $k$ and, therefore, there exists no proof
tree for $K$ and $ P_d$. We have that
$\{\delta_1,\ldots,\delta_m\}\subseteq P_d$ and the clauses defining
the head predicate of $\delta_1,\ldots,\delta_m$ in $ P_{d}$ are
$\{\delta_1,\ldots,\delta_m\}$. Thus, there are no proof trees for
$B_1,\ldots, B_m$ and~$P_d$. 

By induction on $k$,  the ({\em
Soundness}) property holds for $k$ and, therefore, there are no
proof trees for $B_1,\ldots, B_m$ and $ P_k$. Thus, there are proof
trees $U_1,\ldots,U_m$ for $N_1,\ldots,N_m$, respectively,  and $ P_k$. Finally, by
induction on (Isize), we have that there are the proof trees
$T_2,\ldots,T_r$ for $L_2,\ldots,L_r$, respectively, and $ P_{k}$.
We can construct a proof tree $T$ for $A$ and $ P_k$ from
$U_1,\ldots,U_m,T_2,\ldots,T_r$ by using clause $v(\gamma)$
 at the root of $T$.

\medskip{}

\noindent {\framebox{\emph{Proof of} (C).}} Given a \(\mu \)-consistent proof
tree $T$ for \( A \) and \(  P_{k} \), we prove that there exists a
\( \mu \)-consistent proof tree $U$ for \( A \) and \(
 P_{k+1} \).

The proof is by well-founded induction on 
$\succ\; \subseteq \mathcal{B}_{\omega}\!\times\! \mathcal{B}_{\omega}$. The
inductive hypothesis is:

\medskip{}

\noindent
\framebox{\begin{minipage}[c][1.2\totalheight]{0.98\columnwidth}
(I\( \mu  \)) for every atom \(A'\in\mathcal{B}_{\omega}
\) such that \( A\succ A'\), if there exists a \( \mu \)-consistent
proof tree \( T' \) for \( A' \) and \( P_{k} \) then there exists a
\(\mu\)-consistent proof tree \( U' \) for \( A' \) and \(  P_{k+1} \).
\end{minipage}}


\medskip{}

Let \( \gamma  \) be a clause in \(  P_{k} \) and \( v \) be a
valuation such that \( v(\gamma) \) is the clause of the form \(
A\leftarrow L_{1}\wedge \ldots \wedge L_{r} \) used at the root of
\( T \). We consider the following cases: \emph{either} (Case 1) \(
\gamma \) belongs to \(  P_{k+1} \) \emph{or} (Case 2) \( \gamma \)
does not belong to \(  P_{k+1} \) because it has been replaced by
zero or more clauses derived by applying a transformation rule among
R2--R7.

\medskip

\noindent \emph{Case} 1. By the $\mu$-consistency of $T$ and
Lemma~\ref{lemma:wfo}, for \( i=1,\ldots ,r \),  we have $A \succ
L_i$. Hence, by the inductive hypotheses (I$\mu$) and (ISC), there
exists a $\mu$-consistent proof tree \(U_{i} \) for \( L_{i} \) and
\( P_{k+1} \). A $\mu$-consistent proof tree $U$ for $A$ and $
P_{k+1}$ is constructed by using $v(\gamma)$ at the root 
of $U$ and the proof trees
$U_1,\ldots,U_r$ for $L_1,\ldots,L_r$, respectively, and $P_{k+1}$.

\medskip

\noindent \emph{Case} 2.1 ($P_{k+1}$ is derived from $P_k$ by using
rule R2.) Suppose that by instantiating a variable $X$ of clause
$\gamma$ in $P_k$ we derive clauses $\gamma_1,\ldots,\gamma_h$ in
$P_{k+1}$. For $i=1,\ldots,h,$ $\gamma_i$ is $\gamma\{X/\llbracket
s_i|X\rrbracket \}$, with $s_i\in\Sigma$. Hence, there exist
$i\in\{1,\ldots,h\}$ and a valuation $v'$ such that
$v(\gamma)=v'(\gamma_i)$.  By the $\mu$-consistency of $T$ and
Lemma~\ref{lemma:wfo}, for \( i=1,\ldots ,r \), we have $A \succ
L_i$. Hence, by the inductive hypotheses (I$\mu$) and (ISC), 
for \( i=1,\ldots ,r \), there
exists a $\mu$-consistent proof tree \(U_{i} \) for \( L_{i} \) and
\( P_{k+1} \). A  proof tree $U$ for $A$ and $
P_{k+1}$ is constructed by using $v'(\gamma_i)$ at the root of
$U$ and the proof trees
$U_1,\ldots,U_r$ for $L_1,\ldots,L_r$,  respectively, and $P_{k+1}$.

The proof tree $U$ is $\mu$-consistent because: (i)~by (I$\mu$), 
we have that  $U_1,\ldots,U_r$ are $\mu$-consistent,
(ii)~$\gamma_i$ is 
$\sigma$-max derived iff $\gamma$ is 
$\sigma$-max derived, and (iii)~since~$T$ is $\mu$-consistent,
we have that if~$\gamma$ is not $\sigma$-max derived then 
$\mu(A)\geq_{\mathit{lex}} \mu(L_1)\oplus\ldots\oplus\mu(L_r)$
else $\mu(A)>_{\mathit{lex}} \mu(L_1)\oplus\ldots\oplus\mu(L_r)$.

\medskip 

\noindent \emph{Case} 2.2 ($P_{k+1}$ is derived from $P_k$ by using
rule R3.) Suppose that by unfolding $\gamma$ w.r.t.~an atom~$B$ in
its body we derive clauses $\eta_1,\ldots,\eta_m$ in $P_{k+1}$.
Without loss of generality, we assume that $B$ is the leftmost
literal in the body of $\gamma$. Hence, there exists a clause
$\gamma_i$ in (a variant of) $P_k$ such that: (i)~$v(\gamma_i)$ is
of the form $L_1 \leftarrow M_1\wedge \ldots \wedge M_q$,
(ii)~$v(\eta_i)$ is $A\leftarrow M_1 \wedge \ldots \wedge M_q \wedge
L_2 \wedge \ldots \wedge L_r$, and (iii)~$v(\gamma_i)$ is the clause
which is used for constructing the children of $L_1$ in $T$. By the
\mbox{$\mu$-consistency} of~$T$ and Lemma~\ref{lemma:wfo}, for \(
i=1,\ldots ,q\), we have $A \succ M_i$ and, for \( i=2,\ldots ,r \),
we have $A \succ L_i$. Hence, by the inductive hypotheses (I$\mu$)
and (ISC), for \( i=1,\ldots ,q,\) there exists a $\mu$-consistent
proof tree \(V_{i} \) for \( M_{i} \) and \( P_{k+1} \) and, for \(
i=2,\ldots ,r \), there exists a $\mu$-consistent proof tree \(U_{i}
\) for \( L_{i} \) and \( P_{k+1} \). A proof tree $U$ for $A$ and $ P_{k+1}$ is
constructed by using $v(\eta_i)$ at the root of $U$ and the proof trees
$V_1,\ldots,V_q,U_2,\ldots,U_r$ for $M_1, \ldots,M_q,L_2,\ldots,
L_r$,  respectively, and $P_{k+1}$.

It remains to show that the proof tree $U$ is $\mu$-consistent.
There are two cases:~(a) and~(b).

\noindent
{\it{Case}}~(a): in this first case we assume that
 $A$ is new {\it{and}} $\eta_i$ is not $\sigma$-max
derived. 
%

\noindent
Since $T$ is $\mu$-consistent
we get $\mu(A)\geq_{\mathit{lex}} \mu(L_1) \oplus \mu(L_2) \oplus 
\ldots  \oplus \mu(L_r)$ and
 $\mu(L_1)\geq_{\mathit{lex}} \mu(M_1) \oplus 
\ldots  \oplus \mu(M_q)$. By
Lemma~\ref{lemma:properties_of_oplus}~(ii.2), we get
$\mu(A)\geq_{\mathit{lex}} \mu(M_1) \oplus 
\ldots  \oplus \mu(M_q) \oplus \mu(L_2) \oplus 
\ldots  \oplus \mu(L_r)$.

\noindent 
{\it{Case}}~(b): in this second case, we assume that $A$ is
old {\it{or}} $\eta_i$ is $\sigma$-max derived. We have two subcases (b.1)
and (b.2).

\noindent
{\emph{Subcase}}~(b.1): $A$ is
old. Since $T$ is $\mu$-consistent, we get that
$\mu(A)>_{\mathit{lex}}\mu(L_1)\wedge \ldots \wedge \mu(L_r)$
and $\mu(L_1)\geq_{\mathit{lex}}\mu(M_1)\wedge \ldots \wedge \mu(M_q)$.
By Lemma~\ref{lemma:properties_of_oplus}~(ii.2) we get 
$\mu(A)>_{\mathit{lex}}\mu(M_1)\wedge \ldots \wedge \mu(M_q)$.

{\emph{Subcase}}~(b.2): 
$\eta_i$ is $\sigma$-max derived. We may assume that $A$ is new, 
because in Subcase~(b.1) we have considered that $A$ is old. 
Now we consider two subcases of this Subcase~(b.2).  

\noindent
Subcase~(b.2.1):  $\eta_i$ is $\sigma$-max derived, $A$ is new, and $\gamma$ is $\sigma$-max derived, and 

\noindent
Subcase~(b.2.2):
$\eta_i$ is $\sigma$-max derived, $A$ is new,  and $\gamma$ is not $\sigma$-max derived.

\noindent
{\it{Subcase}}~(b.2.1).~Since $T$ is $\mu$-consistent
we get $\mu(A)>_{\mathit{lex}} \mu(L_1) \oplus \mu(L_2) \oplus 
\ldots  \oplus \mu(L_r)$ and
 $\mu(L_1)\geq_{\mathit{lex}} \mu(M_1) \oplus 
\ldots  \oplus \mu(M_q)$. By
Lemma~\ref{lemma:properties_of_oplus}~(ii.2), we get
$\mu(A)>_{\mathit{lex}} \mu(M_1) \oplus 
\ldots  \oplus \mu(M_q) \oplus \mu(L_2) \oplus 
\ldots  \oplus \mu(L_r)$.

\noindent
{\it{Subcase}}~(b.2.2). Since $T$ is $\mu$-consistent and $L_1$ is old, we get:
$(\dagger 1)$~$\mu(L_1)>_{\mathit{lex}}\mu(M_1)\oplus\ldots \oplus \mu(M_q)$, and  $(\dagger 2)$~$\pi_2(\mu(L_1))>0$.
Since $\eta_i$ is $\sigma$-maximal
derived, we have that, for $i\!=\!2,\ldots,r$, 
$\sigma(L_1)\!\geq\!\sigma(L_j)$. Thus,
$(\dagger 3)$~$\sigma(L_1)\geq\pi_1(\mu(L_2)\oplus\ldots \oplus \mu(L_r))$. From $(\dagger 1)$, $(\dagger 2)$, and $(\dagger 3)$, 
by Lemma~\ref{lemma:properties_of_oplus}~(ii.3), we get:
$(\dagger 4)$
$\mu(L_1) \oplus \mu(L_2)\oplus\ldots \oplus \mu(L_r)>_{\mathit{lex}}\mu(M_1)\oplus\ldots \oplus \mu(M_q)
\oplus \mu(L_2)\oplus\ldots \oplus \mu(L_r)$. Since
 $T$ is $\mu$-consistent, we have that 
 $\mu(A)\geq_{\mathit{lex}}\mu(L_1)\oplus\ldots \oplus \mu(L_r)$, and
 by $(\dagger 4)$ we get:  
 $\mu(A)>_{\mathit{lex}} \mu(M_1)\oplus\ldots \oplus \mu(M_q)
\oplus \mu(L_2)\oplus\ldots \oplus \mu(L_r)$,
 as desired.

This concludes the proof that $U$ is a $\mu$-consistent
proof tree.

\medskip 

\noindent \emph{Case} 2.3 ($P_{k+1}$ is derived from $P_k$ by using
rule R4.) Suppose that we unfold $\gamma$ w.r.t.~a negated atom in
its body and we derive clauses $\eta_1,\ldots,\eta_s$ in $P_{k+1}$.
Without loss of generality, we assume that we unfold $\gamma$
w.r.t.~the leftmost literal in its body. Let $\gamma_1, \ldots,
\gamma_m$  be all clauses in (a variant of) $P_k$ whose heads are
unifiable with the leftmost literal in the body of $\gamma$. We may
assume that, for $i=1,\ldots,m,$ $v(\gamma_i)$ is of the form $A_1
\If B_i$, where $L_1 = \neg A_1$ and $B_i$ is a conjunction of
literals. Since there is no proof tree for $A_1$ and $P_k$, for
$i=1,\ldots,m,$ there exists a literal $R_i$ in $B_i$ such that
there is no proof tree for $R_i$ and $P_k$. By definition,
there is a proof tree for $\overline R_i$ and $P_k$. 
Moreover, (i)~$A
\succ \neg A_1$  because by hypothesis the proof tree $T$ is
$\mu$-consistent, and (ii)~$\sigma(\neg A_1) \geq \sigma(\overline R_i)$,
because $P_k$ is locally stratified w.r.t.~$\sigma$. 

Now we have two cases: (i)~$R_i$ is an atom, and (ii)~$R_i$ is a 
negated atom, say $\neg C_i$. 
In Case~(i) we have that $\sigma(A)>\sigma(A_1)\geq\sigma(R_i)$ and,
thus, $A\succ \overline R_i$. In Case~(ii) we have that 
$\sigma(A)>\sigma(A_1)\geq\sigma(\neg C_i)$  and, thus,
$\sigma(A)>\sigma(C_i)=\sigma(\overline R_i)$ and 
$\mu(A)>\mu(\overline R_i)$. Hence, 
$A \succ\overline R_i$.
Thus, in both cases $A \succ\overline R_i$.

Since $A \succ \overline R_i$,
 by the inductive hypotheses (I$\mu$) and (ISC), we have that,
for \( i=1,\ldots ,m,\) there exists a $\mu$-consistent proof tree
\(V_{i} \) for $\overline R_i$ and \( P_{k+1} \). By the
$\mu$-consistency of $T$, for \( i=2,\ldots ,r \), there exists a
$\mu$-consistent proof tree \(U_i \) for \(L_{i} \) and \( P_{k+1}
\). By the definition of rule R4, there exists a clause $\eta_p$
among the clauses $\eta_1,\ldots,\eta_s$ derived from $\gamma$, such
that $v(\eta_p)$ is of the form $A \If \overline R_1 \wedge \ldots
\wedge \overline R_m \wedge L_2 \wedge \ldots \wedge L_r$. 
(To see this, recall that by pushing $\neg$ inside and $\vee$ outside, from 
$\neg((C_1\wedge C_2)\vee (D_1\wedge D_2))$ we get 
$(\overline C_1\wedge \overline D_1)\vee (\overline C_1\wedge 
\overline D_2)\vee (\overline C_2\wedge \overline D_1)\vee 
(\overline C_2\wedge \overline D_2)$.)

A proof tree $U$ for $A$ and $ P_{k+1}$ is constructed by using $v(\eta_p)$ 
at the root of $U$ and
the proof trees $V_1,\ldots,V_m,U_2,\ldots,U_r$ for $\overline R_1, \ldots,
\overline R_m,L_2,\ldots,L_r$,  respectively, and $P_{k+1}$.

In order to show that $U$ is $\mu$-consistent
we need to consider two cases. 
In the first case, we assume that $A$ is old or
$\eta$ is $\sigma$-max derived. Thus, in this case, also $\gamma$ is $\sigma$-max derived.
By $\mu$-consistency of $T$, we have $\mu(A)>_{\mathit
lex}\mu(L_1)\oplus\cdots\oplus\mu(L_r)$. By local stratification of~$P_k$ and by
Lemma~\ref{lemma:properties_of_mu}, $\mu(L_1) \geq_{\mathit
lex}\mu(\overline{R}_1)\oplus\cdots\oplus\mu(\overline{R}_m)$. Therefore, by
Lemma~\ref{lemma:properties_of_oplus}~(ii.2), $\mu(A)>_{\mathit
lex}\mu(\overline{R}_1)\oplus\cdots\oplus\mu(\overline{R}_m
)\oplus\mu(L_2)\oplus\cdots\oplus\mu(L_r)$ and $U$ is $\mu$-consistent.

In the second case, $A$ is new and $\eta$ is not $\sigma$-max derived. As a
consequence, also $\gamma$ is not $\sigma$-max derived. By $\mu$-consistency of
$T$ we have $\mu(A)\geq_{\mathit lex}\mu(L_1)\oplus\cdots\oplus\mu(L_r)$.
By local stratification of~$P_k$ and  by Lemma~\ref{lemma:properties_of_mu}, $\mu(L_1) \geq_{\mathit
lex}\mu(\overline{R}_1)\oplus\cdots\oplus\mu(\overline{R}_m)$ and, by
Lemma~\ref{lemma:properties_of_oplus}~(ii.2), $\mu(A)\geq_{\mathit
lex}\mu(\overline{R}_1)\oplus\cdots\oplus\mu(\overline{R}_m
)\oplus\mu(L_2)\oplus\cdots\oplus\mu(L_r)$. Therefore, $U$ is $\mu$-consistent.

\medskip 

\noindent \emph{Case} 2.4 ($P_{k+1}$ is derived from $P_k$ by using
rule R5.) Suppose that the clause $\gamma$ is removed from $P_k$  by
subsumption.
Hence, there exists a clause $\gamma_1$ in
$P_k-\{\gamma\}$ and a valuation $v'$ such that $v'(\gamma_1)$ is of
the form $A \leftarrow$. The clause $\gamma_1$ belongs to $P_{k+1}$
and, therefore, a proof tree $U$ for $A$ and $P_{k+1}$ can be
constructed by using $v'(\gamma_1)$ at the root of $U$. The proof tree $U$
consists of the root $A$ with the single child $\true$. 
Now we prove that the proof tree $U$ is $\mu$-consistent, that is, 
$\mu(A)>_{\mathit lex}\mu(\mathit{true})$. We have to prove that $\mu(A)>_{\mathit lex}\langle0,0\rangle$. We have the following three cases: (a), (b.1), and (b.2).

\noindent 
{\it{Case}}~(a). $A$ is an old atom. In this case we have that 
$\mu(A)>_{\mathit{lex}}\langle0,0\rangle$,
because, as stated in Remark~\ref{rem:measures}, for any old atom $B$, we have that
$\mathit{min}$-$\mathit{weight}(B)\!>\!0$.


\noindent 
{\it{Case}}~(b). $A$ is a new atom. Since $A$ is new, there is a valuation $v'$ and
  a clause $\delta$  in $P_d$ such that
 $v'(\delta)$ is of the form $A \leftarrow G$, for some goal $G$.
 Now, let us consider the following two subcases.
 
\noindent 
{\it{Case}}~(b.1)~$G$ is of the form: $G_L\wedge B\wedge G_R$ and $B$ is an old atom. By (1)~the hypothesis that $T$ is a $\mu$-consistent proof tree for $A$ in $P_k$, 
(2)~the ({\it Soundness}) property, and (3)~Lemma~\ref{lem:muconsistent-prooftree}, 
we have that there 
exists a $\mu$-consistent proof tree $T_d$ for $A$ and $P_d$ where $B$ is a child of $A$.
By $\mu$-consistency of $T_d$, we have that $\mu(A)\geq_{\mathit{lex}}\mu(B)$.
Since 
$\mu(B)=_{\mathit{def}}\!\langle\sigma(B),{\mathit{min}}$-${\mathit{weight}}(B)\rangle$ and, since $B$ is an old atom, by
Remark~\ref{rem:measures}, we have that
${\mathit{min}}$-${\mathit{weight}}(B)\!>\!0$. Thus, we get that 
 $\mu(A)\!>_{\mathit lex}\!\langle0,0\rangle$.
 
\noindent 
{\it{Case}}~(b.2)~$G$ is of the form: $G_L\wedge \neg B\wedge G_R$ and $B$ is an old atom. Since $\delta$ is locally stratified, $\sigma(A)\!>\!\sigma(B)$ and, thus, $\sigma(A)\!>\!0$. Hence, 
$\mu(A)=_{\mathit{def}}\langle\sigma(A),{\mathit{min}}$-${\mathit{weight}}(A)\!-\!1\rangle >_{\mathit lex}\!\langle0,0\rangle$.

%
 This concludes 
the proof tree $U$ is $\mu$-consistent.

\medskip 

\noindent \emph{Case} 2.5 ($P_{k+1}$ is derived from $P_k$ by using
rule R6.) Let us assume that clause \( \eta  \) of the form 
$\tA \leftarrow \tK\vartheta \wedge\tL_{q+1} \wedge\ldots \wedge \tL_r$
is derived by positive folding from a
clause $\gamma\in P_k$ of the form $\tA\leftarrow \tL_1\wedge\ldots 
\wedge \tL_q\wedge\tL_{q+1} \wedge\ldots \wedge \tL_r$  using a clause 
$\delta\in\mbox{\it Defs}_k$ of the form
$\tK\leftarrow \tL'_1\wedge\ldots \wedge \tL'_q$ and
where
$\vartheta$ is a substitution such that, for $i\!=\!1,\ldots,q$, 
$\tL'_i\vartheta\!=\!\tL_i$. 
We have that
$\delta\in P_d$ and the definition of the head predicate of $\delta$ in $ P_d$
consists of clause $\delta$ only. 

Thus, there is a valuation $v$ such that $v(\tA)=A$ and
in the proof tree $T$ for $A$ and $P_k$ the children of $A$ are
the nodes $L_1,\ldots,L_q,L_{q+1},\ldots,L_r$ such that for $i=1,\ldots,q$, $L_i=v(\tL'_i)$
and for $i=q+1,\ldots,r$, $L_i=v(\tL_i)$. By the induction hypothesis (IndHyp) 
there exist proof trees for $v'(\tL'_1),\ldots, v'(\tL'_q)$ and $P_k$,  for some
valuation $v'$ such that, for $i\!=\!1,\ldots,q$, $v'(\tL'_i\vartheta)=v(\tL_i)$. 
Let $K$ be $v'(\tK\vartheta)$. 

Since $\delta\in P_d$ and $M(P_d)\models \delta$, by
Theorem~\ref{th:proof_trees_perfect_model} and
Definition~\ref{def:mu-consistency}, there is a $\mu$-consistent
proof tree for $K$ and $P_d$. 
By induction hypothesis, the ({\em
Completeness}) property holds for $k$ and, thus, we have that
there exists a $\mu$-consistent proof tree for $K$ and $P_k$. By the
hypothesis that the transformation sequence
$P_0,\ldots,P_d,\ldots,P_n$ is admissible and by Condition~(2) of
Definition~\ref{def:adm-transformation}, either $A$
 is old or $\gamma$ is $\sigma$-max derived. Thus, by the
$\mu$-consistency of the proof tree $T$, we have that $\mu(A)>_{\mathit lex}
\mu(L_1)\oplus\cdots\oplus\mu(L_q)\oplus\mu(L_{q+1})
\oplus\cdots\oplus\mu(L_r)$. 

Since $\delta$ is a clause in $\mbox{\it Defs}_k$, by 
Lemma~\ref{lemma:mu-for-a-definition}  we have that $\mu(K) =
\mu(L_1)\oplus\cdots\oplus\mu(L_q)$ 
and, thus,
$\mu(A)>_{\mathit lex} \mu(K)\oplus\mu(L_{q+1})
\oplus\cdots\oplus\mu(L_r)$. 

Moreover, by
Lemma~\ref{lemma:properties_of_oplus}~(ii.5), $\mu(A)>_{\mathit lex}
\mu(K)$. Thus, $A\succ K$ and, by the inductive hypothesis~(I$\mu$), 
there exists a $\mu$-consistent proof tree $U_{K}$ for $K$
and $P_{k+1}$. By the $\mu$-consistency of~$T$ and
Lemma~\ref{lemma:wfo}, for \( i=q+1,\ldots,r \), we have $A \succ
L_i$. Hence, by the inductive hypotheses~(I$\mu$) and (ISC), 
for $i\!=q+1,\ldots,r$, there
exists a $\mu$-consistent proof tree~\(U_{i} \) for \( L_{i} \) and
\( P_{k+1} \). A proof tree $U$ for $A$ and $ P_{k+1}$ is
constructed by using $v'(\eta)$  at the root of~$U$ and the proof trees
$U_{K},U_{q+1},\ldots,U_r$ for $K,L_{q+1},\ldots,L_r$, respectively, and 
$P_{k+1}$. The proof tree~$U$ is
$\mu$-consistent because, as we have shown above,
$\mu(A)>_{\mathit lex}
\mu(K)\oplus\mu(L_{q+1}) \oplus\cdots\oplus\mu(L_r)$.

\medskip 

\noindent \emph{Case} 2.6 ($P_{k+1}$ is derived from $P_k$ by using
rule R7.) Suppose that we fold $\gamma$ using clauses
$\delta_1,\ldots,\delta_q$, belonging to (a variant of)
$\Mathit{Defs}_k$, and we derive a clause $\eta$ in $P_{k+1}$.
Without loss of generality, by the definition of rule R7 and the
commutativity of $\wedge$, we may assume that (i)~$v(\gamma)$ is of
the form $A\If L_1\wedge \ldots \wedge L_q \wedge L_{q+1}\wedge
\ldots \wedge L_r$, (ii)~for $i=1,\ldots,q$, $v(\delta_i)$~is of the
form $K\If M_i$, where $M_i=A_i$, if $L_i=\neg A_i$, and $M_i=\neg
A_i$, if $L_i=A_i$, and (iii)~$v(\eta)$ is of the form $A\If \neg K
\wedge L_{q+1}\wedge \ldots \wedge L_r$. By the inductive
hypothesis, the ({\em Soundness}) and ({\em Completeness})
properties hold for $k$ and, therefore, for $i=1,\ldots,q,$ there is
no proof tree for $M_i$ and $P_d$. Since $M(P_d)\models K \Iff M_1
\vee \ldots \vee M_q$, there is no proof tree for $K$ and $P_d$. By
the inductive hypothesis, the ({\em Soundness}) property holds for
$k$ and, thus, we have that there is no proof tree for $K$ and~$P_k$. 
By the hypothesis that the transformation sequence
$P_0,\ldots,P_d,\ldots,P_n$ is admissible and by Condition (3) of
Definition~\ref{def:adm-transformation}, $\sigma(A)> \sigma(K)$.
Hence, by the inductive hypothesis (IS), there is no proof tree for
$K$ and $P_{k+1}$, that is, there is a proof tree $U_{\neg K}$ for
$\neg K$ and $P_{k+1}$. By the $\mu$-consistency of $T$ and
Lemma~\ref{lemma:wfo}, for \( i=q\!+\!1,\ldots ,r \), we have $A \succ
L_i$. Hence, by the inductive hypotheses (I$\mu$) and (ISC), there
exists a $\mu$-consistent proof tree \(U_{i} \) for \( L_{i} \) and
\( P_{k+1} \). A proof tree $U$ for $A$ and $ P_{k+1}$ is
constructed by using $v(\eta)$  at the root of $U$ and the proof trees $U_{\neg
K},U_{q+1},\ldots,U_r$ for $\neg K,L_{q+1},\ldots,L_r$, respectively,
and $P_{k+1}$.

In order to show that $U$ is $\mu$-consistent we need to consider two cases. 

In
the first case, we assume that $A$ is old or $\gamma$ is $\sigma$-max derived. Thus, in this case, also $\eta$ is $\sigma$-max derived. By $\mu$-consistency of $T$, we
have $\mu(A)>_{\mathit lex}\mu(L_1)\oplus\cdots\oplus\mu(L_q)\oplus
\mu(L_{q+1})\oplus\cdots \oplus\mu(L_r)$. By
Lemma~\ref{lemma:properties_of_oplus}~(ii.5), we have that $\mu(A)>_{\mathit
lex}\mu(L_{q+1})\oplus\cdots\oplus\mu(L_r)$. Since the transformation sequence $P_0,\ldots,P_n$ is admissible, clause $\eta$ is locally
stratified and, thus, $\sigma(A)> \sigma(K)$. Hence,
$\pi_1(\mu(A))=$ \{by definition of $\mu\}=\sigma(A)> \sigma(K)=$
 \{by definition of $\mu\}= \pi_1(\mu(\neg K))$. Therefore, by 
Lemma~\ref{lemma:properties_of_oplus}~(ii.4), we have that: $\mu(A)>_{\mathit lex}\mu(\neg K)\oplus\mu(L_{q+1})\oplus\cdots\oplus\mu(L_r)$. Thus, $U$ is 
\mbox{$\mu$-consistent.}

In the second case, $A$ is new and $\gamma$ is not $\sigma$-max derived. As a
consequence, also $\eta$ is not $\sigma$-max derived. By $\mu$-consistency of
$T$ we have $\mu(A)\geq_{\mathit lex}\mu(L_1)\oplus\cdots\oplus\mu(L_q)\oplus
\mu(L_{q+1})\oplus\cdots\oplus\mu(L_r)$. And, by
Lemma~\ref{lemma:properties_of_oplus}~(ii.5), $\mu(A)\geq_{\mathit
lex}\mu(L_{q+1})\oplus\cdots\oplus\mu(L_r)$. Since $\pi_1(\mu(A))>\pi_1(\mu(\neg
K))$ (see the first case), 
by Lemma~\ref{lemma:properties_of_oplus}~(ii.4), we have that:
$\mu(A)\geq_{\mathit lex}\mu(\neg
K)\oplus\mu(L_{q+1})\oplus\cdots\oplus\mu(L_r)$. Thus, $U$ is
$\mu$-consistent. This completes the proof.
\end{proof}

\medskip
The correctness of admissible transformation sequences, that is,
Theorem~\ref{th:corr_of_rules} of
Section~\ref{sec:corr_of_rules}, follows immediately from
Theorem~\ref{th:proof_trees_perfect_model} and
Proposition~\ref{prop:preserv-mu-consistency} because: (i)~$P_d=
P_0\cup \mathit{Defs_n}$, and (ii)~a $\mu$-consistent proof tree is a
proof tree.

\section*{Acknowledgements}
We thank Hirohisa Seki for stimulating conversations on
the topics of this paper. We also thank John Gallagher for
his comments and 
the anonymous referees of ICLP 2010 for their constructive criticism.

We also acknowledge the financial support of: (i)~PRIN 2008 
(Progetto di Ricerca di Interesse Nazionale)
Project no.~9M932N-003, and (ii)~the GNCS Group of \mbox{INdAM}
(Istituto Nazionale di Alta Matematica) under the grant
`Contributo Progetto 2009'.

\end{document}